\pgfplotsset{compat=1.11,
    /pgfplots/ybar legend/.style={
    /pgfplots/legend image code/.code={%
       \draw[##1,/tikz/.cd,yshift=-0.25em]
        (0cm,0cm) rectangle (3pt,0.8em);},
   },
}
\newtheorem*{theorem*}{Theorem}
\newtheorem{definition}{Definition}[section]
\newtheorem{proposition}[definition]{Proposition}
\newtheorem{lemma}[definition]{Lemma}
\newtheorem{theorem}[definition]{Theorem}
\newtheorem{remark}[definition]{Remark}
\newcommand{\A}{\mathcal{A}}
\newcommand{\B}{\mathcal{B}}
\newcommand{\D}{\mathcal{D}}
\newcommand{\F}{\mathcal{F}}
\newcommand{\G}{\mathcal{P}}
\newcommand{\M}{\mathcal{M}}
\newcommand{\Mt}{\widetilde{\M}}
\newcommand{\N}{\mathcal{N}}
\newcommand{\T}{\mathcal{T}}
\newcommand{\U}{\mathcal{U}}
\newcommand{\hyref}[2]{\hyperref[#2]{#1~\ref*{#2}}}
\newcommand{\aref}[2]{\hyperref[#2]{#1}}
\newcommand{\fref}[1]{\hyref{Figure}{#1}}
\newcommand{\lemref}[1]{\hyref{Lemma}{#1}}
\newcommand{\lmax}{L_{max}}
\newcommand{\lmin}{L_{min}}
\newcommand{\br}[1]{\left(#1\right)}
\newcommand{\plog}[1]{\text{polylog }#1}
\newcommand{\bigO}[1]{\mathcal{O}\br{#1}}
\newcommand{\bigOmega}[1]{\Omega\br{#1}}
\newcommand{\bigTheta}[1]{\Theta\br{#1}}
\newcommand{\smallO}[1]{o\br{#1}}
\newcommand{\bigOtilde}[1]{\widetilde{\mathcal{O}}\br{#1}}
\newcommand{\Erdos}{Erd\H{o}s}
\newcommand{\Renyi}{R\'{e}nyi}
\renewcommand{\epsilon}{\varepsilon}
\title{Weighted Matching in a Poly-Streaming Model\thanks{A preliminary version of this paper appeared in the \emph{European Symposium on Algorithms}, \textit{ESA~2025}.}
}
\author{
  Ahammed Ullah\textsuperscript{\dag} \quad
  S M Ferdous\textsuperscript{\ddag} \quad
  Alex Pothen\textsuperscript{\dag}
}
\date{}
\begin{document}

\maketitle
\renewcommand{\thefootnote}{\fnsymbol{footnote}}
\footnotetext[2]{Purdue University, West Lafayette, IN, USA}
\footnotetext[3]{Pacific Northwest National Laboratory, Richland, WA, USA}

\renewcommand{\thefootnote}{\arabic{footnote}}
\setcounter{footnote}{0}

\begin{abstract}
We introduce the \emph{poly-streaming model}, a generalization of streaming models of computation in which $k$ processors process $k$ data streams containing a total of $N$ items. 
The algorithm is allowed $\bigO{f(k)\cdot M_1}$ space, where $M_1$ is either $\smallO{N}$ or the space bound for a sequential streaming algorithm.
Processors may communicate as needed. 
Algorithms are assessed by the number of passes, per-item processing time, total runtime, space usage, communication cost, and solution quality.

We design a \emph{single-pass} algorithm in this model for approximating the \emph{maximum weight matching} (MWM) problem. Given $k$ edge streams and a parameter $\epsilon > 0$, the algorithm computes a $\br{2+\epsilon}$-approximate MWM.
We analyze its performance in a shared-memory parallel setting: for any constant $\epsilon > 0$, it runs in time $\bigOtilde{\lmax+n}$, where $n$ is the number of vertices and $\lmax$ is the maximum stream length.
It supports $\bigO{1}$ per-edge processing time using $\bigOtilde{k\cdot n}$ space.
We further generalize the design to hierarchical architectures, in which $k$ processors are partitioned into $r$ groups, each with its own shared local memory. The total intergroup communication is $\bigOtilde{r \cdot n}$ bits, while all other performance guarantees are preserved.

We evaluate the algorithm on a shared-memory system using graphs with trillions of edges. It achieves substantial speedups as $k$ increases and produces matchings with weights significantly exceeding the theoretical guarantee. On our largest test graph, it reduces runtime by nearly two orders of magnitude and memory usage by five orders of magnitude compared to an offline algorithm.
\end{abstract}

\section{Introduction}
\label{sec:intro}

Data-intensive computations arise in data science, machine learning, and science and engineering disciplines.
These datasets are often massive, generated dynamically, and, when stored, kept in distributed formats on disks, making them amenable to processing as multiple data streams. 
The modularity of these datasets can be exploited by streaming algorithms designed for tightly-coupled shared-memory and distributed-memory multiprocessors to efficiently solve large problem instances that offline algorithms cannot handle due to their high memory requirements.
However, the design of parallel algorithms that process multiple data streams concurrently has not yet received much attention.

Current multicore shared-memory processors consist of up to a few hundred cores, organized hierarchically to share caches and memory controllers. These cores compute in parallel to achieve speedups over serial execution. 
With multiple memory controllers, I/O operations can also proceed in parallel, and this feature can be used to process multiple data streams concurrently.
These I/O capabilities and the limitations of offline algorithms motivate a model of computation, illustrated in \fref{fig:model_shared} and discussed next.

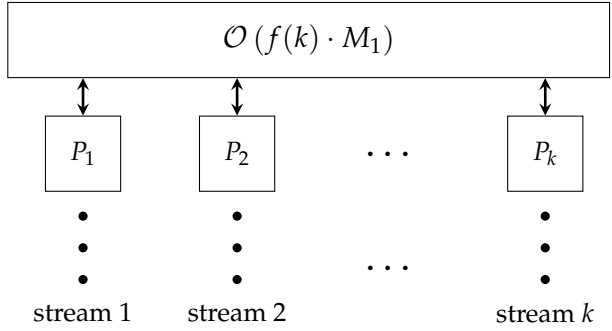
\begin{figure}
\centering
\begin{tikzpicture}[>=stealth, node distance=2cm, every node/.style={rectangle, draw, minimum width=1.0cm, minimum height=1.0cm}]

    \node (mem) [minimum width=8cm] {\scalebox{1.1}{$\bigO{f(k)\cdot M_1}$}};

    \node (c1) [draw=none, below=-0.25 cm of mem, xshift=-3.0cm, minimum width=0cm, minimum height=0cm] {};
    \node (c2) [draw=none, right=1.8cm of c1, minimum width=0cm, minimum height=0cm] {};
    \node (c3) [draw=none, right=1.8cm of c2, minimum width=0cm, minimum height=0cm] {};
    \node (c4) [draw=none, right=1.8cm of c3, minimum width=0cm, minimum height=0cm] {};
    
    \node (p1) [below=0.5 cm of c1] {\scalebox{1.0}{$P_1$}};
    \node (p2) [below=0.5 cm of c2] {\scalebox{1.0}{$P_2$}};
    \node (dots) [draw=none, below=0.5 cm of c3] {\scalebox{1.5}{$\dots$}};
    \node (pk) [below=0.5 cm of c4] {\scalebox{1.0}{$P_k$}};

    \node (s11) [draw=none, below=-0.5 cm of p1, minimum width=0.5cm, minimum height=1cm] {\scalebox{3.0}{$\vdots$}};
    \node (s12) [draw=none, below=-0.5 cm of p2, minimum width=0.5cm, minimum height=1cm] {\scalebox{3.0}{$\vdots$}};
    \node (dots) [draw=none, below=0.5 cm of dots] {\scalebox{1.5}{$\dots$}};
    \node (s13) [draw=none, below=-0.5 cm of pk, minimum width=0cm, minimum height=1cm] {\scalebox{3.0}{$\vdots$}};

    \node (s31) [draw=none, below=0.0 cm of s11, minimum width=0.5cm, minimum height=0.5cm] {\scalebox{1.0}{stream 1}};
    \node (s32) [draw=none, below=0.0 cm of s12, minimum width=0.5cm, minimum height=0.5cm] {\scalebox{1.0}{stream 2}};
    \node (s33) [draw=none, below=0.0 cm of s13, minimum width=0.5cm, minimum height=0.5cm] {\scalebox{1.0}{stream $k$}};

    \draw[<->, line width=1.0pt] (p1) -- (c1);
    \draw[<->, line width=1.0pt] (p2) -- (c2);
    \draw[<->, line width=1.0pt] (pk) -- (c4);

\end{tikzpicture}
\caption{A schematic diagram of the \emph{poly-streaming model} for shared-memory parallel computers. Processors $\{P_{\ell}\}_{\ell \in [k]}$ have access to $\bigO{f(k)\cdot M_1}$ memory collectively, depicted with the rectangle connected to the processors.}
\label{fig:model_shared}
\end{figure}

The streaming model of computation allows $\smallO{N}$ space for a data stream of size $N$~\cite{alon1996space, henzinger1998computing}. For graphs, the \emph{semi-streaming model} permits $\bigO{n \cdot \plog{n}}$ space for a graph with $n$ vertices and an edge stream of arbitrary length~\cite{feigenbaum2005graph}. Building on these space-constrained models, we introduce the \emph{poly-streaming model}. The key aspects of our model are as follows.

We consider $k$ data streams that collectively contain $N$ items. An algorithm has access to $k$ (abstract) processors and is allowed $\bigO{f(k)\cdot M_1}$ total space, where $M_1$ is either $\smallO{N}$ or the space permitted to a single-stream algorithm. In each pass, each stream is assigned to one of the processors, and each processor independently reads one item at a time from its stream and processes it. Processors may communicate as needed, either via shared or remote memory access. Algorithms are assessed on several metrics: space complexity, number of passes, per-item processing time, total runtime, communication cost, and solution quality.

In the poly-streaming model, we address the problem of approximating a \emph{maximum weight matching} (MWM) in an edge-weighted graph, where the goal is to find a set of vertex-disjoint edges with maximum total weight. We design an algorithm for approximating an MWM when the graph is presented as multiple edge streams. Our design builds on the algorithm of~\cite{paz20182+} and adds support for handling multiple streams concurrently. We also generalize the design to NUMA (non-uniform memory access) multiprocessor architectures.

We summarize our contributions to the MWM problem as follows. Let $\lmax$ and $\lmin$ denote the maximum and minimum lengths of the input streams, respectively, and let $n$ denote the number of vertices in a graph $G$. For any realization of the CREW PRAM model (such as in \fref{fig:model_shared}), we have the following result.

\begin{theorem}
\label{thm_psmwm}
For any constant $\epsilon > 0$, there exists a \emph{single-pass} poly-streaming algorithm for the maximum weight matching problem that achieves a $\br{2+\epsilon}$-approximation. It admits a CREW PRAM implementation with runtime $\bigOtilde{\lmax + n}$.\footnote{$\bigOtilde{\cdot}$ hides polylogarithmic factors.} 

If $\lmin = \Omega\br{n}$, the algorithm achieves $\bigO{\log n}$ amortized per-edge processing time using $\bigOtilde{k + n}$ space. For arbitrarily balanced streams, it uses either: 
\begin{itemize}
    \item $\bigOtilde{k+n}$ space and $\bigOtilde{n}$ per-edge processing time, or
    \item $\bigOtilde{k\cdot n}$ space and $\bigO{1}$ per-edge processing time.
\end{itemize}
\end{theorem}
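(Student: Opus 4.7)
The plan is to lift the single-stream $(2+\epsilon)$-approximation of Paz and Schwartzman to $k$ concurrent streams. Recall that their algorithm maintains a vertex potential $\phi(v)$ equal to the sum of residuals of currently-accepted edges touching $v$, and admits an arriving edge $e=(u,v)$ onto a stack with residual $w(e) - (\phi(u)+\phi(v))$ whenever $w(e) \geq (1+\epsilon)(\phi(u)+\phi(v))$; a backward sweep of the stack at the end extracts a matching of weight within $(2+\epsilon)$ of the optimum, and the stack has length $\bigOtilde{n}$. My plan is to place the $\phi$ array in shared memory, give each of the $k$ processors its own local stack of admitted edges, and perform every update to $\phi(u),\phi(v)$ through an atomic read-modify-write; in the CREW model these are simulated by $O(\log k)$-depth arbiter trees and cost only polylog factors.

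For the approximation analysis, the key observation is that $\phi$ is monotone under any interleaving of atomic updates, so an execution of the parallel algorithm can be mapped to a serial Paz--Schwartzman execution on some permutation of the combined stream: each atomic admission linearizes to a single point in time, and the dual-fitting charges transfer unchanged. After all streams are exhausted, the local stacks are merged and swept backwards in $\bigOtilde{n}$ time using parallel list-ranking, since the total number of admitted edges is $\bigOtilde{n}$. Combined with the $\bigOtilde{\lmax}$ parallel time to consume the streams, this yields the runtime bound $\bigOtilde{\lmax + n}$.

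The three complexity trade-offs then fall out of how stacks are organized. Separate per-processor stacks of length $\bigOtilde{n}$ consume $\bigOtilde{k\cdot n}$ space but admit $\bigO{1}$ per-edge work, while a single shared stack lowers space to $\bigOtilde{k+n}$ at the cost of contention that can raise per-edge time to $\bigOtilde{n}$ in the worst case. Under $\lmin = \bigOmega{n}$ the $\bigOmega{kn}$ edges absorb the $\bigOtilde{n}$ post-processing, so the total work amortizes to $\bigO{\log n}$ per edge while keeping space at $\bigOtilde{k+n}$. For the NUMA generalization, per-group shared stacks aggregated by $r$ group-representatives give the $\bigOtilde{r \cdot n}$ intergroup communication bound quoted in the abstract.

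The main obstacle will be proving correctness under genuine concurrency. Two processors may read the same $\phi(v)$ simultaneously and both admit their edges, so the sum of credited residuals at $v$ could exceed what a strictly serial run would have allowed. I would address this by making each admission atomic (a test-and-commit on the pair $(\phi(u),\phi(v))$) so that any interleaving corresponds to a legal serial execution, after which the Paz--Schwartzman analysis applies verbatim; failing that, folding a small additive slack into $\epsilon$ absorbs any residual over-counting. A secondary concern is that dependence chains among stacked edges could lengthen the parallel backward sweep, which I expect to handle via the standard parallel LIFO-greedy scheduling framework to keep the sweep depth polylogarithmic.
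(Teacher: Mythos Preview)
Your high-level plan---share the potentials $\phi$, give each processor a local stack, and carry over the Paz--Schwartzman / Ghaffari--Wajc dual-fitting analysis---is exactly what the paper does. But your account of where the three space/time regimes come from is wrong, and as written it would not deliver the claimed bounds.

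The total number of admitted edges across \emph{all} stacks is $\bigOtilde{n}$ (at most $O(\log_{1+\epsilon} W)$ per vertex), so $k$ per-processor stacks cost only $\bigOtilde{k+n}$, not $\bigOtilde{k\cdot n}$. Both of the paper's regimes use per-processor stacks; the shared-versus-separate stack distinction you draw is not the mechanism. Conversely, per-processor stacks do \emph{not} by themselves yield $\bigO{1}$ per-edge time, because the bottleneck is contention on the shared potentials: admitting $e=(u,v)$ requires an atomic read-and-update of the \emph{pair} $(\phi(u),\phi(v))$, and a processor may have to retry until it either wins both locks or the edge becomes ineligible. Since there are $\bigOtilde{n}$ successful admissions globally, a single processor can be blocked for $\bigOtilde{n}$ supersteps on one edge---that is the real source of the $\bigOtilde{n}$ per-edge bound in the $\bigOtilde{k+n}$-space regime. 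The paper's $\bigO{1}$-per-edge regime instead has a processor \emph{give up} after $O(1)$ lock attempts and stash the still-eligible edge in a local deferral buffer $R^\ell$, to be replayed in post-processing; a counting argument over the contention graph shows $\sum_\ell |R^\ell| = \bigOtilde{k\cdot n}$, and \emph{that} is where the larger space bound comes from.

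Two smaller corrections. The $\bigO{\log n}$ amortized bound under $\lmin=\Omega(n)$ is not about absorbing post-processing: it is that each processor's total lock-retry overhead during streaming is at most $\bigOtilde{n}$, so dividing by its $\Omega(n)$ edges gives $\bigO{\log n}$ per edge. And your linearization sketch needs the two-vertex update to be genuinely atomic; the paper does this with explicit per-vertex locks acquired in lexicographic order, proves deadlock-freedom via a DAG argument on the per-superstep contention graph, and handles post-processing not by merging and list-ranking but by defining an edge as \emph{tight} when all its later-admitted neighbors have been popped, proving tight edges are vertex-disjoint, and letting processors pop tight top-of-stack edges asynchronously.
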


In NUMA architectures, memory access costs depend on a processor's proximity to the target memory. We generalize the algorithm in \hyref{Theorem}{thm_psmwm} to account for these cost differences. In particular, we show that when $k$ processors are partitioned into $r$ groups, each with its own shared local memory, the total number of global memory accesses across all groups is $\bigOtilde{r \cdot n}$. This generalization preserves all other performance guarantees from \hyref{Theorem}{thm_psmwm}, except that the $\bigOtilde{k + n}$ space bound becomes $\bigOtilde{k + r \cdot n}$. 
These results are formalized in \hyref{Theorem}{thm_psmwm_ld} 
in \hyref{Section}{sec:numa}.
This design gives a memory-efficient algorithm for the NUMA shared memory multiprocessors, on which we report empirical results.

We have evaluated our algorithm on a NUMA machine using graphs with billions to trillions of edges. For most of these graphs, our algorithm uses space that is orders of magnitude smaller than that required by offline algorithms. For example, storing the largest graph in our evaluation would require more than $91{,}600\,\mathrm{GB}$ ($\approx 90\,\mathrm{TB}$), whereas our algorithm used less than $1\,\mathrm{GB}$.
Offline matching algorithms typically require even more memory to accommodate their auxiliary data structures.

We employ approximate dual variables that correspond to a linear programming relaxation of MWM to obtain \emph{a posteriori} upper bounds on the weights of optimal matchings. These bounds allow us to compare the weight of a matching produced by our algorithm with the optimal weight. Thus, we show that our algorithm produces matchings whose weights significantly exceed the approximation guarantee.

For $k=128$, our algorithm achieves runtime speedups of $16$--$83$ across all graphs in our evaluation, on a NUMA machine with only $8$ memory controllers. This is significant scaling for a poly-streaming algorithm, given that $8$ memory controllers are not sufficient to serve the concurrent and random access requests of $128$ processors without delays. 
Nevertheless, these speedups demonstrate the effectiveness of our design, which accounts for a processor's proximity to the target memory.
A metric less influenced by memory latency suggests that the algorithm would achieve even better speedups on architectures with more efficient memory access.

Note that \hyref{Theorem}{thm_psmwm} and \hyref{Theorem}{thm_psmwm_ld} both guarantee $\bigOtilde{\lmax + n}$ runtime. This is optimal up to polylogarithmic factors when $\lmax = \bigOmega{n}$. However, by using $\bigOtilde{k \cdot n}$ space and $\bigO{1}$ per-edge processing time, we can achieve a runtime of $\bigOtilde{\lmax + n/k}$, which becomes polylogarithmic for sufficiently large $k$ (see \hyref{Appendix}{subsec:plog_time}).

\paragraph{Organization.} \hyref{Section}{sec:model} describes the details of our model. \hyref{Section}{sec:mwm} presents the design and analyses of our algorithm in \hyref{Theorem}{thm_psmwm}. In \hyref{Section}{sec:numa}, we extend the design to NUMA architectures. \hyref{Section}{sec:evals} summarizes the evaluation results. We conclude in \hyref{Section}{sec:conc} with a discussion of future research directions.
\section{The Poly-Streaming Model}
\label{sec:model}

This section elaborates on our model of computation and discusses its novelty and significance relative to existing models (\hyref{Section}{subsec:ps_sig}).

In the poly-streaming model, there are $k$ data streams containing a total of $N$ items. 
An algorithm may use $k$ processors and up to $\bigO{f(k)\cdot M_1}$ total space, where $M_1$ is either $\smallO{N}$ or the space permitted to a single-stream algorithm (as in the semi-streaming model). In each pass, each stream is assigned to a processor, and processors independently read items from their respective streams. Processing these items may require coordination. An algorithm may use the processors to perform any necessary preprocessing and post-processing. Processors may communicate as needed during preprocessing, streaming, or post-processing, via shared or remote memory. 

Note that the $\bigO{f(k)\cdot M_1}$ space constraint subsumes the $\bigO{f(k) + M_1}$ constraint. We now describe the model's components in more detail.

\paragraph{Processors.}
Each processor is an abstract unit of computation that can be emulated by a physical thread on a shared-memory or tightly coupled distributed-memory machine. \fref{fig:model_shared} illustrates a realization in which all processors directly access a shared workspace, corresponding to a shared-memory implementation.
Multiple such realizations can be connected via high-speed networks to implement the model on a distributed-memory machine.

\paragraph{Data Streams.}
The model assumes an arbitrary distribution of data across streams. Algorithms must handle arbitrary inputs with imbalanced partitioning and arbitrary item orderings. A stream may be assigned to a processor multiple times, each assignment constituting a pass. Within a pass, streams are read asynchronously, though processing individual items may require synchronization.
The parameter $k$ need not equal the number of physical input streams: physical streams may be merged or split into $k$ logical streams, which are then mapped to processors.

\paragraph{Space.} 
The bound $\bigO{f(k)\cdot M_1}$ reflects the observation that, in practice, total memory typically scales with the number of processors.
In most cases, $f(k)$ is expected to be linear in $k$, but 
superlinear growth may still be feasible, particularly for algorithms that use $\bigO{f(k)+M_1}$ space.
This formulation supports the analysis of a broad range of design choices and their associated trade-offs. It also enables a bottom-up design approach, where algorithms developed for shared-memory machines can be extended to tightly coupled distributed-memory machines.

\paragraph{Per-Item Processing Time.}
A key consideration in the poly-streaming setting is whether an algorithm can handle an influx of items arriving in quick succession, as may occur when $k^{\prime} \gg k$ physical streams are merged into $k$ logical streams.
If the algorithm cannot handle such influxes within bounded space, its correctness may be compromised.
For suitable choices of $f(\cdot)$, the $f(k)$-fold space may suffice to design bounded-space algorithms for many such scenarios.

For some design choices, the worst-case per-item processing time may not be informative. 
In such cases, under realistic assumptions, amortized or average per-item processing time may better reflect actual performance.
In particular, amortizing over the number of items per stream, rather than over the total input, can yield a more accurate estimate of this cost.

\paragraph{Runtime.} 
The runtime refers to the total time spent on preprocessing, streaming, and post-processing across all passes. It includes delays caused by contention when accessing shared or remote memory. The cost of remote memory access is assumed to be proportional to the level of contention at the target location.

The runtime of an algorithm should, in general, be dominated by a function of the maximum stream length, denoted $\lmax$. It may also depend on other parameters, such as $M_1$, which remains non-dominating for $M_1 = \bigO{\lmax}$. 
Under worst-case data distribution, $\lmax = \bigO{N}$. 
As stream lengths become more balanced, that is, as $\lmax$ approaches its lower bound $\bigTheta{N/k}$, the runtime should scale accordingly.
In such balanced settings, an algorithm may leverage the $f(k)$-fold space to match the runtime of scalable offline algorithms, for example, achieving polylogarithmic runtime for sufficiently large $k$.

\paragraph{Solution Quality.}
Poly-streaming algorithms are generally expected to admit provable bounds on solution quality, such as approximation ratios. These may be complemented by empirical performance bounds, such as a posteriori guarantees based on upper or lower bounds on the optimal. Such guarantees are particularly important, since streaming algorithms are often provably unable to compute exact solutions within a few passes. The space constraint may also facilitate the exploration of trade-offs between space and solution quality.

\paragraph{Number of Passes.}
A central goal in this model is to design single-pass algorithms. Many initial designs may require multiple passes, with single-pass algorithms emerging only after substantial algorithmic development. In some cases, multiple passes may be provably necessary to achieve objectives such as stronger approximation guarantees under tighter space bounds. Thus, the number of passes serves as a fundamental measure of algorithmic efficiency.

\paragraph{Communication.} 
The communication cost of an algorithm is defined as the total number of remote memory accesses. This abstraction excludes interconnection latency and other architecture-specific delays, as is standard in theoretical models to simplify algorithm design.

\subsection{Novelty and Significance}
\label{subsec:ps_sig}

For descriptions of existing models referenced here, see \hyref{Appendix}{sec:related_models}.

\paragraph{Parallel Computation.}
The poly-streaming model targets areas of computation beyond the reach of traditional parallel models, such as the work-depth model. 
In terms of input scale, poly-streaming algorithms are designed for datasets that offline parallel algorithms cannot handle due to their impractical memory requirements.

Another key distinction is that offline parallel algorithms assume random access to the entire input. 
In contrast, a central motivation for streaming models is to minimize expensive random accesses to massive, persistent datasets. The goal is to replace many random accesses with a small number of sequential passes, which are typically more efficient in practice.

Modern parallel file systems support concurrent, high-throughput access to data by allowing multiple simultaneous connections.
By leveraging the parallel I/O capabilities of modern shared-memory machines, poly-streaming algorithms can exploit these systems to efficiently process massive datasets while avoiding costly random accesses.

\paragraph{Distributed Computation.}
The poly-streaming model supports asynchronous communication protocols, in contrast to models that count synchronous communication rounds, such as the MPC model and the distributed streaming model. In tightly coupled shared- and distributed-memory multiprocessors, synchronous coordination is often unnecessary for designing communication\allowbreak-efficient algorithms, particularly when architectures support remote memory access. In systems based on message passing, such access can be emulated by assigning processors to mediate access to shared locations via messages.

\hyref{Appendix}{subsec:dist} sketches the design of a distributed algorithm based on asynchronous communication. This algorithm achieves optimal communication cost (up to polylogarithmic factors), supports streaming computation, and dominates comparable MPC algorithms across several metrics. Moreover, it is \emph{single-pass}, which is unlikely to be achievable under the synchronous communication protocols of existing distributed streaming models.

\paragraph{Streaming Computation.}
Traditional offline parallel models provide frameworks for optimizing time in isolation, while sequential streaming models, such as those described in \hyref{Appendix}{sec:related_models}, focus on optimizing space in isolation. The poly-streaming model offers a unified framework for optimizing both time and space jointly. Its support for asynchronous communication protocols enables the design of parallel algorithms that are not permitted in existing models, such as the distributed streaming model. \hyref{Section}{sec:mwm} and \hyref{Section}{sec:numa} present examples of such algorithms.

\paragraph{Analyzing Trade-offs.} 
A central theme in streaming literature is that space constraints often conflict with other performance metrics, such as solution quality, number of passes, and per-item processing time.
Analyzing the trade-offs between space and these metrics remains an active area of research; see \cite{assadi2024simple, feigenbaum2009graph, assadi2019coresets} for examples. Moreover, processing multiple streams concurrently may require trade-offs that do not arise in the single-stream setting; see \hyref{Section}{sec:mwm}, \hyref{Section}{sec:numa}, and \hyref{Appendix}{subsec:mwm_ds} for examples of time--space trade-offs. The space constraint in the poly-streaming model provides a unified framework for analyzing such trade-offs

\paragraph{Hierarchical Design.}
The space constraint in the poly-streaming model enables a bottom-up design approach, where algorithms developed for shared-memory systems can be extended to tightly coupled distributed\allowbreak-memory systems.
This requires algorithm designers to account for memory hierarchy in order to manage and quantify communication costs of the resulting algorithms.
\hyref{Section}{subsec:mwm_ps}, \hyref{Section}{sec:numa}, and \hyref{Appendix}{subsec:dist} collectively illustrate such a hierarchical design process.

\paragraph{Practical Relevance.}
Modern computing environments are inherently multicore, with total memory typically scaling with the number of cores.
Yet, such environments often fail to meet the space requirements of offline parallel algorithms for large problem instances. 
Conversely, sequential streaming algorithms underutilize both cores and memory, as they are not designed to exploit multicore architectures. 
These limitations warrant new paradigms of computation, as directly addressed by the poly-streaming model.
\section{Algorithms for Uniform Memory Access Cost}
\label{sec:mwm}

In this section, we present the design and analyses of our algorithm in \hyref{Theorem}{thm_psmwm} that assumes a uniform memory access cost.

\subsection{Preliminaries}
\label{subsec:prelim}

For a graph $G=(V,E)$, let $n:=|V|$ and  $m:=|E|$ denote the number of vertices and edges, respectively. We denote an edge $e:=\{u,v\}$ by the unordered pair of its endpoints. Let $\N\br{e}$ be the set of edges in $E$ that share an endpoint with edge $e$. For a weighted graph, let $w_e$ denote the weight of edge $e$, and for any subset $A \subseteq E$, define $w(A):=\sum_{e \in A} w_e$. For $\ell \in [k]$, let $E^{\ell}$ be the set of edges received in the $\ell$th stream. Define $\lmax :=\max_{\ell \in [k]} |E^{\ell}|$ and $\lmin :=\min_{\ell \in [k]} |E^{\ell}|$.

A \emph{matching} $\M \subseteq E$ is a set of edges that do not share endpoints. A \emph{maximum weight matching} (MWM) $\M^*$ is a matching with maximum total weight; that is, $w\br{\M^{*}} \geq w\br{\M}$ for all matchings $\M \subseteq E$.

A $\rho$-approximation algorithm computes a solution whose value is within a factor $\rho$ of the \emph{optimal}. The factor $\rho$ is called the (worst-case) \emph{approximation ratio}. We assume $\rho \geq 1$ for both maximization and minimization problems. Thus, for maximization, a $\rho$-approximation guarantees a solution whose value is at least $\frac{1}{\rho}$ times the optimal.

\begin{figure}[h]
    \centering
    \begin{minipage}{0.49\textwidth}  
        \begin{mdframed}[linewidth=0.5pt, roundcorner=7pt, backgroundcolor=gray!5, frametitle={\underline{Primal LP}}, frametitlebelowskip=-4pt]
            \begin{equation*}
                \begin{array}{lll}
                \text{maximize}  & \displaystyle\sum\limits_{e \in E} w_e x_e  \\
                \text{subject to}& \displaystyle\sum\limits_{e \in \delta(u)} x_e  \leq 1,& \text{for all } u \in V\\        
                  & x_e \geq 0, &\text{for all } e \in E
                \end{array}            
            \end{equation*}        
        \end{mdframed}                    
    \end{minipage}
    \hspace{\fill}
    \begin{minipage}{0.49\textwidth}
        \begin{mdframed}[linewidth=0.5pt, roundcorner=7pt, backgroundcolor=gray!5, frametitle={\underline{Dual LP}}, frametitlebelowskip=-4pt]
            \begin{equation*}
                \begin{array}{lll}
                \text{minimize}  & \displaystyle\sum\limits_{u \in V} y_u  \\
                \text{subject to}& \displaystyle\sum\limits_{u \in e} y_u  \geq w_e,& \text{for all } e \in E\\               
                  & y_u \geq 0, &\text{for all } u \in V
                \end{array}            
            \end{equation*}        
        \end{mdframed}        
    \end{minipage}
    \caption{The linear programming (LP) relaxations of the MWM problem and its dual.}
    \label{fig:lp}
\end{figure}

We use the linear programming (LP) relaxation of the MWM problem and its dual, shown in \fref{fig:lp}.
In the primal LP, each variable $x_e$ is $1$ if edge $e$ is in the matching and $0$ otherwise. Each $y_u$ is a dual variable, and $\delta(u)$ denotes the set of edges incident on a vertex $u$.
Let $\{x_e\}_{e \in E}$ and $\{y_u\}_{u \in V}$ be feasible solutions to the primal and dual LPs, respectively. 
By weak LP duality, we have $\sum_{e \in E}w_e x_e \leq \sum_{u \in V} y_u$. 
If $\{x_e\}_{e \in E}$ is an optimal solution to the primal LP, then $w\br{\M^{*}} \leq \sum_{e \in E}w_ex_e \leq \sum_{u \in V} y_u$. 
The first inequality holds because the primal LP is a relaxation of the MWM problem.

\subsection{The Algorithm}
\label{subsec:mwm_ps}

Several semi-streaming algorithms have been developed for the MWM problem~\cite{assadi2024simple, crouch2014improved, epstein2011improved, feigenbaum2005graph,   gamlath2019weighted, ghaffari2019simplified, mcgregor2005finding, paz20182+, zelke2012weighted} (see \hyref{Section}{subsec:mwm_related} for brief descriptions of these algorithms).
In this paper, we focus exclusively on the single-pass setting in the \emph{poly-streaming} model. Our starting point is the algorithm of Paz and Schwartzman~\cite{paz20182+}, which computes a $2+\epsilon$-approximation of MWM. This is currently the best known guarantee in the single-pass setting under arbitrary or adversarial ordering of edges\footnote{No single-pass algorithm can achieve an approximation ratio better than $1+\ln 2 \approx 1.7 $; see~\cite{kapralov2021space}.}. We extend a primal-dual analysis by Ghaffari and Wajc~\cite{ghaffari2019simplified} to analyze our algorithm.

The algorithm of Paz and Schwartzman~\cite{paz20182+} proceeds as follows. Initialize an empty stack $S$ and set $\alpha_u=0$ for each vertex $u \in V$. For each edge $e=\{u,v\}$ in the edge stream, skip $e$ if $w_e < \br{1+\epsilon}\br{\alpha_u + \alpha_v}$. Otherwise, compute $g_e = w_e - \br{\alpha_u + \alpha_v}$, push $e$ onto the stack $S$, and increase both $\alpha_u$ and $\alpha_v$ by $g_e$. After processing all edges, compute a matching $\M$ greedily by popping edges from $S$.

Note that for each edge pushed onto the stack, the increment $g_e=w_e - \br{\alpha_u+\alpha_v}$ satisfies $g_e \geq \epsilon\br{\alpha_u+\alpha_v}$. This ensures that both $\alpha_u$ and $\alpha_v$ increase by a factor of $1+\epsilon$. Hence, the number of edges in the stack incident to any vertex is at most $\log_{1+\epsilon}(W) = \bigO{\frac{\log W}{\epsilon}}$, where $W$ is the (normalized) maximum edge weight. Therefore, the total number of edges in the stack is $\bigO{\frac{n\log W}{\epsilon}} =\bigO{\frac{n\log n}{\epsilon}}$.\footnote{Throughout the paper, we assume $W = \bigO{poly(n)}$. For arbitrary weights on edges, we can skip any edge whose weight is less than $\frac{\epsilon W_{max}}{2\br{1+\epsilon}n^2}$, where $W_{max}$ denotes the maximum edge weight observed so far in the stream. This ensures that the (normalized) maximum weight the algorithm sees is $\bigO{n^2/\epsilon}$, while maintaining a $2\br{1+\bigO{\epsilon}}$ approximation ratio (see~\cite{ghaffari2019simplified} for details).}

To design a poly-streaming algorithm, we begin with a simple version and then refine it. All $k$ processors share a global stack and a set of variables $\{\alpha_u\}_{u \in V}$, and each processor runs the above sequential streaming algorithm on its respective stream. To complete and adapt this setup for efficient execution across multiple streams, we must address two interrelated issues: (1) concurrent edge arrivals across streams may lead to contention for the shared stack or variables, and (2) concurrent updates to the shared variables may lead to inconsistencies in their observed values.

A natural approach to addressing these issues is to enforce a fair sequential strategy, where processors access shared resources in a round-robin order. While this ensures progress, it incurs $\bigO{k}$ per-edge processing time, which scales poorly with increasing $k$. Instead, we adopt fine-grained contention resolution that avoids global coordination by allowing processors to operate asynchronously. However, under the initial setup, this leads to $\bigOtilde{n / \epsilon}$ per-edge processing time: a processor may be blocked from accessing shared resources until the stack has accumulated its $\bigOtilde{n / \epsilon}$ potential edges. We address these limitations with the following design choices.

\begin{figure}[h]
    \centering
    \begin{parbox}{4.8in}{    
        \begin{mdframed}[linewidth=0.5pt, roundcorner=7pt, backgroundcolor=gray!5, frametitle={\underline{PS-MWM$(V, \ell, \epsilon)$}}]        
        \textcolor{teal}{/* each processor executes this algorithm concurrently */}
        \begin{enumerate}            
            \item In parallel initialize $lock_u$, and set $\alpha_u$ and $mark_u$ to $0$ for all $u \in V$ \textcolor{teal}{\\/* processor $\ell$ initializes or sets $\bigTheta{n/k}$ locks/variables */}        
            \item  $S^{\ell} \leftarrow \emptyset$ \textcolor{teal}{/* initialize an empty stack */}
            \item for each edge $e=\{u,v\}$ in $\ell${th} stream do
            \begin{enumerate}
                \item \aref{Process-Edge$(e, S^{\ell}, \epsilon$)}{fig:algo_proc_edge}
            \end{enumerate}
            \item wait for all processors to complete execution of Step~3 \textcolor{teal}{/* a barrier */}
            \item $\M^{\ell} \leftarrow$ \aref{Process-Stack$(S^{\ell})$}{fig:algo_proc_stack}
            \item return $\M^{\ell}$
        \end{enumerate}
        \end{mdframed}    
    
    }
    \end{parbox}    
    \caption{A poly-streaming matching algorithm.}
    \label{fig:algo_psmwm}
\end{figure}

For the first issue, we observe that a global ordering of edges, as used in the single-stack solution, is not necessary; local orderings within multiple stacks suffice. In particular, we can identify a subset of edges (later referred to as \emph{tight edges}) for which maintaining local orderings is sufficient to compute a $2+\epsilon$-approximate MWM.
Hence, we can localize computation using $k$ stacks, assigning one stack to each processor exclusively during the streaming phase. This design eliminates the $\bigOtilde{n / \epsilon}$ contention associated with a shared stack. 

However, contention still arises when updating the variables $\{\alpha_u\}_{u \in V}$. 
It is unclear how to resolve this contention without using additional space. 
Hence, we consider two strategies for processing edge streams that illustrate the trade-off between space and per-edge processing time. In the first, which we call the \emph{non-deferrable strategy}, the decision to include an edge in a stack is made immediately during streaming. In the second, which we call the \emph{deferrable strategy}, this decision may be deferred to post-processing. The latter strategy requires more space but achieves $\bigO{1}$ per-edge processing time.

To address the second issue, which concerns the potential for inconsistencies due to concurrent updates to the variables $\{\alpha_u\}_{u \in V}$, we observe that the variables are monotonically increasing and collectively require only $\bigOtilde{n / \epsilon}$ updates. Thus, for most edges that are not eligible for the stacks, decisions can be made by simply reading the current values of the relevant variables. However, for the $\bigOtilde{n / \epsilon}$ edges that are included in the stacks, we must update the corresponding variables. To ensure consistency of these updates, we associate a lock with each variable in $\{\alpha_u\}_{u \in V}$. We maintain $|V|$ exclusive locks and allow a variable to be updated only after acquiring its corresponding lock.\footnote{This corresponds to the concurrent-read exclusive-write (CREW) paradigm of the PRAM model.}

We now outline the non-deferrable strategy of our poly-streaming algorithm for the MWM problem (for the deferrable strategy see \hyref{Appendix}{subsec:mwm_ds}). For simplicity, we assume that if a processor attempts to release a lock it did not acquire, the operation has no effect. We also assume that any algorithmic step described with the "in parallel" construct includes an implicit barrier (or synchronization primitive) at the end, synchronizing the processors participating in that step.

\begin{figure}[h]
    \centering
    \begin{parbox}{4.6in}{    
        \begin{mdframed}[linewidth=0.5pt, roundcorner=7pt, backgroundcolor=gray!5, frametitle={\underline{Process-Edge$(e=\{u,v\}, S^{\ell}, \epsilon)$}}]
        \textcolor{teal}{/* Assumes access to global variables $\{\alpha_x\}_{x \in V}$ 
 and locks $\{lock_x\}_{x \in V}$ */}
    \begin{enumerate}
        \item if $w_e \leq (1+\epsilon)(\alpha_u +\alpha_v)$ then return
        \item repeatedly try to acquire $lock_u$ and $lock_v$ in lexicographic order of $u$ and $v$ as long as $w_e > (1+\epsilon)(\alpha_u +\alpha_v)$
        \item if $w_e > (1+\epsilon)(\alpha_u +\alpha_v)$ then
        \begin{enumerate}
            \item $g_e \leftarrow w_e - (\alpha_u +\alpha_v)$
            \item increment $\alpha_u$ and $\alpha_v$ by $g_e$
            \item add $e$ to the top of $S^{\ell}$ along with $g_e$
        \end{enumerate}
        \item release $lock_u$ and $lock_v$, and return        
    \end{enumerate}
        \end{mdframed}    
 
    }
    \end{parbox}    
    \caption{A subroutine used in algorithms \aref{PS-MWM}{fig:algo_psmwm}, \aref{PS-MWM-DS}{fig:algo_psmwm_ds}, and \aref{PS-MWM-LD}{fig:algo_psmwm_ld}.}
    \label{fig:algo_proc_edge}
\end{figure}

The non-deferrable strategy is presented in \aref{Algorithm PS-MWM}{fig:algo_psmwm}, with two subroutines used by \hyperref[fig:algo_psmwm]{PS-MWM} described in \hyperref[fig:algo_proc_edge]{Process-Edge} (\hyref{Figure}{fig:algo_proc_edge}) and \hyperref[fig:algo_proc_stack]{Process-Stack} (\hyref{Figure}{fig:algo_proc_stack}). 
In \aref{PS-MWM}{fig:algo_psmwm}, Steps~1--2 form the preprocessing phase, Steps~3--4 the streaming phase, and Step~5 the post-processing phase. Each processor $\ell \in [k]$ executes \aref{PS-MWM}{fig:algo_psmwm} asynchronously, except that all processors begin the post-processing phase simultaneously (due to Step~4) and then resume asynchronous execution. 

In the subroutine \hyperref[fig:algo_proc_edge]{Process-Edge}, Step~2 ensures that all edges are processed using the non-deferrable strategy: a processor repeatedly attempts to acquire the locks corresponding to the endpoints of an edge $e=\{u,v\}$ until it succeeds or the edge becomes ineligible for inclusion in a stack. As a result, a processor executing Step~3 has a consistent view of the variables $\alpha_u$ and $\alpha_v$. In Step~3(c), we store the \emph{gain} $g_e$ of an edge $e$ along with the edge itself for use in the post-processing phase.

\begin{figure}
    \centering
    \begin{parbox}{4.6in}{    
        \begin{mdframed}[linewidth=0.5pt, roundcorner=7pt, backgroundcolor=gray!5, frametitle={\underline{Process-Stack$(S^{\ell})$}}]
        \textcolor{teal}{/* Assumes access to global variables $\{\alpha_x\}_{x \in V}$ and $\{mark_x\}_{x \in V}$ */}
    \begin{enumerate}
        \item $\M^{\ell} \leftarrow \emptyset$                
        \item while $S^{\ell} \ne \emptyset$ do                
            \begin{enumerate}
                \item remove the top edge $e = \{u, v\}$ of $S^{\ell}$
                \item if $w_e + g_e < \alpha_u + \alpha_v$ then wait for $e$ to be a tight edge \textcolor{teal}{\\/* $e$ is a \emph{tight edge} if $w_e+g_e = \alpha_u+\alpha_v$ */}
                \item if both $mark_u$ and $mark_v$ are set to $0$ then \textcolor{teal}{\\/* no locking is needed since $e$ is a tight edge */}
                \begin{enumerate}
                    \item $\M^{\ell} \leftarrow \M^{\ell} \cup \{e\}$
                    \item set $mark_u$ and $mark_v$ to $1$            
                \end{enumerate} 
                \item decrement $\alpha_u$ and $\alpha_v$ by $g_e$
            \end{enumerate}                                
        \item return $\M^{\ell}$
    \end{enumerate}
        \end{mdframed}    
 
    }
    \end{parbox}    
    \caption{A subroutine used in algorithms \aref{PS-MWM}{fig:algo_psmwm}, \aref{PS-MWM-DS}{fig:algo_psmwm_ds}, and \aref{PS-MWM-LD}{fig:algo_psmwm_ld}.}
    \label{fig:algo_proc_stack}
\end{figure}

When all $k$ processors are ready to execute Step~5 of \hyperref[fig:algo_psmwm]{PS-MWM}, the $k$ stacks collectively contain all the edges needed to construct a $\br{2+\epsilon}$-approximate MWM, which can be obtained in several ways. In the subroutine \hyperref[fig:algo_proc_stack]{Process-Stack}, we outline a simple approach based on local edge orderings. We define an edge $e=\{u,v\}$ in a stack to be a \emph{tight edge} if $w_e + g_e = \alpha_u + \alpha_v$. Equivalently, an edge is tight if and only if all of its neighboring edges that were included after it in any stack have already been removed. Any set of tight edges can be processed concurrently, regardless of their positions in the stacks. In \hyperref[fig:algo_proc_stack]{Process-Stack}, we simultaneously process the tight edges that appear at the tops of the stacks.

\subsection{Analyses}
\label{subsec:mwm_analyses}

We now formally characterize several correctness properties of the algorithm and analyze its performance. These correctness properties include the absence of deadlock, livelock, and starvation. The performance metrics are space usage, approximation ratio, per-edge processing time, and total runtime. 

To simplify the analysis, we assume that processors operate in a quasi-synchronous manner. In particular, to analyze Step~3 of \aref{Algorithm PS-MWM}{fig:algo_psmwm}, we define an algorithmic \emph{superstep} as a unit comprising a constant number of elementary operations.

\begin{definition}[Superstep]
\label{def_ss} 
A processor takes one \emph{superstep} for an edge if it executes \hyperref[fig:algo_proc_edge]{Process-Edge} with at most one iteration of the loop in Step~2 (i.e., without contention), requiring $\bigO{1}$ elementary operations. Each additional iteration of Step~2 due to contention adds one superstep, with each such iteration also requiring $\bigO{1}$ operations.
\end{definition}

\begin{definition}[Effective Iterations]
\label{def_ei}
\emph{Effective iterations} is the maximum number of supersteps taken by any processor during the execution of Step~3 of \aref{Algorithm PS-MWM}{fig:algo_psmwm}.
\end{definition}

Note that for $k=1$, the effective iterations equals the number of edges in the stream. Using this notion, we align the supersteps of different processors and define the following directed graph.

\begin{definition}[$G^{\br{t}}$]
\label{def_ei_dag}
For the $t \mathrm{th}$ effective iteration, consider the set of edges processed across all $k$ streams. 
Let $e_{\ell}=\br{u_{\ell},v_{\ell}}$ denote the edge processed in the $\ell$th stream, where $u_{\ell}$ precedes $v_{\ell}$ in the lexicographic ordering of the vertices. If processor $\ell$ is idle in the $t\mathrm{th}$ iteration, then $e_{\ell} = \emptyset$.
Define $G^{\br{t}} := \br{V^{\br{t}},E^{\br{t}}}$, where \[E^{\br{t}} := \{e_{\ell} \mid \ell \in [k]\} \text{ and } V^{\br{t}} := \underset{\br{u_{\ell},v_{\ell}} \in E^{\br{t}}}{\bigcup} \{u_{\ell}, v_{\ell}\}.\]
\end{definition}

The following property of $G^{\br{t}}$ is straightforward to verify.
\begin{proposition}
\label{prop_ei_dag}
$G^{\br{t}}$ is a directed acyclic graph.
\end{proposition}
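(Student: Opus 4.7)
The plan is to exploit the fact that the definition orients every edge $e_{\ell}$ from its lexicographically smaller endpoint $u_{\ell}$ to its lexicographically larger endpoint $v_{\ell}$. In other words, the lexicographic ordering on $V$ already furnishes a topological order of $G^{(t)}$, and any orientation consistent with a total order on the vertex set is acyclic.

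Concretely, I would first note that $G^{(t)}$ is by construction a directed graph, since each edge in $E^{(t)}$ comes with a prescribed head and tail given by Definition~\ref{def_ei_dag}. Then I would argue by contradiction: suppose there were a directed cycle $x_1 \to x_2 \to \cdots \to x_j \to x_1$ with $j \geq 2$. By the lexicographic orientation rule, every directed arc $(x_i, x_{i+1}) \in E^{(t)}$ forces $x_i$ to precede $x_{i+1}$ in the lexicographic order of $V$. Chaining these strict inequalities along the cycle yields $x_1 \prec x_2 \prec \cdots \prec x_j \prec x_1$, contradicting irreflexivity of the lexicographic order on vertices. Hence no directed cycle exists.

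There is essentially no obstacle here; the only subtlety is to confirm that the head/tail assignment in Definition~\ref{def_ei_dag} is unambiguous, which it is because the lexicographic order on $V$ is a strict total order, so for each $e_{\ell} = \{u_{\ell}, v_{\ell}\}$ exactly one of the two endpoints plays the role of $u_{\ell}$. Thus the proposition reduces to the standard observation that an orientation induced by a strict total order on vertices produces a DAG, and the proof amounts to writing out that one-line argument.
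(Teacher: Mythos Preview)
Your proposal is correct and matches the paper's intent: the paper simply states that the property is ``straightforward to verify'' and gives no explicit proof, so your argument that the lexicographic orientation furnishes a topological order is exactly the verification the reader is expected to supply.
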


We show that \aref{Algorithm PS-MWM}{fig:algo_psmwm} is free from deadlock, livelock, and starvation. \emph{Deadlock} occurs when a set of processors forms a cyclic dependency, with each processor waiting for a resource held by another. \emph{Livelock} occurs when a set of processors repeatedly form such a cycle, where each processor continually acquires and releases resources without making progress. \emph{Starvation} occurs when a processor waits indefinitely for a resource because other processors repeatedly acquire it first. The following lemma shows that the streaming phase of \aref{PS-MWM}{fig:algo_psmwm} is free from deadlock, livelock, and starvation.

\begin{lemma}
\label{lemma_free}
The concurrent executions of the subroutine \aref{Process-Edge}{fig:algo_proc_edge} is free from deadlock, livelock, and starvation.
\end{lemma}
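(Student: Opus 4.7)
The plan is to exploit the lexicographic lock-acquisition order of \aref{Process-Edge}{fig:algo_proc_edge} together with the acyclicity of $G^{\br{t}}$ established in \hyref{Proposition}{prop_ei_dag}. I would first pin down a \emph{lock-holding invariant}: while a processor $P_{\ell}$ executes \hyperref[fig:algo_proc_edge]{Process-Edge} on edge $e=\{u,v\}$ with $u$ lex-preceding $v$, it holds either no lock, only $lock_u$, or both $lock_u$ and $lock_v$, and it is ever \emph{blocked} only on $lock_v$. This invariant is immediate from Step~2 of \hyperref[fig:algo_proc_edge]{Process-Edge} and constrains the shape of the wait-for graph throughout the streaming phase.

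For deadlock-freedom, I would consider the wait-for graph $H$ at an arbitrary instant, where a directed edge $P_i \to P_j$ means $P_i$ is blocked on a lock held by $P_j$. By the invariant, such an edge forces $P_i$ to hold $lock_{u_i}$ while waiting for $lock_{v_i}$, with $u_i$ preceding $v_i$. A hypothetical cycle $P_1 \to P_2 \to \cdots \to P_c \to P_1$ would place every $P_{i+1}$ in the wait set as well, so $P_{i+1}$ can hold only its lex-lower lock, forcing $v_i = u_{i+1} < v_{i+1}$ for $i < c$ and $v_c = u_1 < v_1$. This produces the strict chain $v_1 < v_2 < \cdots < v_c < v_1$, which corresponds to a directed cycle in $G^{\br{t}}$ at the containing effective iteration and contradicts \hyref{Proposition}{prop_ei_dag}.

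Livelock-freedom would then follow from the same DAG structure: since $H$ has a sink at every instant, some processor is non-blocking, completes its current \hyperref[fig:algo_proc_edge]{Process-Edge} call in $\bigO{1}$ additional supersteps, releases its locks, and strictly shrinks $H$. An induction over the topological order of $H$ shows that the system as a whole cannot sit in a collective retry cycle without progress. For starvation-freedom, I would argue \emph{per processor per edge}: while $P_{\ell}$ is in the loop of Step~2 on $e=\{u,v\}$, every successful push to any stack by another processor that touches $u$ or $v$ multiplies the corresponding $\alpha$ by at least $1+\epsilon$, since each pushed edge satisfies $g_e \geq \epsilon(\alpha_u+\alpha_v)$. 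Since each $\alpha_x$ admits at most $\bigO{\log_{1+\epsilon} W}=\bigO{\log W/\epsilon}$ such increments in the entire execution, after a bounded number of competing successes either the loop guard $w_e > (1+\epsilon)(\alpha_u+\alpha_v)$ fails and $P_{\ell}$ exits via Step~4, or $P_{\ell}$ itself acquires both locks.

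The main obstacle I expect is correctly handling the asymmetric ``holds vs.\ waits'' distinction in the deadlock argument: the invariant must rule out a processor holding $lock_v$ while waiting for $lock_u$, which is precisely what makes the chain $v_1 < v_2 < \cdots < v_c$ strict and lets us invoke \hyref{Proposition}{prop_ei_dag}. Once that invariant is cleanly stated, the cascade from ``no deadlock cycles'' to ``$H$ always has a sink'' to ``bounded retries per edge'' is routine.
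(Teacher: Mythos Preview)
Your deadlock and livelock arguments match the paper's: both hinge on the acyclicity of $G^{\br{t}}$ from \hyref{Proposition}{prop_ei_dag}, whether framed via an explicit wait-for graph $H$ or directly via the sink vertex in each component.

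Your starvation argument departs from the paper's and has a gap. You count only pushes that touch $u$ or $v$ and then assert ``after a bounded number of competing successes either the loop guard fails\ldots or $P_\ell$ itself acquires both locks.'' The second disjunct is unjustified: $P_\ell$ can be blocked on $lock_v$ by a processor $P_j$ whose edge is $(v,w)$ with $v<w$; $P_j$ holds $lock_v$ while itself waiting down a chain for $lock_w$, and when $P_j$ finally reaches Step~3 it may find its condition false (because $\alpha_w$ grew in the interim) and release without any increment at $v$. A succession of such holders can delay $P_\ell$ for many supersteps with $\alpha_u,\alpha_v$ unchanged, so your per-endpoint budget of $O(\log_{1+\epsilon}W)$ does not by itself bound the wait. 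The paper instead charges globally: for every effective iteration $t\in(a,b]$ in which $e_\ell$ persists, some edge in $e_\ell$'s component of $G^{\br{t-1}}$ acquires its locks and reaches Step~3, and the total number of Step-3 increments over the entire execution is at most $\bigO{n\log_{1+\epsilon}W}=\bigOtilde{n/\epsilon}$, yielding $b-a=\bigOtilde{n/\epsilon}$. That component-wide bound, not a per-endpoint one, is what the paper proves here and then feeds into \lemref{lemma_algo1_space_ns}.
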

\begin{proof}
Since the variables $\{\alpha_u\}_{u \in V}$ are updated only while holding their corresponding locks, we treat the locks $\{lock_u\}_{u \in V}$ as the only shared resources in \aref{Process-Edge}{fig:algo_proc_edge}. 

Let $G^{\br{t}}$ be the graph defined in \hyref{Definition}{def_ei_dag}. By \hyref{Proposition}{prop_ei_dag}, $G^{\br{t}}$ is a directed acyclic graph (DAG), and hence each of its components is also a DAG. 

To reason about cyclic dependencies, we focus on components of $G^{\br{t}}$ involving processors executing Step~2 of \aref{Process-Edge}{fig:algo_proc_edge}. Every DAG contains at least one vertex with no outgoing edges. Thus, each such component includes an edge $e_{\ell} =\br{u_{\ell}, v_{\ell}}$ such that only processor $\ell$ requests $lock_{v_{\ell}}$. This precludes the possibility of cyclic dependencies; that is, the concurrent executions of \aref{Process-Edge}{fig:algo_proc_edge} is free from deadlock and livelock.

To show that starvation does not occur, suppose an edge appears in every effective iteration $t \in [a,b]$, that is, $e_{\ell} = \br{u_{\ell}, v_{\ell}} \in \underset{t \in [a,b]}{\cap} E^{\br{t}}$. We show that $b-a = \bigOtilde{n / \epsilon}$, which bounds the number of supersteps that processor $\ell$ may spend attempting to acquire locks for $e_{\ell}$. 

Step~2 requires one superstep per iteration, while all other steps collectively require at most one. For each $t \in (a,b]$, the component of $G^{\br{t-1}}$ containing $e_{\ell}$ has at least one vertex with no outgoing edge. This guarantees that at least one edge in that component acquires its locks and completes Step~3 during the $\br{t-1}$th effective iteration. Since Step~3 can increment the values in $\{\alpha_u\}_{u \in V}$ for at most $\bigO{n\log_{1+\epsilon} W} = \bigOtilde{n / \epsilon}$ edges over the entire execution, the number of iterations for which $e_{\ell}$ may remain blocked is also bounded by $\bigOtilde{n/ \epsilon}$.
\end{proof}

To analyze Step~5 of \aref{Algorithm PS-MWM}{fig:algo_psmwm}, we adopt the same simplification: processors are assumed to operate in a quasi-synchronous manner. Accordingly, we define $\U^{\br{t}}$ as the set of edges present in the stacks $\underset{\ell \in [k]}{\bigcup} S^{\ell}$ at the beginning of iteration $t$ of Step~2 in \hyperref[fig:algo_proc_stack]{Process-Stack}. The following definition is useful for characterizing tight edges via an equivalent notion.

\begin{definition}[Follower]
\label{def_follower}
An edge $e_j \in \U^{\br{t}}$ is a \emph{follower} of an edge $e_i \in \U^{\br{t}}$ if $e_i\cap e_j \ne \emptyset$ and $e_j$ is added to some stack $S^{j}$ after $e_i$ is added to some stack $S^i$. We denote the set of followers of an edge $e$ by $\F(e)$.
\end{definition}

The proofs of the following four lemmas are included in \hyref{Appendix}{app:lemma_free_pp}. The fourth lemma establishes that the post-processing phase of \aref{PS-MWM}{fig:algo_psmwm} is free from deadlock, livelock, and starvation.

\begin{restatable}{lemma}{LemmaTightFollower}
\label{lemma_tight_follower}
An edge $e$ is a tight edge if and only if $\F(e) = \emptyset$.
\end{restatable}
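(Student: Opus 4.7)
The plan is a two-step proof. First I would derive an algebraic characterization of tightness from the algorithm's bookkeeping, and then reduce the lemma to a single auxiliary claim that is resolved by induction on pop times.

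I would begin with the invariant that at every instant $\alpha_x$ equals the sum of gains $g_f$ over edges $f \ni x$ that have been pushed but not yet decremented, which is immediate from the fact that \aref{Process-Edge}{fig:algo_proc_edge} increments $\alpha_x$ by $g_f$ when $f$ is added and Step~2(d) of \aref{Process-Stack}{fig:algo_proc_stack} decrements it by the same amount. Applied to $e = \{u,v\} \in \U^{(t)}$, this yields
\[
\alpha_u + \alpha_v \;=\; 2 g_e \;+\; \sum_{f \in \B(e)} g_f \;+\; \sum_{f \in \F(e)} g_f,
\]
where $\B(e) := \{f \in \U^{(t)} : f \ne e,\; f \cap e \ne \emptyset,\; f \text{ added before } e\}$ and $\F(e)$ is as defined. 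On the other hand, the assignment $g_e \leftarrow w_e - (\alpha_u + \alpha_v)$ at the moment $e$ was pushed gives
\[
w_e - g_e \;=\; \sum_{f \in \B(e)} g_f \;+\; \sum_{f \in \B^-(e)} g_f,
\]
where $\B^-(e)$ collects those edges added before $e$, sharing an endpoint with $e$, that have already been decremented by time $t$. Subtracting, the tight condition $w_e + g_e = \alpha_u + \alpha_v$ is equivalent to $\sum_{f \in \B^-(e)} g_f = \sum_{f \in \F(e)} g_f$. Since every $g_f > 0$, the lemma will follow once I establish the auxiliary claim that $\B^-(e) = \emptyset$.

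The main content of the proof, and the step I expect to be the principal obstacle, is this auxiliary claim, which I would prove by strong induction on pop times. Suppose for contradiction that some $f \in \B^-(e)$ exists and let $t' < t$ be the time $f$ was popped. Since Step~2(b) of \aref{Process-Stack}{fig:algo_proc_stack} forces every popped edge to be tight, the identity above applied to $f$ at $t'$ gives $\sum_{h \in \B^-_{t'}(f)} g_h = \sum_{h \in \F_{t'}(f)} g_h$. By the inductive hypothesis applied at $t'$ we have $\B^-_{t'}(f) = \emptyset$, and hence $\F_{t'}(f) = \emptyset$. But $e$ was added after $f$, shares an endpoint with $f$, and still lies in a stack at $t' < t$, so $e \in \F_{t'}(f)$, a contradiction; the base case (no prior pops) is vacuous. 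The subtlety I would be careful about here is bookkeeping under concurrent execution, since the tightness check in Step~2(b) and the decrement in Step~2(d) are not atomic; I would therefore take $\U^{(t)}$ to consist of the edges that have been pushed but not yet decremented at time $t$, so that the invariant on $\alpha_x$, and with it the identity above, holds at every instant at which tightness is evaluated.
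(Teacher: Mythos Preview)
Your proof is correct and in fact more careful than the paper's. The paper argues each direction directly: for ``only if'' it notes that a surviving follower's push raised $\alpha_u+\alpha_v$ above $w_e+g_e$, and for ``if'' it observes that once all followers have been removed the decrements in Step~2(d) exactly undo their pushes, whence \lemref{lemma_tight} gives tightness. In both directions the paper tacitly uses that no \emph{predecessor} of $e$ has been decremented while $e$ is still in a stack --- precisely your auxiliary claim $\B^-(e)=\emptyset$ --- but never proves it (the same tacit assumption appears in the proof of \lemref{lemma_tight}, where ``the values of $\alpha_u$ and $\alpha_v$ remain unchanged'' would fail if a predecessor were popped). Your route makes this explicit: you first reduce tightness to the identity $\sum_{f\in\B^-(e)} g_f = \sum_{f\in\F(e)} g_f$ and then establish $\B^-(e)=\emptyset$ by strong induction on pop events, which simultaneously closes the gap and yields the equivalence. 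What you gain is a self-contained argument that does not lean on the unstated invariant; what the paper gains is brevity. One small correction: since the paper allows multigraphs (cf.\ its footnote about $e_j$ parallel to $e$), an edge $f$ parallel to $e$ contributes $2g_f$ rather than $g_f$ to both $\alpha_u+\alpha_v$ and to $w_e-g_e$, so your displayed identities should carry a multiplicity weight; this is harmless because you only use that all terms are strictly positive.
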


\begin{restatable}{lemma}{LemmaTightEdge}
\label{lemma_tight_edge}
Let $\T^{\br{t}}$ be the set of top edges in the stacks at the beginning of iteration $t$ of Step~2 of \hyperref[fig:algo_proc_stack]{Process-Stack}. Then $\T^{\br{t}}$ contains at least one tight edge.
\end{restatable}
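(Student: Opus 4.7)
The plan is to use Lemma~\ref{lemma_tight_follower} and reduce the claim to exhibiting, among the top edges of the $k$ stacks, a single edge whose follower set in $\U^{(t)}$ is empty. Because each stack $S^{\ell}$ is a LIFO structure, the contents of the stack at the start of iteration $t$ are ordered by the time at which they were pushed (whatever surviving edges remain after the pops in iterations $1,\ldots,t-1$). Let me fix, in the execution trace, a total order on push events across all $k$ streams (break ties arbitrarily since each push involves a distinct edge) and let $\tau(e)$ denote the push time of $e$.

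The candidate witness will be the top edge whose push time is latest. Concretely, among $\T^{(t)}$, let $e^{*} = \{u,v\}$ be an edge maximizing $\tau(\cdot)$, and let $S^{*}$ be the stack currently topped by $e^{*}$. I would then show $\F(e^{*}) = \emptyset$ by case analysis on where a hypothetical follower $f$ could live. If $f \in S^{*}$, then because $f$ is a follower we have $\tau(f) > \tau(e^{*})$; but then $f$ would have been pushed after $e^{*}$ and would still be present in $S^{*}$, forcing $f$ to sit strictly above $e^{*}$, contradicting the fact that $e^{*}$ is the top of $S^{*}$. If $f \in S^{j}$ for some $j \neq *$, let $e_{j}$ be the top of $S^{j}$; since $f$ lies in $S^{j}$ at the start of iteration~$t$, either $f = e_{j}$ or $f$ lies strictly below $e_{j}$, so in either case $\tau(f) \leq \tau(e_{j})$. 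But $e_{j} \in \T^{(t)}$ and $e^{*}$ was chosen to maximize push time over $\T^{(t)}$, giving $\tau(e_{j}) \leq \tau(e^{*})$ and hence $\tau(f) \leq \tau(e^{*})$, contradicting $f \in \F(e^{*})$.

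With $\F(e^{*}) = \emptyset$ established, Lemma~\ref{lemma_tight_follower} immediately gives that $e^{*}$ is a tight edge in $\T^{(t)}$, which is exactly what we need.

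\textbf{Main obstacle.} The only subtle point I anticipate is justifying a coherent global push-time ordering: the streaming phase is asynchronous and writes to distinct stacks, so $\tau(\cdot)$ must be defined in terms of the sequentialized trace of push events (of which there are only $\bigOtilde{n/\epsilon}$ in total) rather than wall-clock time. Once this is fixed, the case analysis above is mechanical. A secondary point worth stating explicitly is that the ``follower'' relation in Definition~\ref{def_follower} is with respect to $\U^{(t)}$ only, so edges removed during iterations $1,\ldots,t-1$ are not candidates for $f$ and need not be considered.
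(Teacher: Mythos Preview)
Your argument is correct, and it is actually more direct than the paper's. The paper builds an auxiliary directed graph $G_{\T}^{(t)}$ on the top edges, with an arc $(e_i,e_j)$ whenever stack $S^j$ contains a follower of $e_i$; it then invokes the separate \lemref{lemma_follower} to rule out $2$-cycles, chains ``included strictly after'' relations to rule out longer cycles, and finally takes a sink of the resulting DAG as the witness. Your choice of $e^{*}$ as the top edge with the globally latest push time is precisely such a sink, but you identify it without constructing the graph or appealing to \lemref{lemma_follower}: the maximality of $\tau(e^{*})$ together with the LIFO invariant $\tau(f)\le \tau(e_j)$ inside each surviving stack does all the work. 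Your remark about fixing $\tau$ as a linearization of the push-event trace is exactly the right justification; the locking discipline in \aref{Process-Edge}{fig:algo_proc_edge} guarantees that any two pushes of adjacent edges are comparable in the happens-before order (cf.\ \hyref{Remark}{rem_follower1}), so any linear extension makes the inequality $\tau(f)>\tau(e^{*})$ for $f\in\F(e^{*})$ legitimate. The paper's DAG formulation has the minor advantage of making the dependency structure explicit for later reuse (e.g., in \lemref{lemma_free_pp}), but for this lemma in isolation your argument is cleaner.
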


\begin{restatable}{lemma}{LemmaVertexDisjoint}
\label{lemma_vertex_disjoint}
The set of tight edges in $\U^{\br{t}}$ is vertex-disjoint.
\end{restatable}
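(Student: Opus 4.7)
The plan is to argue by contradiction, directly invoking the follower-based characterization of tightness established in \lemref{lemma_tight_follower}. I would suppose, for contradiction, that two distinct tight edges $e_1, e_2 \in \U^{\br{t}}$ share a common endpoint $v$. Since both edges belong to $\U^{\br{t}}$, each is still present in some stack at the start of iteration $t$ of Step~2 of \hyperref[fig:algo_proc_stack]{Process-Stack}, meaning each was pushed onto its respective stack during an earlier execution of Step~3(c) of \hyperref[fig:algo_proc_edge]{Process-Edge} and has not yet been popped. Without loss of generality, I would assume $e_1$ was pushed onto its stack before $e_2$ was.

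Next I would verify that the ordered pair $\br{e_1, e_2}$ satisfies both conditions of \hyref{Definition}{def_follower}: $e_1 \cap e_2 \ne \emptyset$, since they share $v$, and $e_2$ is added to some stack strictly after $e_1$. Combined with the standing assumption $e_2 \in \U^{\br{t}}$, these conditions certify that $e_2 \in \F\br{e_1}$, and in particular $\F\br{e_1} \ne \emptyset$. Applying the ``only if'' direction of \lemref{lemma_tight_follower} forces $e_1$ to be non-tight, contradicting the assumption. Hence no two tight edges in $\U^{\br{t}}$ can share an endpoint, which is the desired vertex-disjointness.

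The only subtlety, rather than a genuine obstacle, is ensuring that the ``WLOG $e_1$ pushed before $e_2$'' step is well-defined across processors operating asynchronously. This is handled by the locking discipline: Step~3 of \hyperref[fig:algo_proc_edge]{Process-Edge} pushes an edge only after acquiring the exclusive locks on both endpoints, so any two successful pushes that both involve the common endpoint $v$ are serialized through $lock_v$ and are therefore strictly totally ordered in real time. One of the two pushes is consequently unambiguously earlier, justifying the WLOG step. Beyond this bookkeeping, the argument is an immediate consequence of \lemref{lemma_tight_follower}, so I would not expect to need any inductive, averaging, or amortized reasoning.
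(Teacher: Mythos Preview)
Your proposal is correct and follows essentially the same route as the paper: both argue by contradiction, observe that two tight edges sharing a vertex would make one a follower of the other, and then invoke \lemref{lemma_tight_follower} to contradict tightness. Your additional remark about the locking discipline justifying the strict temporal ordering of the two pushes is a nice touch that the paper handles only implicitly (via \hyref{Remark}{rem_follower1}).
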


\begin{restatable}{lemma}{LemmaFreePP}
\label{lemma_free_pp}
The concurrent executions of the subroutine \aref{Process-Stack}{fig:algo_proc_stack} is free from deadlock, livelock, and starvation.
\end{restatable}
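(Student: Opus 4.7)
The plan is to use the fact that \aref{Process-Stack}{fig:algo_proc_stack} uses no locks (the comment at Step~2(c) explicitly states this), so the only form of blocking any processor can encounter is the wait at Step~2(b) for its popped top edge to become tight. The whole proof then reduces to bounding how long that wait can last, and for that I would lean on \hyref{Lemma}{lemma_tight_follower}, \hyref{Lemma}{lemma_tight_edge}, and the monotone nature of the post-processing phase.

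First, I would adapt the superstep formalism of \hyref{Definition}{def_ss} to \aref{Process-Stack}{fig:algo_proc_stack}, counting one iteration of the while loop (possibly including a wait at Step~2(b)) as one superstep. The key structural observation is that no edge is ever added to any stack during post-processing, so for every edge $e$ still in some $S^{\ell}$ its follower set $\F(e)$ is monotonically non-increasing. By \hyref{Lemma}{lemma_tight_follower}, tightness is equivalent to $\F(e) = \emptyset$, so once $e$ becomes tight it stays tight, and $e$ becomes tight precisely when its last follower is popped.

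Second, I would invoke \hyref{Lemma}{lemma_tight_edge}, which guarantees that at the start of every superstep the set $\T^{\br{t}}$ of top edges contains at least one tight edge. Hence at least one processor advances past Step~2(b) and removes an edge from $\bigcup_{\ell} S^{\ell}$ per superstep. This global progress rules out livelock, and it rules out deadlock since the only ``resource'' waited on is a monotone global predicate (tightness), not a held mutex that could form a wait cycle. \hyref{Lemma}{lemma_vertex_disjoint} is then used as a correctness check to justify the lock-free concurrent updates in Steps~2(c)--(d): simultaneously processed tight edges are vertex-disjoint, so the $mark$ writes and $\alpha$ decrements across different stacks do not race.

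The last step, and the main obstacle, is starvation. Fix a processor $\ell$ blocked on an edge $e$ at Step~2(b). Its followers all live in $\bigcup_{\ell'} S^{\ell'}$, which holds at most $\bigOtilde{n/\epsilon}$ edges (the total pushed during streaming). By the monotonicity observation, followers can only disappear; by \hyref{Lemma}{lemma_tight_edge}, at least one removal happens per superstep globally; so after $\bigOtilde{n/\epsilon}$ supersteps every follower of $e$ has been popped, $\F(e)=\emptyset$, $e$ is tight, and $\ell$ advances. Turning ``global progress per superstep'' into ``each particular $e$ sees its followers removed within $\bigOtilde{n/\epsilon}$ supersteps'' is the delicate point; I would resolve it by quasi-synchronously aligning supersteps across processors, exactly as in the proof of \hyref{Lemma}{lemma_free}, so that the tight edge identified by \hyref{Lemma}{lemma_tight_edge} in superstep $t$ is actually processed within superstep $t$ and the counting argument stays honest.
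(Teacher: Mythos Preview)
Your proposal is correct and takes essentially the same approach as the paper: the paper's proof also reduces everything to \lemref{lemma_tight_edge}, using the acyclicity of the dependency graph constructed there to rule out deadlock and livelock, and using the guarantee that each iteration removes at least one stacked edge (together with the $\bigOtilde{n/\epsilon}$ bound on the total number of stacked edges) to bound starvation at Step~2(b). Your write-up is more explicit about the monotonicity of $\F(e)$ via \lemref{lemma_tight_follower} and about the role of \lemref{lemma_vertex_disjoint} in justifying the lock-free updates, but the skeleton is identical.
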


We now analyze the performance metrics of the algorithm.

\begin{lemma}
\label{lemma_algo1_space_ns} 
For any constant $\epsilon > 0$, the space complexity and per-edge processing time of \aref{Algorithm PS-MWM}{fig:algo_psmwm} are $\bigO{k+n\log n}$ and $\bigO{n \log n}$, respectively. Furthermore, for $\lmin=\bigOmega{n}$, the amortized per-edge processing time of the algorithm is $\bigO{\log n}$.
\end{lemma}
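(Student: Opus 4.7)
\medskip
\noindent\textbf{Proof proposal for Lemma~\ref{lemma_algo1_space_ns}.}

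The plan is to break the claim into three parts and handle the two time bounds by charging blocking supersteps to the $\alpha$-updates that caused them. First, for the space bound, I would enumerate the persistent data structures: the $n$ locks, the $n$ values $\alpha_u$, and the $n$ flags $mark_u$ account for $\bigO{n}$ space; the $k$ stack headers contribute $\bigO{k}$; and the total number of edges residing in $\bigcup_{\ell \in [k]} S^{\ell}$ at any time is bounded, by the Paz--Schwartzman argument recalled in \hyref{Section}{subsec:mwm_ps}, by $\bigO{n\log_{1+\epsilon} W} = \bigO{n\log n/\epsilon}$. For constant $\epsilon$, these sum to $\bigO{k+n\log n}$, which handles the space claim.

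For the worst-case per-edge processing time, I would directly invoke \lemref{lemma_free}. Outside the loop in Step~2 of \aref{Process-Edge}{fig:algo_proc_edge}, every edge takes $\bigO{1}$ supersteps. Within that loop, the proof of \lemref{lemma_free} already established that any single edge remains blocked for at most $\bigOtilde{n/\epsilon}$ supersteps, since each additional iteration must be charged to some Step~3(b) update of an $\alpha$-variable and there are at most $\bigO{n\log_{1+\epsilon} W}$ such updates throughout the execution. For constant $\epsilon$ this is $\bigO{n\log n}$, yielding the worst-case bound.

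The more delicate part is the amortized $\bigO{\log n}$ bound when $\lmin = \bigOmega{n}$. Here I would bound the total number of supersteps consumed by \emph{all} processors during Step~3 of \aref{PS-MWM}{fig:algo_psmwm}, and then divide by the total number of edges $N \geq k\cdot\lmin = \bigOmega{k \cdot n}$. The non-blocked part contributes one superstep per edge, for a total of $N$ supersteps. For the blocked part, I would charge each failed lock-acquisition attempt to an interval during which some other processor is in Steps~3--4 holding the corresponding lock. Each such holding interval has $\bigO{1}$ length, and there are at most $\bigOtilde{n/\epsilon}$ of them over the entire execution (one per successful Step~3). At most $k-1$ processors can simultaneously be blocked on any one held lock, so the total blocked supersteps across all processors is $\bigO{k\cdot n\log n/\epsilon}$. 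Combining, the total work is $\bigO{N + k\cdot n\log n/\epsilon}$, and dividing by $N=\bigOmega{kn}$ gives amortized per-edge cost $\bigO{\log n/\epsilon} = \bigO{\log n}$ for constant $\epsilon$.

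The main obstacle I anticipate is making the charging scheme airtight in the blocking analysis: one has to avoid double-charging a single $\alpha$-update to both (i) the worst-case bound on how long a specific edge stays blocked and (ii) the collective bound on how many processors can be stalled by one lock-holding interval. I would resolve this by using the two views separately --- the first for the per-edge worst case and the second (amortized over holding intervals, multiplied by the $k-1$ waiter slack) for the global bound --- and not mixing them. A secondary subtlety is ensuring that Step~1 short-circuit exits and the Step~3 failure branch (where the condition turns false after acquiring the locks) are still only $\bigO{1}$ supersteps and thus absorbed into the $N$ term.
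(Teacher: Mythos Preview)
Your space bound and worst-case per-edge bound match the paper exactly: both enumerate the same data structures and both defer the worst-case time to the starvation argument in \lemref{lemma_free}.

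For the amortized bound you take a genuinely different route. The paper amortizes \emph{per stream}: it argues that a single processor $\ell$ incurs at most $\bigO{n\log n}$ extra supersteps in total across \emph{all} of its edges (reusing the second half of the proof of \lemref{lemma_free}, applied to one processor over time rather than to one edge), so processor $\ell$ uses $\bigO{|E^{\ell}|+n\log n}$ supersteps, and dividing by $|E^{\ell}|\geq \lmin=\bigOmega{n}$ gives $\bigO{\log n}$. The paper explicitly remarks just after the proof that this per-stream form is the intended notion. Your argument instead amortizes \emph{globally}: you bound the aggregate blocked work across all processors by $\bigO{k\cdot n\log n}$ and divide by $N\geq k\cdot \lmin$. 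Under the hypothesis $\lmin=\bigOmega{n}$ both routes land on $\bigO{\log n}$, but yours does not directly establish the stronger per-processor bound; the paper's route does, and it is no harder once you observe that each blocked superstep of processor~$\ell$ can be charged to a distinct successful Step~3 execution somewhere in its component.

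One caution on your charging scheme: charging a failed lock acquisition ``to an interval during which some other processor is in Steps~3--4 holding the corresponding lock'' is not quite right as stated, because a lock can be held by a processor that is itself still looping in Step~2 (it has its first lock and is waiting for its second). So a blocked superstep need not line up with any Steps~3--4 interval directly. The paper avoids this by invoking the DAG structure of $G^{(t)}$ from \hyref{Proposition}{prop_ei_dag}: in every effective iteration where some processor is blocked, its component has a sink vertex, and the processor owning the incident edge is guaranteed to acquire both locks and finish \aref{Process-Edge}{fig:algo_proc_edge} in that iteration. Charging blocked supersteps to these guaranteed completions, of which at most $\bigO{n\log n}$ actually increment an $\alpha$-variable, is what makes the bound airtight and simultaneously gives the per-processor estimate the paper uses.
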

\begin{proof}
The claimed space bound follows from three components: $\bigO{n}$ space for the variables and locks, $\bigO{n \log n}$ space for the stacked edges, and $\bigO{1}$ space per processor.

The worst-case per-edge processing time follows from the second part of the proof of \lemref{lemma_free}.

Processor $\ell$ processes $|E^{\ell}|$ edges, each requiring at least one distinct effective iteration (see \hyref{Definition}{def_ei}). 
Additional iterations may arise when it repeatedly attempts to acquire locks in Step~2 of \aref{Process-Edge}{fig:algo_proc_edge}. From the second part of the proof of \lemref{lemma_free}, the total number of such additional iterations is bounded by $\bigO{ n \log n}$. This implies that to process $|E^{\ell}|$ edges, a processor $\ell$ uses $\bigO{|E^{\ell}| + n\log n}$ supersteps. Therefore, the amortized per-edge processing time is $$\bigO{\frac{|E^{\ell}|+n \log n}{|E^{\ell}|}} = \bigO{\frac{n \log n}{|E^{\ell}|}} = \bigO{\frac{n \log n}{\lmin}} = \bigO{\log n}.$$
\end{proof}

Note that the amortized per-edge processing time is computed over the edges of an individual stream, not over the total number of edges across all streams.
While both forms of amortization are meaningful for poly-streaming algorithms, our analysis is more practically relevant, as it reflects the cost incurred per edge arrival within a single stream.

\begin{lemma}
\label{lemma_algo1_time_ns}
For any constant $\epsilon> 0$, \aref{Algorithm PS-MWM}{fig:algo_psmwm} takes 
$\bigO{\lmax+n\log n}$ time.
\end{lemma}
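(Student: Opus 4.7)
The plan is to bound the runtime of each of the three phases (preprocessing, streaming, post-processing) separately, and sum the bounds. The streaming phase will dominate due to the $\lmax$ term.

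For the \textbf{preprocessing phase} (Steps 1--2), each processor initializes $\Theta(n/k)$ locks and variables in parallel, so Step~1 takes $\bigO{n/k}$ time, and Step~2 is $\bigO{1}$. This contributes $\bigO{n/k} = \bigO{n}$, which is absorbed into the $\bigO{n \log n}$ term.

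For the \textbf{streaming phase} (Steps 3--4), I would invoke the superstep analysis from \lemref{lemma_algo1_space_ns}. That proof shows that processor $\ell$ uses $\bigO{|E^{\ell}| + n \log n}$ supersteps to process its stream: the $|E^{\ell}|$ term counts one superstep per edge, and the $\bigOtilde{n/\epsilon} = \bigO{n \log n}$ term bounds the extra iterations spent contending for locks in Step~2 of \aref{Process-Edge}{fig:algo_proc_edge} (each such extra iteration corresponds to a $(1+\epsilon)$-multiplicative update of some $\alpha_u$, and the total number of such updates across the execution is $\bigO{n \log_{1+\epsilon} W} = \bigO{n \log n}$ for constant $\epsilon$). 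Since the barrier in Step~4 makes the streaming phase end when the slowest processor completes, the phase takes $\bigO{\max_{\ell}(|E^{\ell}| + n \log n)} = \bigO{\lmax + n \log n}$ time.

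The \textbf{post-processing phase} (Step~5) is the most delicate piece, so this is where I would spend the most care. Define a superstep of \aref{Process-Stack}{fig:algo_proc_stack} analogously: one superstep per iteration of the while loop (whether the top edge is processed or the processor waits for it to become tight). First, observe that the stacks collectively contain at most $\bigO{n \log_{1+\epsilon} W} = \bigO{n \log n}$ edges, since each vertex can appear in at most $\bigO{\log_{1+\epsilon} W}$ stacked edges (from the $(1+\epsilon)$-growth of $\alpha_u$). Now align supersteps across processors. By \lemref{lemma_tight_edge}, at the beginning of every aligned superstep in which some stack is non-empty, at least one top edge in $\underset{\ell}{\bigcup} S^{\ell}$ is tight, so at least one processor removes an edge from its stack during that superstep. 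Hence the number of aligned supersteps is at most the total number of stacked edges, namely $\bigO{n \log n}$. Since each processor's runtime is bounded by the number of aligned supersteps, the post-processing phase finishes in $\bigO{n \log n}$ time.

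Summing the three bounds gives $\bigO{n/k} + \bigO{\lmax + n \log n} + \bigO{n \log n} = \bigO{\lmax + n \log n}$. The main obstacle is the post-processing argument: one must be careful that waits in Step~2(b) of \aref{Process-Stack}{fig:algo_proc_stack} do not accumulate beyond the total stack size, and the alignment argument based on \lemref{lemma_tight_edge} is what makes this work without any per-processor blow-up.
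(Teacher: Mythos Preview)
Your proposal is correct and follows essentially the same approach as the paper: bound each phase separately, invoke the superstep count from \lemref{lemma_algo1_space_ns} for the streaming phase, and use \lemref{lemma_tight_edge} together with the $\bigO{n\log n}$ bound on the total number of stacked edges to show the post-processing phase takes $\bigO{n\log n}$ time. Your treatment of the post-processing phase is a bit more explicit about the superstep alignment than the paper's, but the argument is the same.
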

\begin{proof}
The preprocessing phase (Steps~1--2) takes $\bigTheta{n/k}$ time.

To process $|E^{\ell}|$ edges, processor $\ell$ takes $\bigO{|E^{\ell}| + n \log n}$ supersteps (see the proof of \lemref{lemma_algo1_space_ns}). Since $|E^{\ell}| \leq \lmax$ for all $\ell \in [k]$, the time required for Step~3 is $\bigO{\lmax + n\log n}$.

At the beginning of Step~5, the total number of edges in the stacks is $\U^{\br{1}} = \bigO{n \log n}$. By \lemref{lemma_tight_edge}, iteration $t$ of \hyperref[fig:algo_proc_stack]{Process-Stack} removes at least one edge from $\U^{\br{t}}$. Hence, the time required for Step~5 is $\bigO{n\log n}$. 

The claim now follows by summing the time spent across all three phases.
\end{proof}

Now, using the characterizations of tight edges, we extend the duality-based analysis of~\cite{ghaffari2019simplified} to our algorithm. Let $\Delta_{\alpha}^e$ denote the change in $\sum_{u \in V} \alpha_u$ resulting from processing an edge $e \in E^{\ell}$ in Step~3 of \hyperref[fig:algo_proc_edge]{Process-Edge}. If an edge $e \in E^{\ell}$ is not included in a stack $S^{\ell}$ then $\Delta_\alpha^e = 0$, either because it fails the condition in Step~1 or Step~3 of \hyperref[fig:algo_proc_edge]{Process-Edge}. It follows that $\sum_{e \in \bigcup_{\ell \in [k]} E^{\ell}} \Delta_{\alpha}^e = \sum_{u\in V} \alpha_u$. For an edge $e$ that is included in some stack $S^i$, let $\G(e)$ denote the set of edges that share an endpoint with $e$ and are included in some stack $S^j$ no later than $e$ (including $e$ itself). The following two results are immediate from Observation 3.2 and Lemma 3.4 of~\cite{ghaffari2019simplified}.

\begin{proposition}
\label{prop_algo1_aux}
Any edge $e$ added to some stack $S^{\ell}$ satisfies the inequality $$ w_e \geq \sum_{e^{\prime} \in \G(e)} g_{e^{\prime}} = \frac{1}{2}\left(\sum_{e^{\prime} \in \G(e)} \Delta_{\alpha}^{e^{\prime}}\right).$$    
\end{proposition}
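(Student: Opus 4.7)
The plan is to verify that the duality-style bookkeeping of Ghaffari and Wajc~\cite{ghaffari2019simplified} transfers verbatim to the concurrent setting; the only thing the parallelism can break is the accounting of $\alpha$-values at the moment of acceptance, and this is rescued by the locking protocol in \aref{Process-Edge}{fig:algo_proc_edge}.

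First I would record the trivial fact that whenever an edge $e'=\{u',v'\}$ is accepted in Step~3 of \aref{Process-Edge}{fig:algo_proc_edge}, both $\alpha_{u'}$ and $\alpha_{v'}$ are incremented by exactly $g_{e'}$, so $\Delta_{\alpha}^{e'}=2g_{e'}$. Summing over $e'\in\G(e)$ then immediately yields the second equality of the proposition.

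For the first inequality, I would fix an edge $e=\{u,v\}$ accepted into some stack $S^{\ell}$ and examine the state at the instant processor $\ell$ executes Step~3 of \aref{Process-Edge}{fig:algo_proc_edge} for $e$. At that instant processor $\ell$ holds both $lock_u$ and $lock_v$, and because every increment of $\alpha_u$ (resp.\ $\alpha_v$) occurs only under $lock_u$ (resp.\ $lock_v$), the pair $(\alpha_u,\alpha_v)$ it reads is a consistent snapshot. Since the $\alpha$-variables are initialized to $0$ and are monotonically increasing, the value $\alpha_u+\alpha_v$ seen by processor $\ell$ equals
\[
\sum_{e'\in\G(e)\setminus\{e\}} g_{e'},
\]
because $\G(e)\setminus\{e\}$ is exactly the set of previously committed edges incident to $u$ or $v$ (assuming the input graph is simple). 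Combined with the update rule $g_e=w_e-(\alpha_u+\alpha_v)$ from Step~3(a), this gives $w_e=\sum_{e'\in\G(e)} g_{e'}$, which implies the stated inequality (in fact with equality).

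The only real subtlety is making precise the ``no later than $e$'' clause from the definition of $\G(e)$ in a concurrent execution. The lock discipline imposes a natural total order on the commit events of Step~3 across all processors, and this is exactly the order with respect to which $\G(e)$ should be interpreted; once one fixes this viewpoint, the argument is identical to the sequential one in~\cite{ghaffari2019simplified}. No further obstacle arises.
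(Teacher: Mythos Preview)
Your argument is correct and mirrors the paper's own treatment, which simply cites Observation~3.2 and Lemma~3.4 of~\cite{ghaffari2019simplified} as immediate; the only addition you make---noting that the lock protocol guarantees a consistent snapshot of $(\alpha_u,\alpha_v)$ at the commit point---is exactly what is needed to port the sequential accounting to the concurrent setting. Your parenthetical about simple graphs is precisely where equality may relax to inequality: a parallel edge $e'\in\G(e)\setminus\{e\}$ contributes $2g_{e'}$ rather than $g_{e'}$ to $\alpha_u+\alpha_v$, so in general $w_e=g_e+(\alpha_u+\alpha_v)\geq\sum_{e'\in\G(e)}g_{e'}$, which is all the proposition claims.
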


\begin{proposition}
\label{prop_algo1_dual}
After all processors complete Step~3 of \aref{Algorithm PS-MWM}{fig:algo_psmwm}, the variables $\{\alpha_u\}_{u \in V}$, scaled by a factor of $(1+\epsilon)$, form a feasible solution to the dual LP in \fref{fig:lp}.
\end{proposition}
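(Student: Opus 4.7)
The plan is to verify each of the two families of dual constraints separately for the scaled variables $\hat{y}_u := (1+\epsilon)\alpha_u$. The non-negativity constraints $\hat{y}_u \geq 0$ are immediate: the algorithm initializes $\alpha_u = 0$ in Step~1 of \hyperref[fig:algo_psmwm]{PS-MWM}, and the only write to $\alpha_u$ during Step~3 is the increment by $g_e > 0$ in Step~3(b) of \hyperref[fig:algo_proc_edge]{Process-Edge}. Hence $\alpha_u \geq 0$ throughout Step~3.

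The substantive part is to show $(1+\epsilon)(\alpha_u + \alpha_v) \geq w_e$ for every $e = \{u,v\} \in \bigcup_{\ell \in [k]} E^{\ell}$ after all processors finish Step~3. I would fix such an $e$ and trace through the three possible outcomes when the processor that owns $e$ executes \hyperref[fig:algo_proc_edge]{Process-Edge}: (i) Step~1 returns because $w_e \leq (1+\epsilon)(\alpha_u+\alpha_v)$ at that moment; (ii) the locks $lock_u$ and $lock_v$ are acquired but the Step~3 test fails, i.e., $w_e \leq (1+\epsilon)(\alpha_u+\alpha_v)$; (iii) the Step~3 test succeeds, in which case the update sets $\alpha_u^{\mathrm{new}} + \alpha_v^{\mathrm{new}} = \alpha_u + \alpha_v + 2g_e = w_e + g_e \geq w_e$, so that $(1+\epsilon)(\alpha_u^{\mathrm{new}} + \alpha_v^{\mathrm{new}}) \geq w_e$. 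The key structural observation tying these three cases together is that during Step~3 of \hyperref[fig:algo_psmwm]{PS-MWM} the family $\{\alpha_u\}_{u \in V}$ is monotonically non-decreasing (decrements occur only in Step~5 of \hyperref[fig:algo_proc_stack]{Process-Stack}, which is post-processing). Hence, whatever lower bound on $\alpha_u + \alpha_v$ is established at the moment $e$ is processed, it persists until the end of Step~3, and the dual constraint $(1+\epsilon)(\alpha_u+\alpha_v) \geq w_e$ follows.

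The one point that requires a careful remark is case~(i): the Step~1 check is performed without acquiring any lock, so the processor reads $\alpha_u, \alpha_v$ under concurrent writes. Under the CREW assumption, reads return well-defined (possibly stale) values from some past committed state; since monotonicity implies that the true current values dominate the stale values, an inequality of the form $w_e \leq (1+\epsilon)(\alpha_u+\alpha_v)$ established on stale values is also valid on the true current values and on all later values. The main work of the proof is therefore to enumerate the three branches above, verify the arithmetic in branch~(iii), and invoke monotonicity to promote each pointwise inequality to an inequality at the end of Step~3. No significant obstacle is expected; the primal-dual fact from~\cite{ghaffari2019simplified} carries over essentially verbatim once one checks that locking of $\{lock_u\}_{u\in V}$ makes the read--modify--write sequence in Step~3(a)--(c) of \hyperref[fig:algo_proc_edge]{Process-Edge} atomic with respect to both endpoints of~$e$.
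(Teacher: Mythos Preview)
Your proposal is correct. The paper itself does not give a proof of this proposition; it simply states that the result is ``immediate from Observation~3.2 and Lemma~3.4 of~\cite{ghaffari2019simplified}.'' Your argument is precisely the natural adaptation of that sequential primal--dual lemma to the concurrent setting, and the extra care you take with stale reads and monotonicity is exactly what is needed to justify the claim under CREW.

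One small completeness remark: your case split omits the sub-case where the loop in Step~2 of \hyperref[fig:algo_proc_edge]{Process-Edge} exits because its guard $w_e > (1+\epsilon)(\alpha_u+\alpha_v)$ becomes false before both locks are acquired. But this sub-case is handled by the same monotonicity argument as your case~(i), so it does not affect correctness.
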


\begin{lemma}
\label{lemma_algo1_approx}
Let $\M^{*}$ be a maximum weight matching in $G$. The matching $\M:= \underset{\ell \in [k]}{\bigcup} \M^{\ell}$ returned by \aref{Algorithm PS-MWM}{fig:algo_psmwm} satisfies $w(\M) \geq \frac{1}{2\left(1+\epsilon\right)} w(\M^{*})$.
\end{lemma}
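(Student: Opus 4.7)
The plan is to follow the primal-dual template of \cite{ghaffari2019simplified}, adapted to our multi-stack setting. By \hyref{Proposition}{prop_algo1_dual}, the scaled variables $\{(1+\epsilon)\alpha_u\}_{u\in V}$ are feasible for the dual LP of \fref{fig:lp}, so weak LP duality gives
\[
w(\M^{*}) \;\leq\; (1+\epsilon) \sum_{u\in V} \alpha_u.
\]
It therefore suffices to prove $w(\M) \geq \tfrac{1}{2}\sum_{u\in V}\alpha_u$, from which the claim follows by combining the two bounds.

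To establish this inequality, I would define a charging scheme that assigns every stacked edge $e'\in\bigcup_{\ell\in[k]} S^{\ell}$ to a unique edge in $\M$. When \aref{Process-Stack}{fig:algo_proc_stack} removes $e'=\{u,v\}$, either both $mark_u$ and $mark_v$ are still $0$ — in which case $e'$ enters $\M$ and is charged to itself — or at least one endpoint has already been marked by some earlier-popped $e\in\M$, and I would charge $e'$ to that $e$ (tie-breaking, say, by the lexicographically smaller endpoint). Edges that fail the tests in Steps~1 and~3 of \aref{Process-Edge}{fig:algo_proc_edge} contribute $\Delta_\alpha=0$, so $\sum_{e'\in\bigcup_\ell S^\ell}\Delta_\alpha^{e'}=\sum_{u\in V}\alpha_u$, and it is enough to charge only stacked edges.

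The key step, and the main obstacle, is verifying that whenever $e'$ is charged to $e\in\M$, we have $e'\in\G(e)$. Endpoint-sharing follows immediately from the charging rule. For the temporal condition, I would invoke \lemref{lemma_tight_follower}: the wait at Step~2(b) of \aref{Process-Stack}{fig:algo_proc_stack} ensures that $e$ is tight at the moment it is added to $\M$, so $\F(e)=\emptyset$ at that instant. Because $e'$ is popped strictly after $e$, it is still in some stack when $e$ becomes matched; not being a follower of $e$ then forces $e'$ to have been added to its stack no later than $e$, which is exactly $e'\in\G(e)$.

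With $\G(e)$-containment in hand, applying \hyref{Proposition}{prop_algo1_aux} to each $e\in\M$ yields
\[
w_e \;\geq\; \tfrac{1}{2}\sum_{e'\in\G(e)}\Delta_\alpha^{e'} \;\geq\; \tfrac{1}{2}\sum_{e'\text{ charged to }e}\Delta_\alpha^{e'},
\]
and summing over $e\in\M$ gives $w(\M)\geq\tfrac{1}{2}\sum_{u\in V}\alpha_u$. Combined with the dual bound above, this delivers $w(\M)\geq\frac{1}{2(1+\epsilon)}w(\M^{*})$, as required.
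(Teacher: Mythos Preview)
Your argument is correct and takes essentially the same primal-dual route as the paper: bound $w(\M^{*})$ via \hyref{Proposition}{prop_algo1_dual}, then show $w(\M)\ge\tfrac12\sum_{u}\alpha_u$ by combining \hyref{Proposition}{prop_algo1_aux} with \lemref{lemma_tight_follower}; your explicit charging in fact makes the coverage step $\bigcup_{e\in\M}\G(e)\supseteq\bigcup_{\ell} S^{\ell}$ clearer than the paper, which argues the converse containment and leaves coverage implicit. One small caveat in the concurrent setting: the phrase ``$e'$ is still in some stack when $e$ becomes matched'' is not literally guaranteed (since $e'$ may already have been removed at Step~2(a) of \aref{Process-Stack}{fig:algo_proc_stack} and be waiting at Step~2(b)), but your conclusion still holds because tightness of $e$ forces every follower of $e$ to have completed Step~2(d), and $e'$ has not.
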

\begin{proof}
We only process tight edges in \hyperref[fig:algo_proc_stack]{Process-Stack}. By \lemref{lemma_vertex_disjoint} tight edges are vertex disjoint, and hence their independent processing does not interfere with their inclusion in $\M$.

By \lemref{lemma_tight_follower}, an edge $e$ included in $\M$ must satisfy $\F(e) = \emptyset$. Consider any edge $e^{\prime} \in \G(e) \backslash \{e\}$. Since $e \in \F(e^{\prime})$, we have $\F(e^{\prime}) \ne \emptyset$, which means $e^{\prime}$ is not a tight edge before $e$ is processed. 

Thus, when $e$ is selected for inclusion in $\M$, none of the edges in $\G(e) \backslash \{e\}$ is tight. Hence, all edges of $\G(e)$ are in the stacks when we are about to process $e$. Therefore, the total gain contributed by edges in $\G(e)$ can be attributed to the weight of $e$, and by \hyref{Proposition}{prop_algo1_aux}, we have
\begin{equation*}
\begin{split}
w(\M) = \sum_{e  \in \M} w_e &\geq \frac{1}{2}\left(\sum_{e \in \M} \sum_{e^{\prime} \in \G(e)} \Delta_{\alpha}^{e^{\prime}}\right) \geq \frac{1}{2}\left(\sum_{e \in \bigcup_{\ell \in [k]} S^{\ell}} \Delta_{\alpha}^{e}\right) \\
    &= \frac{1}{2}\left(\sum_{e \in \bigcup_{\ell \in [k]} E^{\ell}} \Delta_{\alpha}^{e}\right) = \frac{1}{2} \left(\sum_{u \in V}\alpha_u\right).
\end{split}
\end{equation*}
Let $\{x_e^{*}\}_{e \in E}$ be an optimal solution to the primal LP in \fref{fig:lp}. By \hyref{Proposition}{prop_algo1_dual} and the LP duality we have
$$w(\M^{*}) \leq \sum_{e \in E}w_ex_e^{*} \leq (1+\epsilon) \left( \sum_{u \in V} \alpha_u\right) \leq 2(1+\epsilon)w(\M).$$
\end{proof}

It is straightforward to see that both strategies use only one pass over the streams (Step~4 of \hyperref[fig:algo_psmwm]{PS-MWM} and Step~5 of \hyperref[fig:algo_psmwm_ds]{PS-MWM-DS}). \hyref{Theorem}{thm_psmwm} now follows by combining the results in \lemref{lemma_algo1_space_ns}, \lemref{lemma_algo1_time_ns}, \lemref{lemma_algo1_approx},  \lemref{lemma_algo1_space_ds}, and the analysis of the \emph{deferrable strategy} sketched in \hyref{Appendix}{subsec:mwm_ds}.
\section{Algorithms for Non-Uniform Memory Access Costs}
\label{sec:numa}

In this section, we extend the algorithm from \hyref{Section}{sec:mwm} to account for the non-uniform memory access (NUMA) costs present in real-world machines. 

In a poly-streaming algorithm, each processor may receive an arbitrary subset of the input, making it difficult to maintain memory access locality. 
Modern shared-memory machines, as illustrated in \fref{fig:model_shared}, have non-uniform memory access costs and far fewer memory controllers than processors~\cite{Milan}.
As a result, memory systems with such limitations would struggle to handle the high volume of concurrent, random memory access requests generated by poly-streaming algorithms, leading to significant delays.

\begin{figure}[h]
    \centering
    \begin{parbox}{4.8in}{    
        \begin{mdframed}[linewidth=0.5pt, roundcorner=7pt, backgroundcolor=gray!5, frametitle={\underline{PS-MWM-LD$(V, \ell, j, \epsilon)$}}]
        \begin{enumerate}
            \item In parallel initialize $lock_u$, and set $\alpha_u$ and $mark_u$ to $0$ for all $u \in V$ \textcolor{teal}{\\/* processor $\ell$ initializes or sets $\bigTheta{n/k}$ locks/variables */}    
            \item In parallel initialize $lock_u^j$, and set $\alpha_u^j$ to $0$ for all $u \in V$ \textcolor{teal}{\\/* processor $\ell$ initializes or sets $\Theta\br{n/\br{k/r}}$ locks / variables */}
            \item In parallel initialize $glock^j$ \textcolor{teal}{/* one processor initializes for group $j$ */}
            \item  $S^{\ell} \leftarrow \emptyset$ \textcolor{teal}{/* initialize an empty stack */}
            \item for each edge $e=\{u,v\}$ in $\ell${th} stream do
            \begin{enumerate}
                \item \aref{Process-Edge-LD$(e, S^{\ell}, \epsilon$)}{fig:algo_proc_edge_ld}
            \end{enumerate}
            \item wait for all processors to complete execution of Step~4 \textcolor{teal}{/* a barrier */}
            \item $\M^{\ell} \leftarrow$ \aref{Process-Stack$(S^{\ell})$}{fig:algo_proc_stack}
            \item return $\M^{\ell}$
        \end{enumerate}
        \end{mdframed}    
    
    }
    \end{parbox}    
    \caption{A generalization of \aref{Algorithm PS-MWM}{fig:algo_psmwm} 
 using local dual variables.}
    \label{fig:algo_psmwm_ld}
\end{figure}

We now describe a generalization of the algorithm from \hyref{Section}{sec:mwm} that localizes a significant portion of each processor's memory access to its near memory. This generalization applies to both edge-processing strategies introduced in \hyref{Section}{subsec:mwm_ps}. We focus on the \emph{non-deferrable strategy}. (The deferrable strategy generalizes in the same way, following the same relationship between the two strategies as in the specialized case.)

The runtime of \aref{Process-Edge}{fig:algo_proc_edge} is dominated by accesses to the dual variables $\{\alpha_u\}_{u \in V}$. By assigning a dedicated stack to each processor, we have substantially localized accesses associated with edges in that stack. However, since a large fraction of edges is typically not included in the stacks, the runtime remains dominated by accesses to dual variables associated with these discarded edges. We therefore describe an algorithm that localizes these accesses to memory near the processor.

To localize accesses to the dual variables $\{\alpha_u\}_{u \in V}$, we observe that these variables increase monotonically during the streaming phase. 
This observation motivates a design in which a subset of processors maintains local copies of the variables and can discard a substantial number of edges without synchronizing with the global copy.
When a processor includes an edge in its stack, it increments the corresponding dual variables in the global copy by the gain of the edge and synchronizes its local copy accordingly.
As a result, some local copies may lag behind the global copy, but they can be synchronized when needed.

A general scheme for allocating dual variables is as follows. The set of $k$ processors is partitioned into $r$ groups.
For simplicity, we assume that $k$ is a multiple of $r$, so each group contains exactly $k/r$ processors. 
For $r>1$, in addition to a global copy of dual variables, we maintain $r$ \emph{local copies} $\{\alpha_u^j\}_{u \in V}$, one for each $j \in [r]$. 
Group $j$ consists of the processors $\{\ell \in [k] \mid \lfloor \ell / (k/r) \rfloor = j\}$, and uses $\{\alpha_u^j\}_{u \in V}$ as its local copy of the dual variables.
\aref{Algorithm PS-MWM}{fig:algo_psmwm} corresponds to the special case $r = 1$, where all processors operate using only the global copy of the dual variables.

\aref{Algorithm PS-MWM-LD}{fig:algo_psmwm_ld}, along with its subroutine \hyperref[fig:algo_proc_edge_ld]{Process-Edge-LD}, incorporates local dual variables in addition to the global ones.
In Step~2, processors in each group $j \in [r]$ collectively initialize their group's local copies of dual variables and locks, followed by initializing a group lock in Step~3.
All other steps of the algorithm are identical to those in \hyperref[fig:algo_psmwm]{PS-MWM}.

\begin{figure}
    \centering
    \begin{parbox}{5.4in}{    
        \begin{mdframed}[linewidth=0.5pt, roundcorner=7pt, backgroundcolor=gray!5, frametitle={\underline{Process-Edge-LD$(e=\{u,v\}, S^{\ell}, \epsilon)$}}]
        \textcolor{teal}{/* Assumes access to $\{\alpha_x\}_{x \in V}$, $\{lock_x\}_{x \in V}$, $\{\alpha_x^j\}_{x \in V}$, $\{lock_x^j\}_{x \in V}$, and $glock^j$ */}
    \begin{enumerate}
        \item if $w_e \leq (1+\epsilon)(\alpha_u^j +\alpha_v^j)$ then return
        \item repeatedly try to acquire $lock_u^j$ and $lock_v^j$ in lexicographic order of $u$ and $v$ as long as $w_e > (1+\epsilon)(\alpha_u^j +\alpha_v^j)$        
        \item if $w_e \leq (1+\epsilon)(\alpha_u^j +\alpha_v^j)$ then release $lock_u^j$ and $lock_v^j$, and return      
        \item repeatedly try to acquire $glock^j$
        \item \aref{Process-Edge$(e, S^{\ell}, \epsilon$)}{fig:algo_proc_edge}
        \item $\alpha_u^j \leftarrow \alpha_u$ and $\alpha_v^j \leftarrow \alpha_v$ \textcolor{teal}{/* synchronization of local and global dual variables */}
        \item release $lock_u^j$, $lock_v^j$, $glock^j$ and return
    \end{enumerate}
        \end{mdframed}    
 
    }
    \end{parbox}    
    \caption{A subroutine used in \aref{Algorithm PS-MWM-LD}{fig:algo_psmwm_ld}.}
    \label{fig:algo_proc_edge_ld}
\end{figure}

In the subroutine \hyperref[fig:algo_proc_edge_ld]{Process-Edge-LD}, Step~5 implements the non-deferrable strategy. 
Steps~1--3 and Step~6 enforce the localization of access to dual variables.
Steps~2--3 ensure that, at any given time, each global dual variable is accessed by at most one processor per group; we refer to this processor as the \emph{delegate} of the group for that variable.
Thus, a processor executing Steps~4--6 serves as a delegate of its group for the corresponding dual variables during that time. 
In Step~6, after completing updates to the global variables, the delegate synchronizes its group's local copy in $\bigO{1}$ time.
As a result, the waiting time on a local variable in Step~2 is bounded by the total time spent by the corresponding delegates, up to constant factors.

The delegates in each group handle vertex-disjoint edges, so concurrent executions of Step~6 would have been safe. However, the lock in Step~4 ensures that at most one delegate per group executes Step~5 of \hyperref[fig:algo_proc_edge]{Process-Edge}. Regardless of these design choices, the behavior of delegates executing Step~5 concurrently mirrors that of processors competing for exclusive access to global dual variables in \hyperref[fig:algo_psmwm]{PS-MWM}.

The following lemma highlights the benefit of using \aref{Algorithm PS-MWM-LD}{fig:algo_psmwm_ld}; a proof is included in \hyref{Appendix}{subsec:thm_hd}.

\begin{restatable}{lemma}{LemmaLD}
\label{lemma_psmwm_ld}
For any constant $\epsilon > 0$, in the streaming phase of \aref{Algorithm PS-MWM-LD}{fig:algo_psmwm_ld}, processors in all $r$ groups collectively access global variables a total of $\bigO{r\cdot n \log n }$ times.
\end{restatable}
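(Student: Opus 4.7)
The plan is to count the number of times any delegate reaches Steps~4--6 of \aref{Process-Edge-LD}{fig:algo_proc_edge_ld}, since each such access touches only $\bigO{1}$ global variables ($\alpha_u$, $\alpha_v$ and the corresponding locks) outside of the Step~2 retry loop of \aref{Process-Edge}{fig:algo_proc_edge}. The retry accesses can be bounded by an argument analogous to \lemref{lemma_free}, using that at most $r$ delegates contend for the global locks at any one time (since each group holds at most one $glock^j$). Before the main argument I would fix the invariant $\alpha_u^j \leq \alpha_u$ for every $u$ and $j$ at all times: this is preserved because global variables are monotonically non-decreasing and each local copy is only updated by synchronization with the current global value in Step~6 of \aref{Process-Edge-LD}{fig:algo_proc_edge_ld}.

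The heart of the argument is an \emph{advance claim}: every delegate access by group $j$ for an edge $e = \{u,v\}$ strictly increases at least one of $\alpha_u^j$ and $\alpha_v^j$ when Step~6 executes. If the edge is added to the stack inside the nested call to \aref{Process-Edge}{fig:algo_proc_edge}, both global variables grow by $g_e > 0$, so after Step~6 both locals strictly advance. Otherwise the global check in Step~1 of \aref{Process-Edge}{fig:algo_proc_edge} fails while the local check in Steps~1 and~3 of \aref{Process-Edge-LD}{fig:algo_proc_edge_ld} had passed; combining the two inequalities gives $\alpha_u^j + \alpha_v^j < \frac{w_e}{1+\epsilon} \leq \alpha_u + \alpha_v$, which together with the componentwise invariant forces at least one of the two locals to strictly advance upon the sync. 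I expect this filtered-at-global-check case to be the main obstacle, since it is the only place the local-caching design is actually tested.

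With the advance claim in hand, I would bound how many times each local variable can strictly advance. Every strict advance sets $\alpha_u^j$ to some value that global $\alpha_u$ has attained at that moment, so the advances of $\alpha_u^j$ form a strictly increasing subsequence of the trajectory of $\alpha_u$. By the same geometric-growth reasoning recalled in \hyref{Section}{subsec:mwm_ps}, every increment to $\alpha_u$ after the first inflates it by a factor of at least $(1+\epsilon)$; under the standing assumption $W = \bigO{poly(n)}$, global $\alpha_u$ therefore takes only $\bigO{\frac{\log n}{\epsilon}}$ distinct values, so $\alpha_u^j$ can strictly advance at most $\bigO{\frac{\log n}{\epsilon}}$ times.

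Putting it together, the number of delegate accesses by group $j$ is at most the total number of strict local advances it produces, which sums to $\sum_{u \in V} \bigO{\frac{\log n}{\epsilon}} = \bigO{\frac{n \log n}{\epsilon}}$. Summing over the $r$ groups, multiplying by the $\bigO{1}$ global variables touched per delegate access, and folding in the Step~2 retry accesses yields $\bigO{r \cdot n \log n}$ global accesses for constant $\epsilon$, matching the stated bound.
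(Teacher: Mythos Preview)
Your proposal is correct and follows essentially the same two-part structure as the paper's proof: bound the number of delegate accesses via the growth of the local duals $\alpha_u^j$, then fold in the Step~2 retry iterations of \hyperref[fig:algo_proc_edge]{Process-Edge} using the fact that at most $r$ delegates (one per group, thanks to $glock^j$) contend for global locks at a time. Your ``advance claim'' is in fact slightly more careful than the paper's version---the paper asserts that both $\alpha_u^j$ and $\alpha_v^j$ grow by a factor $(1+\epsilon)$ at each Step~6, whereas you correctly observe that in the not-stacked case only one of the two is guaranteed to strictly increase; your weaker claim still suffices for the same $\bigO{r\cdot n\log n}$ count.
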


In contrast to the bound in \lemref{lemma_psmwm_ld}, the streaming phase of \aref{Algorithm PS-MWM}{fig:algo_psmwm} accesses global variables $\bigOmega{m}$ times or up to $\bigO{m +  k\cdot n \log n}$ times. 

\aref{Algorithm PS-MWM-LD}{fig:algo_psmwm_ld}, together with the generalization of the deferrable strategy, leads to the following result (a proof is included in \hyref{Appendix}{subsec:thm_hd}).

\begin{theorem}
\label{thm_psmwm_ld}
Let $k$ processors be partitioned into $r$ groups, each with its own shared local memory. 

For any constant $\epsilon > 0$, there exists a \emph{single-pass} poly-streaming algorithm for the maximum weight matching problem that achieves a $\br{2+\epsilon}$-approximation. It admits a CREW PRAM implementation with runtime $\bigOtilde{\lmax + n}$. 

If $\lmin = \Omega\br{n}$, the algorithm achieves $\bigO{\log n}$ amortized per-edge processing time using $\bigOtilde{k + r\cdot n}$ space. For arbitrarily balanced streams, it uses either:
\begin{itemize}
    \item $\bigOtilde{k+r\cdot n}$ space and $\bigOtilde{n}$ per-edge processing time, or
    \item $\bigOtilde{k\cdot n}$ space and $\bigO{1}$ per-edge processing time.
\end{itemize}
The processors collectively access the global memory $\bigOtilde{r\cdot n}$ times.
\end{theorem}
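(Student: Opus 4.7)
{thm_psmwm_ld}.} The strategy is to lift each of the arguments used for \hyref{Theorem}{thm_psmwm} to the NUMA setting by exploiting the fact that, within a group, only a single processor at a time (the \emph{delegate}) ever touches the global dual variables for a given edge. Concretely, the plan proceeds in four steps, mirroring \lemref{lemma_algo1_approx}, \lemref{lemma_algo1_space_ns}, \lemref{lemma_algo1_time_ns}, and the deferrable-strategy analysis sketched in \hyref{Appendix}{subsec:mwm_ds}.

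First, I would argue that the approximation ratio is unchanged. The global variables $\{\alpha_u\}_{u\in V}$ are modified only inside Step~5 of \aref{Process-Edge-LD}{fig:algo_proc_edge_ld}, which invokes the unmodified \aref{Process-Edge}{fig:algo_proc_edge}. The locks on local variables and the per-group lock $glock^j$ together enforce that concurrent delegates behave exactly like the processors in \hyperref[fig:algo_psmwm]{PS-MWM} contending for the global locks. In particular, an edge is pushed onto $S^{\ell}$ only when the global tightness check $w_e>(1+\epsilon)(\alpha_u+\alpha_v)$ holds, so \hyref{Proposition}{prop_algo1_aux} and \hyref{Proposition}{prop_algo1_dual} carry over verbatim. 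Since the post-processing phase invokes the same \aref{Process-Stack}{fig:algo_proc_stack}, \lemref{lemma_tight_follower}--\lemref{lemma_free_pp} apply without modification, and \lemref{lemma_algo1_approx} yields the $(2+\epsilon)$-approximation.

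Second, I would account for the space. Each of the $r$ groups maintains $n$ local dual variables with their locks, contributing $\bigOtilde{r\cdot n}$. The global copy and the stacks together add $\bigOtilde{n}$, and each of the $k$ processors uses $\bigO{1}$ working space, giving $\bigOtilde{k+r\cdot n}$ in total for the non-deferrable strategy. The deferrable-strategy variant inflates the per-processor stack to hold all skipped edges in a bounded buffer, which preserves the $\bigOtilde{k\cdot n}$ bound of its uniform-access counterpart (since the $r\cdot n$ term is absorbed into $k\cdot n$ whenever $r\le k$).

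Third, I would handle the runtime, the global access count, and the per-edge processing time. The global access bound $\bigOtilde{r\cdot n}$ is exactly \lemref{lemma_psmwm_ld}. For the per-edge cost, observe that a processor blocked in Step~2 of \aref{Process-Edge-LD}{fig:algo_proc_edge_ld} is waiting on a local lock held by its group's delegate; by the same DAG-based argument as in \lemref{lemma_free} applied to the group's local-lock dependency graph, a processor waits at most as long as the delegate waits on the global locks. Since the delegates across all groups collectively complete Step~5 at most $\bigOtilde{r\cdot n}$ times, the total blocking time incurred by any single processor is $\bigOtilde{n}$, matching the worst-case per-edge bound. Summing over $|E^{\ell}|\le\lmax$ edges, Step~5 of \hyperref[fig:algo_psmwm_ld]{PS-MWM-LD} runs in $\bigOtilde{\lmax+n}$ time, and the post-processing phase inherits its $\bigOtilde{n}$ runtime from \lemref{lemma_algo1_time_ns}. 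When $\lmin=\bigOmega{n}$, amortizing the $\bigOtilde{n}$ blocking cost over $|E^{\ell}|=\bigOmega{n}$ edges gives $\bigO{\log n}$ per edge, exactly as in \lemref{lemma_algo1_space_ns}.

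The main obstacle, as I see it, is the per-edge processing time argument: unlike in \hyperref[fig:algo_psmwm]{PS-MWM}, a processor may now be blocked first on a local lock and then indirectly on the global lock via its delegate, so starvation-freedom and the $\bigOtilde{n}$ bound must be shown to compose across the two levels of locking. The remaining pieces---space accounting, the single-pass property, and invoking \lemref{lemma_psmwm_ld} for the global access count---are essentially bookkeeping once the delegate abstraction is in place.
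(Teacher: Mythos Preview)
Your plan is close to the paper's own argument, but there are two concrete gaps.

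First, your approximation step is incomplete. You argue that edges pushed onto a stack pass the \emph{global} check $w_e>(1+\epsilon)(\alpha_u+\alpha_v)$, and conclude that \hyref{Proposition}{prop_algo1_dual} carries over verbatim. But dual feasibility requires that every edge \emph{not} pushed onto a stack satisfies $w_e\le(1+\epsilon)(\alpha_u+\alpha_v)$. In \aref{PS-MWM-LD}{fig:algo_psmwm_ld}, an edge may be discarded at Step~1 or Step~3 of \aref{Process-Edge-LD}{fig:algo_proc_edge_ld} because it fails the \emph{local} check $w_e\le(1+\epsilon)(\alpha_u^j+\alpha_v^j)$, never having looked at the global variables. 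The paper closes this by proving and using the invariant $\alpha_u\ge\alpha_u^j$ for all $u,j$ (local copies only lag the global one, since they are only updated by synchronization in Step~6), from which $w_e\le(1+\epsilon)(\alpha_u^j+\alpha_v^j)\le(1+\epsilon)(\alpha_u+\alpha_v)$ follows. You need this invariant explicitly; it does not come for free from ``delegates behave like processors in \hyperref[fig:algo_psmwm]{PS-MWM}.''

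Second, and more substantially, your global-access argument only covers the streaming phase. \lemref{lemma_psmwm_ld} bounds accesses during Step~5 of \hyperref[fig:algo_psmwm_ld]{PS-MWM-LD}, but the theorem asserts $\bigOtilde{r\cdot n}$ global accesses for the entire algorithm. In Step~7 (\aref{Process-Stack}{fig:algo_proc_stack}), all $k$ processors unwind their stacks and read/write the global $\{\alpha_u\}$ and $\{mark_u\}$; with $k$ processors and $\bigOtilde{n}$ stacked edges, this naively costs $\bigOtilde{k\cdot n}$ global accesses, not $\bigOtilde{r\cdot n}$. The paper handles this by an algorithmic modification you do not mention: for $k>r$, replace the $k$ per-processor stacks by $r$ per-group stacks and have a single delegate per group execute Step~7. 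Without this change, the last line of the theorem does not follow from your plan.
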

\section{Empirical Evaluation}
\label{sec:evals}

This section summarizes our evaluation results for \aref{Algorithm PS-MWM-LD}{fig:algo_psmwm_ld}. 
Detailed datasets, experimental setup, and additional comparisons (including with \aref{PS-MWM}{fig:algo_psmwm}) are included in \hyref{Appendix}{sec:app_evals}. Our code will be made available at \url{https://github.com/ahammed-ullah/algodyssey}.

\subsection{Datasets}

\begin{table}[h]
    \centering
    \caption{Summary of datasets. Each collection contains eight graphs (details are included in \hyref{Appendix}{subsec:app_datasets}).}
    \begin{tabular}{ll}
        \toprule        
        Graph Collection & \# of Edges (in billions) \\
        \midrule
        The \emph{SSW graphs} & $1.36-127.4$ \\
        The \emph{BA graphs} & $4.64-550.1$ \\
        The \emph{ER graphs} & $256-4096$ \\
        The \emph{UA-dv graphs} & $275.2-550.1$ \\
        The \emph{UA graphs} & $8.93-1100$ \\
        The \emph{ER-dv graphs} & $32-4096$ \\        
        \bottomrule
    \end{tabular}
    \label{tab:datasets_summary}
\end{table}

Table~\ref{tab:datasets_summary} summarizes our datasets. Each collection consists of eight graphs, with edge counts ranging from one billion to four trillion.
To the best of our knowledge, these represent some of the largest graphs for which matchings have been reported in the literature.
Exact and approximate offline MWM algorithms (see~\cite{pothen2019approximation}) would exceed available memory on the larger graphs.
The first class (SSW) consists of six of the largest graphs from the SuiteSparse Matrix collection~\cite{davis2011university} and two from the Web Data Commons~\cite{meusel2015graph}, which includes the largest publicly available graph dataset. Other classes include synthetic graphs
generated from the Barab{\'a}si--Albert (BA), Uniform Attachment (UA), and \Erdos--\Renyi \ (ER) models~\cite{albert2002statistical, erdos1960evolution, pekoz2013total}.

\subsection{Experimental Setup}
We ran all experiments on a community cluster called Negishi~\cite{McCartney2014}, where each node has an AMD Milan processor with 128 cores running at 2.2 GHz, 256--1024~GB of memory, and the Rocky Linux 8 operating system version~8.8. 
The cores are organized in a hierarchy: groups of eight cores constitute a core complex that share an L3 cache.
Eight core complexes form a socket, and they share four dual-channel memory controllers; two sockets constitute a Milan node~\cite{Milan}. Memory access within a socket is approximately three times faster than across sockets. 

We implemented the algorithms in \texttt{C++} and compiled the code using the \textit{gcc} compiler (version 12.2.0) with the \texttt{-O3} optimization flag. For shared-memory parallelism, we used the OpenMP library (version $4.5$). All experiments used $\epsilon = 1e-6$. Reported values are the average over five runs. \hyref{Appendix}{subsec:app_setup} contains additional details of the experimental setup, including the generation of edge streams.

\subsection{Space}
\label{subsec:evals_space}

\fref{fig:space} summarizes the space usage of our algorithm. For $k=1$, the algorithm of Paz and Schwartzman~\cite{paz20182+}, we store one copy of the dual variables, stack, and matching. For $k>1$, our algorithm stores $r+1$ copies of the dual variables (global and local), stacks, matching, and locks. We choose the values of $r$ based on the system architecture and the number of streams (see \hyref{Appendix}{subsec:app_space} for details).

\pgfplotstableread{
Graph    k1     k128    gsize
mycielskian20	5.54E-02	8.26E-02	3.03E+01
com-Friendster	2.06E+00	7.10E+00	4.06E+01
GAP-kron	1.06E+00	5.76E+00	4.74E+01
GAP-urand	7.13E+00	1.27E+01	4.85E+01
MOLIERE\_2016	1.78E+00	2.89E+00	7.47E+01
AGATHA\_2015	4.17E+00	1.89E+01	1.30E+02
WDC\_2014	2.95E+01	9.73E+01	1.44E+03
WDC\_2012	1.31E+02	2.23E+02	2.86E+03
}\spacessw

\pgfplotstableread{
Graph    k1     k128    gsize
BA\_512	0.45	1.61	103.71
BA\_1024	0.45	1.61	199.71
BA\_2048	0.45	1.62	391.71
BA\_4096	0.45	1.69	775.71
BA\_8192	0.82	1.00	1543.71
BA\_16384	0.82	1.10	3079.71
BA\_32768	0.82	1.14	6151.71
BA\_65536	0.82	1.19	12295.71
}\spaceba

\pgfplotstableread{
Graph    k1     k128    gsize
ER2\_256	1.96E-01	3.32E-01	5.72E+03
ER1\_512	1.96E-01	3.32E-01	1.14E+04
ER2\_512	2.04E-01	3.96E-01	1.14E+04
ER1\_1024	2.04E-01	3.96E-01	2.29E+04
ER2\_1024	3.92E-01	6.64E-01	2.29E+04
ER1\_2048	3.92E-01	6.64E-01	4.58E+04
ER2\_2048	4.09E-01	7.93E-01	4.58E+04
ER1\_4096	4.09E-01	7.93E-01	9.16E+04
}\spaceer

\pgfplotstableread{
Graph    k1     k128    gsize
UA\_512\_537	52.50390625	128.90625	6153.681357
UA\_1024\_537	52.50390625	131.9335938	12297.68136
UA\_1024\_268	26.25	65.86914063	6152.681357
UA\_2048\_268	26.25	71.86523438	12296.68136
UA\_2048\_134	13.12304688	32.25585938	6152.181357
UA\_4096\_134	13.12304688	36.53320313	12296.18136
UA\_4096\_67	6.5625	15.81054688	6151.931357
UA\_8192\_67	6.5625	18.26171875	12295.93136
}\spaceuadv

\pgfplotstableread{
Graph    k1     k128    gsize
UA\_1024	0.45	1.65	199.71
UA\_2048	0.45	1.70	391.71
UA\_4096	0.82	1.76	775.71
UA\_8192	0.82	1.91	1543.71
UA\_16384	0.88	2.04	3079.71
UA\_32768	0.88	2.17	6151.71
UA\_65536	1.63	2.36	12295.71
UA\_131072	1.63	2.57	24583.71
}\spaceua

\pgfplotstableread{
Graph    k1     k128    gsize
ER128\_32	0.41	0.79	715.27
ER64\_64	0.41	0.79	1430.52
ER32\_128	0.41	0.79	2861.03
ER16\_256	0.41	0.79	5722.06
ER8\_512	0.41	0.79	11444.09
ER4\_1024	0.41	0.79	22888.20
ER2\_2048	4.09E-01	7.93E-01	4.58E+04
ER1\_4096	4.09E-01	7.93E-01	9.16E+04
}\spaceerdv

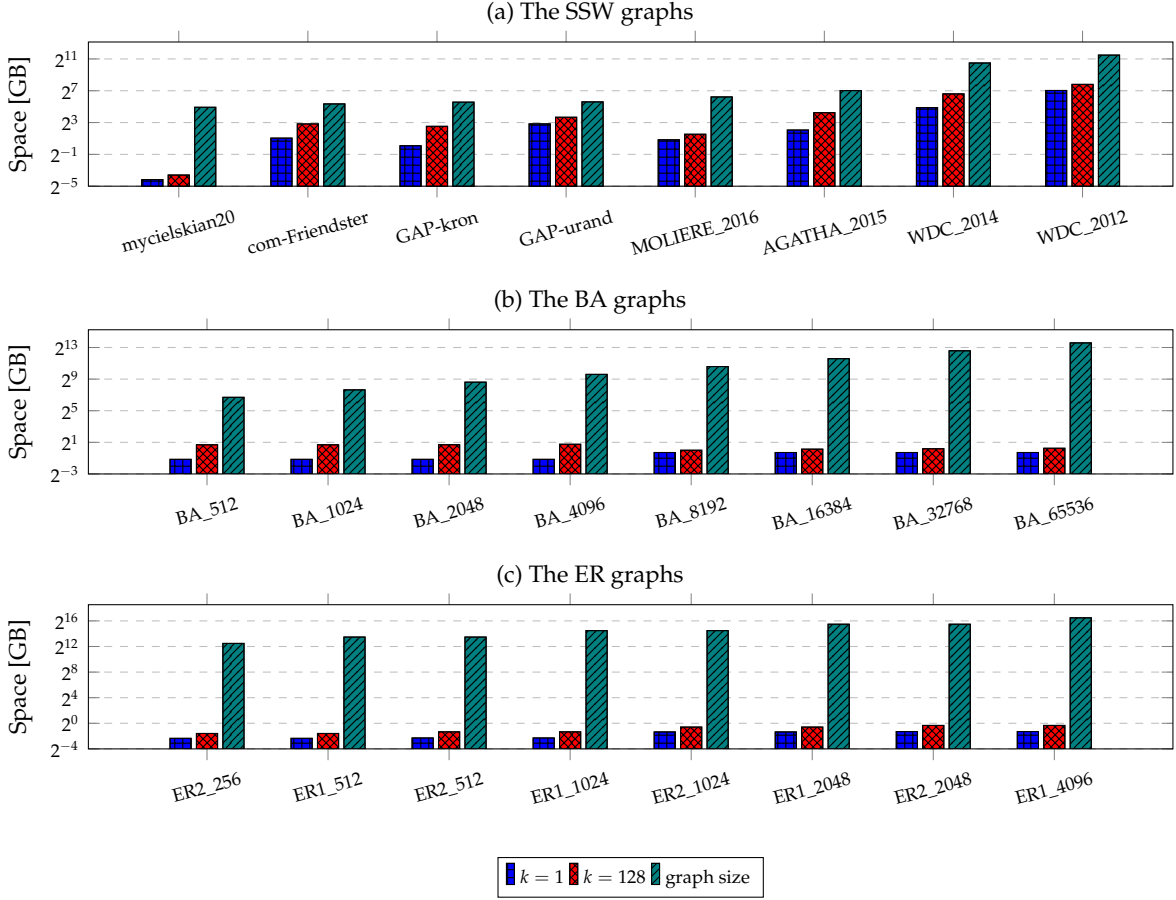
\begin{figure}[h]
    \centering
    \begin{subfigure}[t]{1.0\linewidth}
    \centering
    \caption{The SSW graphs}
    \begin{tikzpicture}
\begin{semilogyaxis}[        
    ymin=0.03125,
    width=16cm,
    height=3.5cm,
    symbolic x coords={mycielskian20,com-Friendster,GAP-kron, GAP-urand, MOLIERE\_2016, AGATHA\_2015, WDC\_2014, WDC\_2012},
    ylabel={Space [GB]},
    ylabel style={font=\small},
    ybar, bar width=8pt,
    ymajorgrids=true,
    grid style=dashed,        
    log basis y={2},
    log origin=infty,
    point meta=explicit,
    x tick label style={rotate=15},
    x tick label style={font=\scriptsize},
    y tick label style={font=\scriptsize},
    ytick distance=16
    ]

\addplot [fill=blue, postaction={pattern=grid}] table [meta=k1] \spacessw;

\addplot [fill=red, postaction={pattern=crosshatch}] table [y=k128, meta=k128] \spacessw;

\addplot [fill=teal, postaction={pattern=north east lines}] table [y=gsize, meta=gsize] \spacessw;
\end{semilogyaxis}
\end{tikzpicture}
    \end{subfigure}
    \begin{subfigure}[t]{1.0\linewidth}
    \centering
    \caption{{The BA graphs}}
    \begin{tikzpicture}
\begin{axis}[ 
    ymin=0.125,
    width=16cm,
    height=3.5cm,    
    symbolic x coords={BA\_512, BA\_1024, BA\_2048, BA\_4096, BA\_8192, BA\_16384, BA\_32768, BA\_65536},
    ylabel={Space [GB]},
    ylabel style={font=\small},
    enlargelimits=0.14,
    enlarge y limits=upper,    
    ybar, bar width=8pt,
    ymajorgrids=true,
    grid style=dashed,
    ymode=log,
    log basis y={2},
    log origin=infty,
    point meta=explicit,
    x tick label style={rotate=15},
    x tick label style={font=\scriptsize},
    y tick label style={font=\scriptsize},
    ytick distance=16
    ]

\addplot [fill=blue, postaction={pattern=grid}] table [meta=k1] \spaceba;

\addplot [fill=red, postaction={pattern=crosshatch}] table [y=k128, meta=k128] \spaceba;

\addplot [fill=teal, postaction={pattern=north east lines}] table [y=gsize, meta=gsize] \spaceba;
\end{axis}
\end{tikzpicture}
    \end{subfigure}
    \begin{subfigure}[t]{1.0\linewidth}
    \centering
    \caption{The ER graphs}
    \begin{tikzpicture}
\begin{axis}[     
    ymin=0.0625,
    width=16cm,
    height=3.5cm,
    symbolic x coords={ER2\_256, ER1\_512, ER2\_512, ER1\_1024, ER2\_1024, ER1\_2048, ER2\_2048, ER1\_4096},
    ylabel={Space [GB]},
    ylabel style={font=\small},
    enlargelimits=0.14,
    enlarge y limits=upper,    
    ybar, bar width=8pt,
    ymajorgrids=true,
    grid style=dashed,
    ymode=log,
    log basis y={2},
    log origin=infty,
    legend style={    
        font=\scriptsize,
        at={(0.5,-0.75)},
        anchor=north,
        legend columns=-1
        },
    ymode=log,
    log basis y={2},    
    point meta=explicit,
    x tick label style={rotate=15},
    x tick label style={font=\scriptsize},
    y tick label style={font=\scriptsize},
    ytick distance=16
    ]

\addplot [fill=blue, postaction={pattern=grid}] table [meta=k1] \spaceer;

\addplot [fill=red, postaction={pattern=crosshatch}] table [y=k128, meta=k128] \spaceer;

\addplot [fill=teal, postaction={pattern=north east lines}] table [y=gsize, meta=gsize] \spaceer;
\legend{$k=1$, $k=128$, graph size}
\end{axis}
\end{tikzpicture}
    \end{subfigure}
    \caption{Memory used by the algorithm and the corresponding \emph{graph size} (space needed to store the entire graph in CSR format). Note that the $y$-axes are in a logarithmic scale.}
    \label{fig:space}
\end{figure}

The maximum space used by our algorithm is $223\,\mathrm{GB}$, for the web graph WDC\_2012.
In comparison, storing this graph in compressed sparse row (CSR) format would require over $2800\,\mathrm{GB}$. 
Storing the largest graph in our datasets (ER1\_4096) in CSR would require more than $91,600\,\mathrm{GB}$ ($89.45\,\mathrm{TB}$), for which our algorithm used less than $0.8\,\mathrm{GB}$.

\subsection{Solution Quality}
\label{subsec:evals_quality}

\textbf{min-OPT percent.}  In \hyref{Appendix}{subsec:app_duals}, we describe different ways to get \emph{a posteriori} upper bounds on the weight of a MWM $w\br{M^{*}}$, using the values of the dual variables.  
Let $Y_{min}$ denote the minimum value of these upper bounds. If $\M$ is a matching in the graph returned by any algorithm, then we have $\frac{w\br{\M}}{w\br{\M^{*}}} \geq \frac{w\br{\M}}{Y_{min}}$. Hence, $\frac{w(\M)}{Y_{min}} \times 100$ gives a lower bound on the percentage of the maximum weight $w\br{\M^{*}}$ obtained by $\M$. We use \emph{min-OPT percent} to denote the fraction $\frac{w(\M)}{Y_{min}} \times 100$.

\pgfplotstableread{
Graph    k1     k128    algdu    algss
mycielskian20	72.91	72.89	82.42	50.59
com-Friendster	75.46	72.84	78.32	61.74
GAP-kron	78.38	75.98	80.81	65.47
GAP-urand	74.45	70.78	72.91	68.35
MOLIERE\_2016	70.66	68.38	72.17	65.19
AGATHA\_2015	78.33	78.33	79.64	69.64
WDC\_2014	72.25	72.26	74.19	63.39
WDC\_2012	73.72	73.72	76.43	65.43
}\sqssw

\pgfplotstableread{
Graph    k1     k128    algdu    algss
BA\_512	82.94	83.06	88.83	65.77
BA\_1024	83.03	83.14	90.13	65.57
BA\_2048	83.19	83.24	90.09	61.57
BA\_4096	83.09	83.14	90.09	59.48
BA\_8192	83.04	83.05	90.13	58.21
BA\_16384	82.94	82.94	90.18	57.47
BA\_32768	83.01	83.00	90.10	57.11
BA\_65536	83.16	83.16	90.05	56.85
}\sqba

\pgfplotstableread{
Graph    k1     k128    algdu    algss
ER2\_256	84.46	84.46	88.90	77.89
ER1\_512	82.85	82.87	93.73	76.44
ER2\_512	83.98	83.99	88.41	77.46
ER1\_1024	81.70	81.70	92.50	75.37
ER2\_1024	83.38	83.39	87.77	76.93
ER1\_2048	80.65	80.65	91.38	74.40
ER2\_2048	83.31	83.32	87.68	76.85
ER1\_4096	80.06	80.06	90.76	73.86
}\sqer

\pgfplotstableread{
Graph    k1     k128    algdu    algss
UA\_4096\_67	86.47	86.45	90.51	76.35
UA\_8192\_67	89.30	89.28	93.03	77.48
UA\_2048\_134	86.37	86.35	90.37	76.24
UA\_4096\_134	89.21	89.20	92.90	77.41
UA\_1024\_268	86.24	86.23	90.17	76.10
UA\_2048\_268	89.09	89.08	92.70	77.33
UA\_512\_537	86.04	86.03	89.86	75.89
UA\_1024\_537	88.90	88.90	92.39	77.20
}\squadv

\pgfplotstableread{
Graph    k1     k128    algdu    algss
UA\_1024	69.12	69.02	71.23	67.88
UA\_2048	69.72	69.60	72.58	67.34
UA\_4096	74.23	74.09	77.79	70.17
UA\_8192	78.93	78.79	82.94	72.84
UA\_16384	83.22	83.08	87.32	75.00
UA\_32768	86.73	86.58	90.65	76.50
UA\_65536	89.39	89.25	93.01	77.46
UA\_131072	91.34	91.19	95.24	78.01
}\squa

\pgfplotstableread{
Graph    k1     k128    algdu    algss
ER128\_32	85.40	85.40	90.10	78.39
ER64\_64	85.39	85.39	90.14	78.47
ER32\_128	85.27	85.28	89.92	78.49
ER16\_256	85.17	85.17	89.71	78.48
ER8\_512	85.03	85.03	89.51	78.40
ER4\_1024	84.62	84.61	89.07	78.05
ER2\_2048	83.31	83.32	87.68	76.85
ER1\_4096	80.06	80.06	90.76	73.86
}\sqerdv

\pgfplotstableread{
Graph    relaxed     algdu    greedy
mycielskian20	72.91	82.42	62.21
com-Friendster	75.46	78.32	71.83
GAP-kron	78.38	80.81	71.11
GAP-urand	74.45	72.91	83.87
MOLIERE\_2016	70.66	72.17	73.69
AGATHA\_2015	78.33	79.64	78.45
WDC\_2014	72.25	74.19	75.06
BA\_512	82.94	88.83	82.18
}\sqgreedy

\pgfplotstableread{
Graph    relaxed     algdu    greedy
BA\_1024	83.03	90.13	82.08
BA\_2048	83.19	90.09	81.99
BA\_4096	83.09	90.09	82.04
UA\_1024	69.12	71.23	88.04
UA\_2048	69.72	72.58	86.86
UA\_4096	74.23	77.79	89.99
ER128\_32	85.40	90.10	99.21
ER64\_64	85.39	90.14	99.30
}\sqgreedytwo

\begin{figure}[h]
    \begin{subfigure}[t]{1.0\linewidth}
    \centering    
    \caption{The SSW graphs}
    \begin{tikzpicture}
\begin{axis}[
    ymin=40,    
    width=16cm,
    height=3.5cm,    
    ylabel={min-OPT \%},
    ylabel style={font=\small},
    enlargelimits=0.14,
    enlarge y limits=upper,     
    ybar, bar width=6pt,
    point meta=explicit,
    symbolic x coords={mycielskian20,com-Friendster,GAP-kron, GAP-urand, MOLIERE\_2016, AGATHA\_2015, WDC\_2014, WDC\_2012},
    x tick label style={rotate=15},
    x tick label style={font=\scriptsize},
    y tick label style={font=\scriptsize},
    ytick distance=10,
    nodes near coords={}    
]
\addplot [fill=teal, postaction={pattern=north east lines}] table [meta=k1] \sqssw;
\addplot [fill=red, postaction={pattern=crosshatch}] table [y=k128, meta=k128] \sqssw;
\addplot [fill=blue, postaction={pattern=grid}] table [y=algdu, meta=algdu] \sqssw;
\addplot [fill=orange, postaction={pattern=dots}] table [y=algss, meta=algss] \sqssw;
\end{axis}
\end{tikzpicture}
    \end{subfigure}    
    \begin{subfigure}[t]{1.0\linewidth}
    \centering
    \caption{The BA graphs}
    \begin{tikzpicture}
\begin{axis}[
    ymin=50,    
    width=16cm,
    height=3.5cm,    
    ylabel={min-OPT \%},
    ylabel style={font=\small},
    enlargelimits=0.14,
    enlarge y limits=upper,    
    ybar, bar width=6pt,
    point meta=explicit,
    symbolic x coords={BA\_512, BA\_1024, BA\_2048, BA\_4096, BA\_8192, BA\_16384, BA\_32768, BA\_65536},
    x tick label style={rotate=15},    
    x tick label style={font=\scriptsize},
    y tick label style={font=\scriptsize},
    ytick distance=10,
    nodes near coords={}    
]
\addplot [fill=teal, postaction={pattern=north east lines}] table [meta=k1] \sqba;
\addplot [fill=red, postaction={pattern=crosshatch}] table [y=k128, meta=k128] \sqba;
\addplot [fill=blue, postaction={pattern=grid}] table [y=algdu, meta=algdu] \sqba;
\addplot [fill=orange, postaction={pattern=dots}] table [y=algss, meta=algss] \sqba;
\end{axis}
\end{tikzpicture}    
    \end{subfigure}
    \begin{subfigure}[t]{1.0\linewidth}
    \centering
    \caption{The ER graphs}
    \begin{tikzpicture}
\begin{axis}[
    ymin=70,    
    width=16cm,
    height=3.5cm,    
    ylabel={min-OPT \%},
    ylabel style={font=\small},
    enlargelimits=0.14,
    enlarge y limits=upper,
    legend style={
        font=\scriptsize,
        at={(0.5,-0.75)},
        anchor=north,
        legend columns=-1        
    },
    ybar, bar width=6pt,    
    point meta=explicit,
    symbolic x coords={ER2\_256, ER1\_512, ER2\_512, ER1\_1024, ER2\_1024, ER1\_2048, ER2\_2048, ER1\_4096},
    x tick label style={rotate=15},
    x tick label style={font=\scriptsize},
    y tick label style={font=\scriptsize},
    ytick distance=5,
    nodes near coords={}    
]
\addplot [fill=teal, postaction={pattern=north east lines}] table [meta=k1] \sqer;
\addplot [fill=red, postaction={pattern=crosshatch}] table [y=k128, meta=k128] \sqer;
\addplot [fill=blue, postaction={pattern=grid}] table [y=algdu, meta=algdu] \sqer;
\addplot [fill=orange, postaction={pattern=dots}] table [y=algss, meta=algss] \sqer;
\legend{$k=1$, $k=128$, ALG-d, ALG-s}
\end{axis}
\end{tikzpicture}
    \end{subfigure}
    \caption{Comparisons of \emph{min-OPT percent} obtained by different algorithms. \emph{ALG-d} denotes the best results from four dual update rules described in \hyref{Appendix}{subsec:app_duals}, and \emph{ALG-s} denotes the algorithm of Feigenbaum et al.~\cite{feigenbaum2005graph}.}
    \label{fig:sq}
\end{figure}
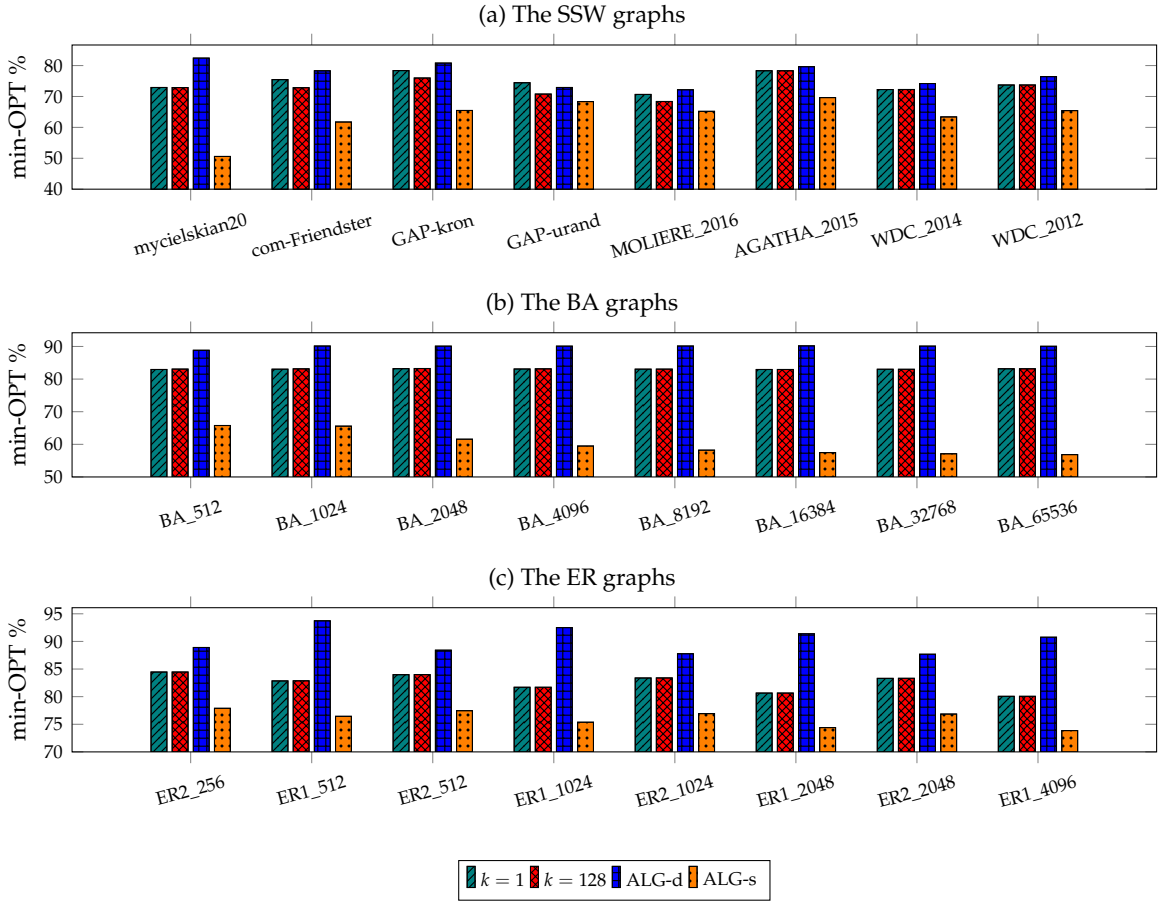

\fref{fig:sq} shows min-OPT percent obtained by different algorithms. In \hyref{Appendix}{subsec:app_duals}, we describe four dual update rules as alternatives to the default rule used in Steps~3(a)--(b) of \hyperref[fig:algo_proc_edge]{Process-Edge}. The values under $k=1$ and $k=128$ use the default rule, and the values under \emph{ALG-d} use the best result among the four new dual update rules.
For perspective, we include min-OPT percent obtained by the sequential 6-approximate streaming algorithm of Feigenbaum et al.~\cite{feigenbaum2005graph}, denoted \emph{ALG-s}.

The results under $k=1$ and $k=128$ show that, in terms of solution quality, our poly-streaming algorithm is on par with the single-stream algorithm of~\cite{paz20182+}. The values under \emph{ALG-d} indicate further potential improvements using alternative dual update rules. The comparison with \emph{ALG-s} supports our choice of the algorithm from~\cite{paz20182+} over other simple algorithms, such as that of~\cite{feigenbaum2005graph}. \hyref{Appendix}{subsec:app_quality} contains comparisons with an offline algorithm and details on the dual update rules.

\subsection{Runtime}
\label{subsec:evals_runtime}

We report runtime-based speedups, computed as the total time across all three phases of \aref{PS-MWM-LD}{fig:algo_psmwm_ld}
(preprocessing, streaming, and post-processing). \fref{fig:speedup} shows these speedups. 
For $k=128$, we have speedups of 16--60, 37--73, and 68--83 for the SSW graphs, the BA graphs, and the ER graphs, respectively. 

\pgfplotstableread{
nstreams	WDC_2012	WDC_2014	AGATHA_2015	MOLIERE_2016	GAP-urand	GAP-kron	com-Friendster	mycielskian20
2   1.40	1.50	1.53	1.50	1.09	1.65	1.33	1.78
4	2.62	2.77	3.03	2.77	1.95	3.43	2.58	3.90
8	4.92	5.76	5.87	5.31	4.01	6.70	4.81	7.98
16	8.92	10.62	10.69	10.10	7.29	12.43	8.56	14.36
32	14.54	16.93	17.47	17.69	11.84	21.91	14.73	27.74
64	19.91	24.54	23.56	27.25	15.28	34.71	22.40	48.15
128	21.98	27.55	23.92	29.66	16.45	42.79	24.24	60.60
}\spssw

\pgfplotstableread{
nstreams	BA_512	BA_1024	BA_2048	BA_4096	BA_8192	BA_16384	BA_32768	BA_65536
2	1.764538135	1.838877568	1.861317248	1.837362208	1.816441691	1.80689213	1.86663501	2.101772768
4	3.118153136	3.398170095	3.576265931	3.579917227	3.493548367	3.525587887	3.662239125	4.301465796
8	6.032028652	6.526008116	6.916952921	7.173856738	6.173779185	6.313016985	6.994328639	8.599814969
16	9.888362599	11.79521855	12.20611644	12.56148312	13.08467544	13.29271401	14.65953238	17.09675288
32	17.95465429	21.14155571	22.35031535	23.03779334	24.05614538	25.1396215	28.0881132	32.46613166
64	28.37148763	33.02619297	34.92737467	37.11957632	40.35907891	43.93667375	47.99380702	56.71297942
128	36.86942832	41.85920689	44.77139207	49.40456762	53.72263907	65.52984241	72.88033714	69.99235671
}\spba

\pgfplotstableread{
nstreams	G1_4096	G1_2048	G1_1024	G1_512	G2_2048	G2_1024	G2_512	G2_256
2	1.81	1.84	1.81	1.82	1.81	1.82	1.82	1.83
4	3.63	3.64	3.65	3.64	3.60	3.60	3.56	3.62
8	7.17	7.32	7.24	7.20	7.10	7.00	6.96	7.03
16	13.82	14.51	14.35	14.04	14.57	14.18	14.02	13.99
32	27.60	27.55	26.79	27.37	26.70	26.26	24.97	26.51
64	51.40	49.31	47.53	48.69	47.47	45.91	43.93	43.88
128	83.02	80.77	79.18	78.58	83.35	80.99	75.39	68.27
}\sper

\pgfplotstableread{
Graph 	pr	pp	st
WDC2014(k=1)	7.446	14.040	730.50
WDC2014(k=128)	0.666	1.873	24.76
WDC2012(k=1)	15.760	76.870	1771.00
WDC2012(k=128)	1.021	7.981	75.77
BA32768(k=1)	0.042	0.350	1627.00
BA32768(k=128)	0.008	0.092	22.23
BA65536(k=1)	0.044	0.353	2084.00
BA65536(k=128)	0.004	0.076	29.70
G22048(k=1)	0.014	0.125	4450.00
G22048(k=128)	0.005	0.047	53.34
G14096(k=1)	0.014	0.127	8026.00
G14096(k=128)	0.005	0.047	96.63
}\algps

\pgfplotstableread{
nstreams	UA_1024	UA_2048	UA_4096	UA_8192	UA_16384	UA_32768	UA_65536	UA_131072
2	1.799137903	1.638746828	1.771284651	1.795953851	1.776863801	1.834900432	1.83398584	1.840414297
4	3.254696176	3.182771847	3.532182554	3.62793607	3.502054517	3.481872503	3.548136044	3.505232186
8	6.832407357	7.306770775	7.09735622	7.22764959	7.022663628	7.082627688	7.069240421	7.244429249
16	12.0889398	13.12233503	13.18204977	13.67463419	13.34065527	13.39265629	13.35300012	13.30424829
32	21.97187636	22.67719401	23.09121318	24.55645678	24.19757881	24.90598015	24.57613431	24.84460638
64	32.67880281	35.5959083	36.75097539	39.84290174	39.50267386	42.15066309	42.19763335	42.26690888
128	39.43632285	43.83467901	44.26424548	48.17236451	52.57337372	60.8192427	61.29093175	60.51724865
}\spua

\pgfplotstableread{
nstreams	G1_4096	G2_2048	G4_1024	G8_512	G16_256	G32_128	G64_64	G128_32
2	1.81	1.81	1.89	1.88	1.90	1.95	1.87	1.97
4	3.63	3.60	3.77	3.75	3.78	3.79	3.69	3.75
8	7.17	7.10	7.45	7.25	7.12	7.32	6.92	6.81
16	13.82	14.57	14.93	14.21	13.97	13.67	12.45	11.79
32	27.60	26.70	26.88	25.98	25.16	24.09	19.24	19.47
64	51.40	47.47	46.09	42.75	40.80	35.52	23.77	23.04
128	83.02	83.35	81.44	76.00	60.79	44.82	41.69	39.47
}\sperdv

\pgfplotstableread{
nstreams	WDC_2014	MOLIERE_2016	GAP-kron	com-Friendster	WDC_2014(NLD)	MOLIERE_2016(NLD)	GAP-kron(NLD)	com-Friendster(NLD)
2	1.50	1.50	1.65	1.33	1.28	1.17	1.24	1.24
4	2.77	2.77	3.43	2.58	2.56	2.62	2.71	2.55
8	5.76	5.31	6.70	4.81	5.15	5.15	5.80	5.20
16	10.62	10.10	12.43	8.56	9.69	10.13	11.57	9.90
32	16.93	17.69	21.91	14.73	16.05	17.39	20.20	14.73
64	24.54	27.25	34.71	22.40	21.42	17.56	22.41	14.67
128	27.55	29.66	42.79	24.24	17.18	15.01	17.45	12.28
}\spNLD

\pgfplotstableread{
nstreams	G1_4096	G1_2048	G1_1024	G1_512	G1_4096(NLD)	G1_2048(NLD)	G1_1024(NLD)	G1_512(NLD)
2	1.81	1.84	1.81	1.82	1.85	1.97	1.97	1.86
4	3.63	3.64	3.65	3.64	4.01	3.68	3.41	3.42
8	7.17	7.32	7.24	7.20	4.35	3.19	2.90	3.17
16	13.82	14.51	14.35	14.04	8.73	8.16	7.64	8.47
32	27.60	27.55	26.79	27.37	17.56	17.18	15.47	16.92
64	51.40	49.31	47.53	48.69	34.09	36.10	35.36	32.57
128	83.02	80.77	79.18	78.58	54.28	59.73	65.08	70.53
}\spGNLD

\pgfplotstableread{
ngroups	UA_4096_67	UA_8192_67	UA_2048_134	UA_4096_134	UA_1024_268	UA_2048_268	UA_512_537	UA_1024_537
1	6.734334896	7.691096099	5.297270831	5.756258061	5.067922492	5.168585273	5.330605148	5.302769422
2	13.17329652	14.7111806	11.38442417	12.2178572	11.5723789	12.19683096	9.10498183	9.187980433
4	22.43448498	24.64610882	21.50129567	22.41344512	13.19748491	14.6028268	8.128426296	8.634042667
8	24.39917724	26.51108221	21.99505559	24.31207871	11.19315355	12.04024324	13.9669025	15.01222241
16	27.87008498	29.98735453	21.03357929	23.16229053	17.63768843	20.7984804	16.08816471	19.61772611
}\spuaxy

\begin{figure}[h]
\centering
\begin{subfigure}[t]{0.34\textwidth}
    \centering
    \caption{The SSW graphs}
    \begin{tikzpicture}[scale=0.7]
\begin{axis}[        
    xlabel={$k$},    
    xlabel style={font=\Large},
    ylabel={Speedup in runtime},
    ylabel style={font=\Large},
    ymajorgrids=true,
    grid style=dashed,
    legend style={        
        font=\large,
        nodes={scale=0.5, transform shape},
        cells={anchor=west},
        at={(0.75,0.45)},
        anchor=north,
        draw=none,
        fill=none},
    xmode=log,
    ymode=log,
    log basis y={2},
    log basis x={2},
    ]

\addplot [mark=otimes,olive] table [x={nstreams}, y={WDC_2012}] \spssw;

\addplot [mark=star,teal] table [x={nstreams}, y={WDC_2014}] \spssw;

\addplot [mark=pentagon,cyan] table [x={nstreams}, y={AGATHA_2015}] \spssw;

\addplot [mark=triangle,purple] table [x={nstreams}, y={MOLIERE_2016}] \spssw;

\addplot [mark=o,violet] table [x={nstreams}, y={GAP-urand}] \spssw;

\addplot [mark=diamond,blue] table [x={nstreams}, y={GAP-kron}] \spssw;

\addplot [mark=10-pointed star,orange] table [x={nstreams}, y={com-Friendster}] \spssw;

\addplot [mark=oplus,red] table [x={nstreams}, y={mycielskian20}] \spssw;

\legend{WDC\_2012, WDC\_2014, AGATHA\_2015, MOLIERE\_2016, GAP-urand, GAP-kron, com-Friendster, mycielskian20};

\end{axis}
\end{tikzpicture}    
\end{subfigure}
\hfill
\begin{subfigure}[t]{0.32\textwidth}
    \centering
    \caption{The BA graphs}
    \begin{tikzpicture}[scale=0.7]
\begin{axis}[        
    xlabel={$k$},
    xlabel style={font=\Large},
    ylabel={},    
    ymajorgrids=true,
    grid style=dashed,
    legend style={       
        font=\large,
        nodes={scale=0.5, transform shape},
        cells={anchor=west},
        at={(0.75,0.45)},
        anchor=north,
        draw=none,
        fill=none},
    xmode=log,
    ymode=log,
    log basis y={2},
    log basis x={2},
    ]

\addplot [mark=otimes,olive] table [x={nstreams}, y={BA_65536}] \spba;

\addplot [mark=star,teal] table [x={nstreams}, y={BA_32768}] \spba;

\addplot [mark=pentagon,cyan] table [x={nstreams}, y={BA_16384}] \spba;

\addplot [mark=triangle,purple] table [x={nstreams}, y={BA_8192}] \spba;

\addplot [mark=o,violet] table [x={nstreams}, y={BA_4096}] \spba;

\addplot [mark=diamond,blue] table [x={nstreams}, y={BA_2048}] \spba;

\addplot [mark=10-pointed star,orange] table [x={nstreams}, y={BA_1024}] \spba;

\addplot [mark=oplus,red] table [x={nstreams}, y={BA_512}] \spba;

\legend{BA\_65536, BA\_32768, BA\_16384, BA\_8192, BA\_4096, BA\_2048, BA\_1024, BA\_512};

\end{axis}
\end{tikzpicture}    
\end{subfigure}
\hfill
\begin{subfigure}[t]{0.32\textwidth}
    \centering
    \caption{The ER graphs}
    \begin{tikzpicture}[scale=0.7]
\begin{axis}[        
    xlabel={$k$},
    xlabel style={font=\Large},
    ylabel={},
    ymajorgrids=true,
    grid style=dashed,
    legend style={        
        font=\large,
        nodes={scale=0.5, transform shape},
        cells={anchor=west},
        at={(0.75,0.45)},
        anchor=north,
        draw=none,
        fill=none},
    xmode=log,
    ymode=log,
    log basis y={2},
    log basis x={2},
    ]

\addplot [mark=otimes,olive] table [x={nstreams}, y={G1_4096}] \sper;

\addplot [mark=star,teal] table [x={nstreams}, y={G2_2048}] \sper;

\addplot [mark=pentagon,cyan] table [x={nstreams}, y={G1_2048}] \sper;

\addplot [mark=triangle,purple] table [x={nstreams}, y={G2_1024}] \sper;

\addplot [mark=o,violet] table [x={nstreams}, y={G1_1024}] \sper;

\addplot [mark=diamond,blue] table [x={nstreams}, y={G2_512}] \sper;

\addplot [mark=10-pointed star,orange] table [x={nstreams}, y={G1_512}] \sper;

\addplot [mark=oplus,red] table [x={nstreams}, y={G2_256}] \sper;

\legend{ER1\_4096, ER2\_2048, ER1\_2048, ER2\_1024, ER1\_1024, ER2\_512, ER1\_512, ER2\_256};

\end{axis}
\end{tikzpicture}    
\end{subfigure}
\caption{Speedup in runtime vs.\ $k$. Note that both axes are on a logarithmic scale.}
\label{fig:speedup}
\end{figure}

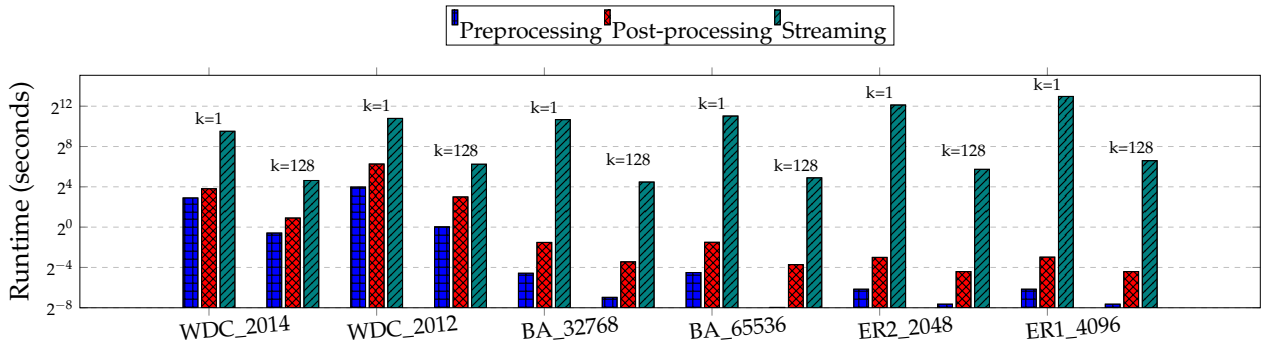
\begin{figure}[h]
\centering
\centering
\begin{tikzpicture}[scale=0.7]
\begin{axis}[
    ymin=0.00390625,
    width=24cm,
    height=6cm,    
    ylabel={Runtime (seconds)},
    ylabel style={font=\Large},
    enlargelimits=0.14,
    enlarge y limits=upper,    
    ybar,
    bar width=8pt,
    enlargelimits=0.14,
    enlarge y limits=upper,
    ymajorgrids=true,
    grid style=dashed,
    ymode=log,
    log basis y={2},
    log origin=infty,
    legend style={    
        font=\Huge,
        nodes={scale=0.5, transform shape},
        cells={anchor=north},
        at={(0.5,1.3)},
        anchor=north,
        legend columns=3,        
        fill=none},    
    point meta=explicit,
    symbolic x coords={WDC2014(k=1),WDC2014(k=128), WDC2012(k=1), WDC2012(k=128), BA32768(k=1), BA32768(k=128), BA65536(k=1),BA65536(k=128), G22048(k=1), G22048(k=128), G14096(k=1), G14096(k=128)},
    x tick label style={rotate=5},    
    x tick label style={font=\large},
    y tick label style={font=\small},
    xtick={WDC2014(k=1),WDC2012(k=1), BA32768(k=1), BA65536(k=1), G22048(k=1),G14096(k=1)},
    xticklabels={
      WDC\_2014, WDC\_2012, BA\_32768, BA\_65536, ER2\_2048, ER1\_4096},
    x tick label style={xshift=3ex},
    ytick distance=16,
    nodes near coords={}    
]
\addplot [fill=blue, postaction={pattern=grid}] table [meta=pr] \algps;
\addplot [fill=red, postaction={pattern=crosshatch}] table [y=pp, meta=pp] \algps;
\addplot [fill=teal, postaction={pattern=north east lines}] table [y=st, meta=st] \algps;

\node[anchor=south, font=\small] at (axis cs:{WDC2014(k=1)},760) {k=1};
\node[anchor=south, font=\small] at (axis cs:{WDC2014(k=128)},30) {k=128};
\node[anchor=south, font=\small] at (axis cs:{WDC2012(k=1)},1850) {k=1};
\node[anchor=south, font=\small, xshift=-0.5ex] at (axis cs:{WDC2012(k=128)},85) {k=128};

\node[anchor=south, font=\small] at (axis cs:{BA32768(k=1)},1650) {k=1};
\node[anchor=south, font=\small] at (axis cs:{BA32768(k=128)},45) {k=128};
\node[anchor=south, font=\small] at (axis cs:{BA65536(k=1)},2100) {k=1};
\node[anchor=south, font=\small] at (axis cs:{BA65536(k=128)},30) {k=128};

\node[anchor=south, font=\small] at (axis cs:{G22048(k=1)},4550) {k=1};
\node[anchor=south, font=\small] at (axis cs:{G22048(k=128)},85) {k=128};
\node[anchor=south, font=\small] at (axis cs:{G14096(k=1)},8100) {k=1};
\node[anchor=south, font=\small] at (axis cs:{G14096(k=128)},100) {k=128};
\legend{Preprocessing, Post-processing, Streaming};
\end{axis}
\end{tikzpicture}
\caption{Breakdown of runtime into three phases for $k=1$ and $k=128$. Note that the $y$-axis is in a logarithmic scale.}
\label{fig:br}
\end{figure}

Due to the significant memory bottlenecks (discussed in \hyref{Section}{sec:numa}), we also report speedups w.r.t. \emph{effective iterations} (\hyref{Definition}{def_ei}), which are less affected by such bottlenecks. The speedup w.r.t. effective iterations is the ratio of the metric for one stream to that for $k$ streams.
Now for $k=128$, we obtain speedups of 112--127, 121--127, and 124--128 for the SSW graphs, the BA graphs, and the ER graphs, respectively. 
These results indicate that shared variable access incurs no noticeable contention among processors. As a result, we expect even better runtime improvements on systems with more memory controllers or better support for remote memory access.

\fref{fig:br} shows the runtimes for different graphs, decomposed into three phases, for $k=1$ and $k=128$. The plots report the absolute time savings achieved by processing multiple streams concurrently. For $k=1$ and $k=128$, the geometric means of the runtimes for these graphs are over $2350$ seconds and under $45$ seconds, respectively. For the largest graph (ER1\_4096), single-stream processing took over $8000$ seconds, whereas poly-stream processing reduced the time to under $100$ seconds.
\section{Conclusion}
\label{sec:conc}
While numerous studies have focused on optimizing either time (in parallel computing) or space (in streaming algorithms) in isolation, the poly-streaming model offers a \emph{practically relevant paradigm} for jointly optimizing both. 
It fills a gap by providing a formal framework for analyzing algorithmic design choices and their associated time--space trade-offs.
Our study of matchings illustrates the practical relevance of this paradigm in supporting diverse design choices and enabling principled analysis of their trade-offs.

The simplicity of our matching algorithm and its generalization reflects our choice to adopt the design of~\cite{paz20182+}.
We believe this principle will inspire the development of other poly-streaming algorithms.
To this end, we note that~\cite{paz20182+} has also motivated simple algorithms for related problems, such as matchings with submodular objectives~\cite{levin2021streaming}, $b$-matchings~\cite{huang2024semi}, and collections of disjoint matchings~\cite{ferdous2024semi}.

Our study focuses on computing matchings in single-pass, shared-memory settings. The same framework may also be effective in multi-pass and distributed-memory settings. These directions are discussed in \hyref{Appendix}{subsec:dist} and \hyref{Appendix}{subsec:app_duals}.

\bibliographystyle{plainurl}
\typeout{}
\bibliography{10_ref.bib}

\begin{thebibliography}{10}

\bibitem{albert2002statistical}
R{\'e}ka Albert and Albert-L{\'a}szl{\'o} Barab{\'a}si.
\newblock Statistical mechanics of complex networks.
\newblock {\em Reviews of Modern Physics}, 74(1):47, 2002.

\bibitem{alon1996space}
Noga Alon, Yossi Matias, and Mario Szegedy.
\newblock The space complexity of approximating the frequency moments.
\newblock In {\em Proceedings of the Twenty-eighth Annual {ACM} Symposium on Theory of Computing}, pages 20--29, 1996.

\bibitem{andoni2014parallel}
Alexandr Andoni, Aleksandar Nikolov, Krzysztof Onak, and Grigory Yaroslavtsev.
\newblock Parallel algorithms for geometric graph problems.
\newblock In {\em Proceedings of the Forty-sixth Annual ACM Symposium on Theory of Computing}, pages 574--583, 2014.

\bibitem{assadi2024simple}
Sepehr Assadi.
\newblock A simple (1—$\epsilon$)-approximation semi-streaming algorithm for maximum (weighted) matching.
\newblock In {\em 2024 Symposium on Simplicity in Algorithms (SOSA)}, pages 337--354. SIAM, 2024.

\bibitem{assadi2019coresets}
Sepehr Assadi, MohammadHossein Bateni, Aaron Bernstein, Vahab Mirrokni, and Cliff Stein.
\newblock Coresets meet edcs: algorithms for matching and vertex cover on massive graphs.
\newblock In {\em Proceedings of the Thirtieth Annual ACM-SIAM Symposium on Discrete Algorithms}, pages 1616--1635. SIAM, 2019.

\bibitem{assadi2019distributed}
Sepehr Assadi, Mohammadhossein Bateni, and Vahab Mirrokni.
\newblock Distributed weighted matching via randomized composable coresets.
\newblock In {\em International Conference on Machine Learning}, pages 333--343. PMLR, 2019.

\bibitem{assadi2021beating}
Sepehr Assadi and Soheil Behnezhad.
\newblock Beating two-thirds for random-order streaming matching.
\newblock In {\em 48th International Colloquium on Automata, Languages, and Programming,$\{$ICALP$\}$ 2021, July 12-16, 2021, Glasgow, Scotland (Virtual Conference)}, 2021.

\bibitem{bar2004local}
Reuven Bar-Yehuda, Keren Bendel, Ari Freund, and Dror Rawitz.
\newblock Local ratio: A unified framework for approximation algorithms. in memoriam: Shimon even 1935-2004.
\newblock {\em ACM Computing Surveys (CSUR)}, 36(4):422--463, 2004.

\bibitem{beame2017communication}
Paul Beame, Paraschos Koutris, and Dan Suciu.
\newblock Communication steps for parallel query processing.
\newblock {\em Journal of the ACM (JACM)}, 64(6):1--58, 2017.

\bibitem{cormode2013continuous}
Graham Cormode.
\newblock The continuous distributed monitoring model.
\newblock {\em ACM SIGMOD Record}, 42(1):5--14, 2013.

\bibitem{crouch2014improved}
Michael Crouch and Daniel~M Stubbs.
\newblock Improved streaming algorithms for weighted matching, via unweighted matching.
\newblock In {\em Approximation, Randomization, and Combinatorial Optimization. Algorithms and Techniques (APPROX/RANDOM 2014)}, pages 96--104. Schloss Dagstuhl--Leibniz-Zentrum f{\"u}r Informatik, 2014.

\bibitem{davis2011university}
Timothy~A Davis and Yifan Hu.
\newblock The {U}niversity of {F}lorida sparse matrix collection.
\newblock {\em ACM Transactions on Mathematical Software (TOMS)}, 38(1):1--25, 2011.

\bibitem{edmonds1965paths}
Jack Edmonds.
\newblock Paths, trees, and flowers.
\newblock {\em Canadian Journal of mathematics}, 17:449--467, 1965.

\bibitem{epstein2011improved}
Leah Epstein, Asaf Levin, Juli{\'a}n Mestre, and Danny Segev.
\newblock Improved approximation guarantees for weighted matching in the semi-streaming model.
\newblock {\em SIAM Journal on Discrete Mathematics}, 25(3):1251--1265, 2011.

\bibitem{erdos1960evolution}
Paul Erd{\H{o}}s and Alfr{\'e}d R{\'e}nyi.
\newblock On the evolution of random graphs.
\newblock {\em Publ. Math. Inst. Hung. Acad. Sci.}, 5:17--60, 1960.

\bibitem{feigenbaum2005graph}
Joan Feigenbaum, Sampath Kannan, Andrew McGregor, Siddharth Suri, and Jian Zhang.
\newblock On graph problems in a semi-streaming model.
\newblock {\em Theoretical Computer Science}, 348(2-3):207--216, 2005.

\bibitem{feigenbaum2009graph}
Joan Feigenbaum, Sampath Kannan, Andrew McGregor, Siddharth Suri, and Jian Zhang.
\newblock Graph distances in the data-stream model.
\newblock {\em SIAM Journal on Computing}, 38(5):1709--1727, 2009.

\bibitem{ferdous2024semi}
SM~Ferdous, Bhargav Samineni, Alex Pothen, Mahantesh Halappanavar, and Bala Krishnamoorthy.
\newblock Semi-streaming algorithms for weighted k-disjoint matchings.
\newblock In {\em 32nd Annual European Symposium on Algorithms (ESA 2024)}, pages 53--1. Schloss Dagstuhl--Leibniz-Zentrum f{\"u}r Informatik, 2024.

\bibitem{gabow1991faster}
Harold~N Gabow and Robert~E Tarjan.
\newblock Faster scaling algorithms for general graph matching problems.
\newblock {\em Journal of the ACM (JACM)}, 38(4):815--853, 1991.

\bibitem{gamlath2019weighted}
Buddhima Gamlath, Sagar Kale, Slobodan Mitrovic, and Ola Svensson.
\newblock Weighted matchings via unweighted augmentations.
\newblock In {\em Proceedings of the 2019 ACM Symposium on Principles of Distributed Computing}, pages 491--500, 2019.

\bibitem{ghaffari2019simplified}
Mohsen Ghaffari and David Wajc.
\newblock Simplified and space-optimal semi-streaming $(2+ \epsilon)$-approximate matching.
\newblock In {\em Symposium on Simplicity in Algorithms}, volume~69, 2019.

\bibitem{gibbons2001estimating}
Phillip~B Gibbons and Srikanta Tirthapura.
\newblock Estimating simple functions on the union of data streams.
\newblock In {\em Proceedings of the Thirteenth Annual ACM Symposium on Parallel Algorithms and Architectures}, pages 281--291, 2001.

\bibitem{goodrich2011sorting}
Michael~T Goodrich, Nodari Sitchinava, and Qin Zhang.
\newblock Sorting, searching, and simulation in the mapreduce framework.
\newblock In {\em International Symposium on Algorithms and Computation}, pages 374--383. Springer, 2011.

\bibitem{Milan}
NASA High-End Computing~Capability (HECC).
\newblock {AMD Milan Processors}, 2024.
\newblock Accessed: 2025-07-07.
\newblock URL: \url{https://www.nas.nasa.gov/hecc/support/kb/amd-milan-processors_688.html}.

\bibitem{henzinger1998computing}
Monika~Rauch Henzinger, Prabhakar Raghavan, and Sridhar Rajagopalan.
\newblock Computing on data streams.
\newblock {\em External Memory Algorithms}, 50:107--118, 1998.

\bibitem{huang2024semi}
Chien-Chung Huang and Fran{\c{c}}ois Sellier.
\newblock Semi-streaming algorithms for submodular function maximization under b-matching, matroid, and matchoid constraints.
\newblock {\em Algorithmica}, 86(11):3598--3628, 2024.

\bibitem{huang20231}
Shang-En Huang and Hsin-Hao Su.
\newblock (1-$\epsilon$)-approximate maximum weighted matching in poly (1/$\epsilon$, log n) time in the distributed and parallel settings.
\newblock In {\em Proceedings of the 2023 ACM Symposium on Principles of Distributed Computing}, pages 44--54, 2023.

\bibitem{huang2015communication}
Zengfeng Huang, Bozidar Radunovic, Milan Vojnovic, and Qin Zhang.
\newblock Communication complexity of approximate matching in distributed graphs.
\newblock In {\em 32nd International Symposium on Theoretical Aspects of Computer Science (STACS 2015)}, pages 460--473. Schloss Dagstuhl--Leibniz-Zentrum f{\"u}r Informatik, 2015.

\bibitem{jaja1992introduction}
Joseph J{\'a}J{\'a}.
\newblock {\em An Introduction to Parallel Algorithms}.
\newblock Addison Wesley Longman Publishing Co., Inc., 1992.

\bibitem{kapralov2021space}
Michael Kapralov.
\newblock Space lower bounds for approximating maximum matching in the edge arrival model.
\newblock In {\em Proceedings of the 2021 ACM-SIAM Symposium on Discrete Algorithms (SODA)}, pages 1874--1893. SIAM, 2021.

\bibitem{karloff2010model}
Howard Karloff, Siddharth Suri, and Sergei Vassilvitskii.
\newblock A model of computation for mapreduce.
\newblock In {\em Proceedings of the Twenty-first Annual ACM-SIAM Symposium on Discrete Algorithms}, pages 938--948. SIAM, 2010.

\bibitem{karp1988survey}
Richard~M Karp and Vijaya Ramachandran.
\newblock {\em A Survey of Parallel Algorithms for Shared-memory Machines}.
\newblock University of California at Berkeley, 1988.

\bibitem{levin2021streaming}
Roie Levin and David Wajc.
\newblock Streaming submodular matching meets the primal-dual method.
\newblock In {\em Proceedings of the 2021 ACM-SIAM Symposium on Discrete Algorithms (SODA)}, pages 1914--1933. SIAM, 2021.

\bibitem{lotker2015improved}
Zvi Lotker, Boaz Patt-Shamir, and Seth Pettie.
\newblock Improved distributed approximate matching.
\newblock {\em Journal of the ACM (JACM)}, 62(5):1--17, 2015.

\bibitem{lotker2007distributed}
Zvi Lotker, Boaz Patt-Shamir, and Adi Ros{\'e}n.
\newblock Distributed approximate matching.
\newblock In {\em Proceedings of the twenty-sixth annual ACM symposium on Principles of distributed computing}, pages 167--174, 2007.

\bibitem{McCartney2014}
Gerry McCartney, Thomas Hacker, and Baijian Yang.
\newblock Empowering faculty: A campus cyberinfrastructure strategy for research communities.
\newblock {\em Educause Review}, 2014.

\bibitem{mcgregor2005finding}
Andrew McGregor.
\newblock Finding graph matchings in data streams.
\newblock In {\em International Workshop on Approximation Algorithms for Combinatorial Optimization}, pages 170--181. Springer, 2005.

\bibitem{meusel2015graph}
Robert Meusel, Sebastiano Vigna, Oliver Lehmberg, and Christian Bizer.
\newblock The graph structure in the web--analyzed on different aggregation levels.
\newblock {\em The Journal of Web Science}, 1, 2015.

\bibitem{paz20182+}
Ami Paz and Gregory Schwartzman.
\newblock A (2+ $\varepsilon$)-approximation for maximum weight matching in the semi-streaming model.
\newblock {\em ACM Transactions on Algorithms (TALG)}, 15(2):1--15, 2018.

\bibitem{pekoz2013total}
Erol~A Pek{\"o}z, Adrian R{\"o}llinn, and Nathan Ross.
\newblock Total variation error bounds for geometric approximation.
\newblock {\em Bernoulli}, 19(2):610--632, 2013.

\bibitem{pothen2019approximation}
Alex Pothen, SM~Ferdous, and Fredrik Manne.
\newblock Approximation algorithms in combinatorial scientific computing.
\newblock {\em Acta Numerica}, 28:541--633, 2019.

\bibitem{zelke2012weighted}
Mariano Zelke.
\newblock Weighted matching in the semi-streaming model.
\newblock {\em Algorithmica}, 62(1-2):1--20, 2012.

\end{thebibliography}

\clearpage

\appendix
\section{Related Models of Computation}
\label{sec:related_models}

\paragraph{The Streaming Model \cite{alon1996space, henzinger1998computing}.}
In the streaming model of computation, a sequence of data items is fed to an algorithm as a data stream. 
The algorithm reads and processes one item at a time from the stream. 
For a data stream of $N$ items, drawn from a universe $\{1,\ldots,M\}$, a streaming algorithm is allowed to use $\smallO{\min{ \{N,M\} } }$ space. 
The ultimate goal is to compute a solution using $\bigO{\log N + \log M }$ space. 
A streaming algorithm may be restricted to a single pass over the stream (\emph{single-pass}) or allowed multiple passes (\emph{multi-pass}).

\paragraph{The Semi-Streaming Model \cite{feigenbaum2005graph}.}
General graph problems are intractable if the space available is $\smallO{n}$, where $n$ denotes the number of vertices. 
To address this, \cite{feigenbaum2005graph} introduced the semi-streaming model. 
In this model, an algorithm has sequential access to the edges of a graph (i.e., a stream of edges) and is allowed to use $\bigO{n \cdot \plog{n} }$ space.\footnote{$\bigOmega{n \log n}$ bits are needed just to store $n$ integers.} Similar to a streaming algorithm, a semi-streaming algorithm is either \emph{single-pass} or \emph{multi-pass}.

\paragraph{The Distributed Streaming Model \cite{gibbons2001estimating}.}
In the distributed streaming model \cite{gibbons2001estimating}, $t$ processors (or parties) process $t$ data streams independently and generate summaries of their respective streams. 
These summaries are then sent simultaneously to a central referee (processor), who estimates a global function over the union of the streams. 
Each processor is allowed to read its own data stream only once and must operate using space sublinear in the size of the stream.
Communication is modeled using the \emph{simultaneous one-round} communication complexity model. 

A related model is the \emph{continuous distributed monitoring model} \cite{cormode2013continuous}, where the goal is to observe $t$ distributed streams using $t$ sites (processors) and continuously maintain a global function over the streams using a central coordinator (processor).

\paragraph{The Work-Depth Model \cite{jaja1992introduction, karp1988survey}.} 
The (shared-memory parallel) work-depth model is used to analyze algorithms designed for \emph{PRAM}-like shared-memory machines. 
An algorithm in this model is evaluated using two measures: \emph{work}, the total computation performed, and \emph{depth}, the length of the longest chain of sequential dependencies.
This is an offline model of parallel computation.

\paragraph{The MPC Model \cite{andoni2014parallel, beame2017communication, goodrich2011sorting, karloff2010model}.} 
The massively parallel computation (MPC) model was introduced in \cite{karloff2010model}, and refined in subsequent work \cite{andoni2014parallel, beame2017communication, goodrich2011sorting}. 
In its general setting, the model consists of $N$ data items distributed across $M$ machines, each with $S$ bits of local memory. The primary interest lies in the regime where $S=N^{\alpha}$ for some $\alpha \in (0, 1)$, and $N=\bigO{M \cdot S}$, meaning no single machine can store the entire input, but the collective memory is sufficient to store all data items.
Computation proceeds in synchronous rounds. 
In each round, machines perform local computation on their data. 
At the end of the round, machines may communicate, subject to the constraint that each machine receives only as much data as fits in its $S$ bits of memory. 
Algorithms in the MPC model are primarily assessed on the number of communication rounds required to solve a problem.
\section{Deferred Proofs and Techniques for Matching}
\label{sec:app_mwm}

\subsection{Related Algorithms}
\label{subsec:mwm_related}
The first exact algorithm for the \emph{maximum weight matching} (MWM) problem is due to Edmonds~\cite{edmonds1965paths}, and the fastest known algorithm, with runtime $\bigOtilde{m \sqrt{n}}$, is due to Gabow and Tarjan~\cite{gabow1991faster}. For offline approximation algorithms in sequential and parallel settings, we refer the reader to~\cite{assadi2019distributed, huang20231, pothen2019approximation} and the references therein. For MPC algorithms and multi-pass streaming algorithms, see~\cite{assadi2024simple, assadi2019distributed, gamlath2019weighted} and the references therein. We next summarize the literature on single-pass streaming algorithms.

The first streaming algorithm for the MWM problem is due to Feigenbaum et al.~\cite{feigenbaum2005graph}, achieving a 6-approximation guarantee. The algorithm maintains an initially empty matching. When an edge $e=\{u,v\}$ arrives, it examines the edges in the current matching that are incident to $u$ or $v$, and computes their total weight. If the weight of $e$ is at most twice this total, the edge is ignored; otherwise, the incident edges are removed from the matching and $e$ is inserted. McGregor~\cite{mcgregor2005finding} improved the approximation ratio from $6$ to $5.828$ by replacing the threshold 2 with $1+\gamma$, and selecting the optimal value of $\gamma$.

Zelke~\cite{zelke2012weighted}, Epstein et al.~\cite{epstein2011improved}, and Crouch and Stubbs~\cite{crouch2014improved} used different techniques to improve the approximation ratio to $5.85$, $4.91 + \epsilon$, and $4 + \epsilon$, respectively. Paz and Schwartzman~\cite{paz20182+} applied the local-ratio technique~\cite{bar2004local} to obtain a $2+\epsilon$-approximation algorithm. We extend their design to the \emph{poly-streaming} setting; see \hyref{Section}{subsec:mwm_ps} for details. Ghaffari and Wajc~\cite{ghaffari2019simplified} extended this algorithm to achieve optimal space complexity in the \emph{semi-streaming model}.

The single-pass streaming algorithms described above make no assumptions about the ordering of edges; they are designed to work under arbitrary or adversarial ordering. Assadi and Behnezhad~\cite{assadi2021beating} presented a single-pass algorithm that achieves better than a $1.5$-approximation for the \emph{maximum matching} problem when edges arrive in random order. In contrast, Kapralov~\cite{kapralov2021space} showed that, under adversarial ordering, no single-pass algorithm can achieve better than a $1+\ln 2 \approx 1.7$-approximation, even for unweighted bipartite graphs.

\subsection{Proof of Lemma~\ref{lemma_tight_follower}--\ref{lemma_free_pp}}
\label{app:lemma_free_pp}

Unlike \aref{Process-Edge}{fig:algo_proc_edge}, the subroutine \hyperref[fig:algo_proc_stack]{Process-Stack} does not use locks. To justify this, we formally verify that its steps can be executed asynchronously. \aref{Process-Stack}{fig:algo_proc_stack} makes progress by identifying tight edges. In the proof of \lemref{lemma_free}, we relied on a characterization of the lexicographic ordering of vertices (\hyref{Proposition}{prop_ei_dag}). Analogously, we now establish a few useful characterizations of tight edges.

\begin{lemma}
\label{lemma_tight}
Let $e \in \U^{\br{t}}$ be an edge contained in a stack $S^{\ell}$, and suppose no edge in $\N(e)$ was added to any stack after $e$. Then $e$ is a tight edge.
\end{lemma}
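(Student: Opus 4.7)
The plan is to track the quantity $\alpha_u+\alpha_v$ throughout the algorithm, leveraging the invariant that at every moment $\alpha_x=\sum_{f\in\delta(x)\cap\mathcal{S}}g_f$, where $\mathcal{S}$ denotes the current union of all stacks. This invariant follows immediately from the push/pop updates of $\alpha$ in \hyperref[fig:algo_proc_edge]{Process-Edge} (Step~3(b)) and \hyperref[fig:algo_proc_stack]{Process-Stack} (Step~2(d)). Let $\tau$ denote the moment $e=\{u,v\}$ is pushed onto $S^{\ell}$. Unpacking the definition $g_e = w_e-\br{\alpha_u(\tau^-)+\alpha_v(\tau^-)}$ together with the subsequent increments gives $\alpha_u(\tau^+)+\alpha_v(\tau^+)=w_e+g_e$; thus $e$ is tight the instant it is pushed.

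After $\tau$, the sum $\alpha_u+\alpha_v$ can change only through pushes or pops of edges in $\delta(u)\cup\delta(v)=\N(e)\cup\{e\}$. By the hypothesis no edge of $\N(e)$ is ever pushed after $e$, and $e$ itself is pushed only once, so $\alpha_u+\alpha_v$ remains equal to $w_e+g_e$ throughout the remainder of the streaming phase. Since $e\in\U^{\br{t}}$, $e$ has not been popped by iteration $t$, so the only way $\alpha_u+\alpha_v$ could have decreased by iteration $t$ is through pops of edges in $\N(e)$ during post-processing. It therefore suffices to show that no edge of $\N(e)$ is popped at any iteration $s<t$.

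The main obstacle is precisely this ``no-neighbor-popped'' step, which I would attack by contradiction. Suppose some edge of $\N(e)$ is popped before iteration $t$, and let $\tilde{e}_1\in\N(e)$ be the one popped at the smallest such iteration $s_1$. Writing out the tightness of $\tilde{e}_1$ at $s_1$ via the same invariant, and cancelling the contributions from $\N(\tilde{e}_1)$-edges pushed before $\tilde{e}_1$ that are still in the stacks, yields the identity
\[
\sum_{f\in\N(\tilde{e}_1)\cap\mathcal{S}_{s_1},\ f\text{ pushed after }\tilde{e}_1} m(f)\,g_f \;=\; \sum_{f\in\N(\tilde{e}_1),\ f\text{ pushed before }\tilde{e}_1,\ f\notin\mathcal{S}_{s_1}} m(f)\,g_f,
\]
where $m(f)\in\{1,2\}$ counts shared endpoints of $f$ with $\tilde{e}_1$, and $\mathcal{S}_{s_1}$ is the union of stacks at the start of iteration $s_1$. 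The left-hand side is strictly positive because $e$ itself is a follower of $\tilde{e}_1$ that still lies in the stacks at iteration $s_1\le t$; hence the right-hand side is positive too, so some $f_2\in\N(\tilde{e}_1)$ pushed before $\tilde{e}_1$ was popped at some iteration $s_2<s_1$. Minimality of $s_1$ among pops within $\N(e)$ then forces $f_2\notin\N(e)$.

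Applying the same analysis to $f_2$ at iteration $s_2$ (now $\tilde{e}_1$ plays the role of the positive follower in $f_2$'s identity) yields an $f_3\in\N(f_2)\setminus\N(e)$ popped at $s_3<s_2$, and so on. The resulting chain of distinct popped edges has strictly decreasing pop iterations, so it must reach some $f_K$ popped at iteration $1$. But at iteration $1$ no pops have occurred yet, so the right-hand side of $f_K$'s tightness identity vanishes while its left-hand side contains at least $g_{f_{K-1}}>0$, a contradiction. This rules out any pop within $\N(e)$ before iteration $t$ and completes the proof.
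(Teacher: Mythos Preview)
Your argument is correct and actually more careful than the paper's. The paper computes $\alpha_u+\alpha_v=w_e+g_e$ right after $e$ is pushed and then simply asserts that ``the values of $\alpha_u$ and $\alpha_v$ remain unchanged until $e$ is processed,'' appealing only to the hypothesis that no edge of $\N(e)$ is \emph{pushed} after $e$. It does not address the scenario you isolate: a neighbor $f\in\N(e)$ pushed \emph{before} $e$ could in principle be popped at some iteration $s<t$, decrementing one of $\alpha_u,\alpha_v$. The natural way to exclude this is to say that $e\in\F(f)\cap\U^{\br{s}}$ prevents $f$ from being tight, but that is the only-if direction of \lemref{lemma_tight_follower}, which is itself proved \emph{from} the present lemma, so invoking it here would be circular. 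The paper is effectively relying on an unstated simultaneous induction on $t$ over both statements; your infinite-descent chain is a genuinely different, self-contained way to close the same gap.

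One minor wording point: the chain need not literally reach iteration~$1$. What you actually show is that the chain always extends (the left side of $f_K$'s identity is at least $m(f_{K-1})\,g_{f_{K-1}}>0$, forcing the right side positive and hence producing an $f_{K+1}$), while the pop-iteration sequence $s_1>s_2>\cdots$ consists of positive integers and so must be finite. That is already the contradiction; singling out iteration~$1$ is unnecessary, and as stated the sentence ``it must reach some $f_K$ popped at iteration~$1$'' is not quite accurate.
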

\begin{proof}
Consider the execution of Step~3 in \hyperref[fig:algo_proc_edge]{Process-Edge}, by which the edge $e=\{u,v\}$ was included in the stack $S^{\ell}$. For any vertex $x\in V$, let $\alpha_x^{\text{old}}$ and $\alpha_x^{\text{new}}$ denote the values of $\alpha_x$ before and after Step~3(b), respectively. 

By Step~3(a), we have $\alpha_{u}^{\text{old}} + \alpha_{v}^{\text{old}} = w_e - g_e$, and  by Step~3(b),
\[\alpha_{u}^{\text{new}}+\alpha_{v}^{\text{new}} = \alpha_{u}^{\text{old}}+ \alpha_{v}^{\text{old}}+ 2g_e = w_e -g_e + 2g_e =w_e + g_e.\] 
By assumption, no edge in $\N(e)$ was included in any stack after $e$, so the values of $\alpha_u$ and $\alpha_v$ remain unchanged until $e$ is processed. Hence, when $e$ is processed, we have $\alpha_{u}+\alpha_{v} = w_e + g_e$, which proves the claim.
\end{proof}

\begin{remark}
\label{rem_follower1}
For any edge $e=\{u,v\}$ added to some stack $S^i$, the locks of $u$ and $v$ are held exclusively by processor $i$, ensuring that no other edge in $\N(e)$ can be added to any stack $S^j$ with $j \ne i$ while $e$ is being pushed onto $S^i$. Therefore, the follower relationship is always asymmetric: if $e_j \in \F(e_i)$, then $e_i \not\in \F(e_j)$.
\end{remark}

\begin{remark}
\label{rem_follower2}
If an edge $e_j$ is included in some stack $S^j$, after another edge $e_i$ has been included in some stack $S^i$, then $e_i$ cannot be a follower of $e_j$. Note that this does imply that $e_j$ is a follower of $e_i$.
\end{remark}

\LemmaTightFollower*
\begin{proof}
\textbf{Only if.}
Suppose $e=\{u,v\}$ is a tight edge; that is, $w_e + g_e = \alpha_u + \alpha_v$. For contradiction, assume $\F(e) \ne \emptyset$, and let $e_j \in \F(e)$.
By \hyref{Definition}{def_follower}, we must have $e_j \cap e \ne \emptyset$ and $e_j$ was included in some stack after $e$. 

As shown in the proof of \lemref{lemma_tight}, the inclusion of $e$ implies that immediately after Step~3(b) of \hyperref[fig:algo_proc_edge]{Process-Edge}, we had $\alpha_u+\alpha_v = w_e + g_e$.

Since $e_j$ shares an endpoint with $e$ and was included after $e$, its gain $g_{e_j}> 0$ must have been added to at least one of $\alpha_u$ or $\alpha_v$.\footnote{If $e_j$ is parallel to $e$ then both $\alpha_u$ and $\alpha_v$ would have been incremented by $g_{e_j}$.} 
Therefore, after the inclusion of $e_j$, we must have $\alpha_u+\alpha_v > w_e + g_e$, contradicting the assumption that $e$ is a tight edge. Thus, $\F(e) = \emptyset$.

\medskip

\noindent\textbf{If.} 
Suppose $\F(e) = \emptyset$. If $e$ is the last edge from $\N(e) \cup \{e\}$ to be added to any stack, then by \lemref{lemma_tight}, $e$ is a tight edge. 

Otherwise, $\F(e) \ne \emptyset$ at the beginning of the execution of \hyperref[fig:algo_proc_stack]{Process-Stack},  but becomes empty in some iteration of Step~2. 

Every edge $e_j = \{x,y\} \in \F(e)$ must have been removed in Step~2(a), with its gain $g_{e_j}$ subtracted from $\alpha_x$ and $\alpha_y$ in Step~2(d). These updates precisely reverse the effect of $e_j$'s inclusion on $\alpha_u$ and $\alpha_v$. Consequently, once all followers of $e$ have been processed, the values of $\alpha_u$ and $\alpha_v$ are exactly as they would be if none of the followers of $e$ had ever been added to any stack. By \lemref{lemma_tight}, it follows that $e$ is a tight edge.
\end{proof}

\begin{lemma}
\label{lemma_follower}
Let $e_i,e_j \in \U^{\br{t}}$ be the top edges of stacks $S^i$ and $S^j$, respectively. If $S^j \cap \F(e_i) \neq \emptyset$, then $S^i \cap \F(e_j) = \emptyset$.
\end{lemma}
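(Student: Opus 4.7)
The plan is a short timing argument by contradiction. Suppose both $S^j \cap \F(e_i) \neq \emptyset$ and $S^i \cap \F(e_j) \neq \emptyset$, and fix witnesses $f \in S^j \cap \F(e_i)$ and $e' \in S^i \cap \F(e_j)$. I would attach to each stacked edge $e$ a timestamp $t_e$ corresponding to the moment it was pushed onto its stack. Two observations drive the argument. First, because $e_i$ sits at the top of $S^i$, every edge of $S^i$ was pushed no later than $e_i$, giving $t_{e'} \leq t_{e_i}$ with equality only when $e' = e_i$; symmetrically, $t_f \leq t_{e_j}$ with equality only when $f = e_j$. Second, the follower relation forces the strict bounds $t_f > t_{e_i}$ (since $f \in \F(e_i)$) and $t_{e'} > t_{e_j}$ (since $e' \in \F(e_j)$). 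Chaining these inequalities gives
\[
t_{e'} \;\leq\; t_{e_i} \;<\; t_f \;\leq\; t_{e_j} \;<\; t_{e'},
\]
which is impossible unless both $\leq$'s are in fact equalities, forcing $e' = e_i$ and $f = e_j$. But that degenerate case would mean $e_i \in \F(e_j)$ and $e_j \in \F(e_i)$ simultaneously, directly contradicting the asymmetry of the follower relation recorded in \hyref{Remark}{rem_follower1}. This would establish the claim.

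I expect the main subtlety, rather than a serious obstacle, to lie in justifying that the timestamps are well defined and strictly ordered wherever the chain above needs them to be. In a concurrent execution, two pushes onto distinct stacks could occur at the same instant, so in principle one has to rule out ties. However, every pair of edges appearing in a follower relation shares an endpoint, and the locking protocol in \aref{Process-Edge}{fig:algo_proc_edge} holds the exclusive endpoint locks for the entire duration of Step~3; this serializes the pushes of any two edges sharing a vertex. Consequently, for the specific pairs $(e_i, f)$ and $(e_j, e')$ driving the chain, the pushes are guaranteed to occur at distinct instants, so the strict inequalities coming from membership in $\F(\cdot)$ are legitimate, and the contradiction closes cleanly.
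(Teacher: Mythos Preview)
Your proposal is correct and follows essentially the same timing argument as the paper: both use that the top of a stack is its last-pushed edge and that the follower relation forces a strict temporal order, then chain the inequalities to a contradiction (the paper phrases the same chain directly rather than by contradiction). One minor note: your ``unless both $\leq$'s are equalities'' clause is unnecessary, since the two strict inequalities already force $t_{e'} < t_{e'}$ regardless; the degenerate case you then treat via \hyref{Remark}{rem_follower1} is already excluded by the chain itself.
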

\begin{proof}
Assume $S^j \cap \F(e_i) \neq \emptyset$; that is, $S^j$ contains a follower of $e_i$. 
Let $e_{\ell} \in  S^j \cap \F(e_i)$ be such an edge. 
Since $e_j$ is the top edge of $S^j$, $e_{\ell}$ must have been added to $S^j$ no later than $e_j$.

By \hyref{Definition}{def_follower}, $e_{\ell} \in \F(e_i)$ implies that $e_{\ell}$ was included in $S^j$ strictly after $e_i$ was included in $S^i$. 
But $e_i$ is the top edge of $S^i$, so all edges in $S^i$ were added no later than $e_i$. It follows that $e_{\ell}$ was added strictly after all edges of $S^i$. 
Since $e_{\ell}$ was added no later than $e_j$, $e_{j}$ was also added strictly after all edges of $S^i$. Therefore, no edge in $S^i$ can be a follower of $e_j$; that is, $S^i \cap \F(e_j) = \emptyset$.
\end{proof}

\LemmaTightEdge*
\begin{proof}
Consider a directed graph $G_{\T}^{\br{t}}=\br{\T^{\br{t}}, E_{\T}^{\br{t}}}$, where for each ordered pair $\br{e_i, e_j}$ of distinct edges in $\T^{\br{t}}$, we include a directed edge $(e_i, e_j)$ in $E_{\T}^{\br{t}}$ if stack $S_j$ contains a follower of $e_i$ (i.e., $S_j \cap \F(e_i) \ne \emptyset$). By \lemref{lemma_follower}, if $(e_i, e_j) \in E_{\T}^{\br{t}}$, then $(e_j, e_i) \not\in E_{\T}^{\br{t}}$.

We claim that $G_{\T}^{\br{t}}$ is a directed acyclic graph (DAG). Suppose, for contradiction, that it contains a directed cycle $C=\langle e_{i_1}, e_{i_2}, ..., e_{i_j}, e_{i_1} \rangle$, with $j \geq 3$ by \lemref{lemma_follower}. 
For each $q=1 \dots j-1$, the edge $\br{e_{i_q}, e_{i_{q+1}}} \in E_{\T}^{\br{t}}$, so $S^{i_{q+1}}$ contains a follower of $e_{i_q}$. 
By \hyref{Definition}{def_follower}, this implies that $e_{i_{q+1}}$ was included strictly after $e_{i_q}$ was included. Chaining these inequalities, we deduce that $e_{i_j}$ was included strictly after $e_{i_1}$ was included. But the existence of the cycle $C$ implies that $e_{i_1}$ is a follower of $e_{i_j}$; that is, $e_{i_1}$ was included strictly after $e_{i_j}$, a contradiction. Hence, $G_{\T}^{\br{t}}$ is a DAG.

Since any DAG contains at least one vertex without an outgoing edge, $G_{\T}^{\br{t}}$ contains some vertex $e_i \in \T^{\br{t}}$ such that $(e_i, e_j) \not\in E_{\T}^{\br{t}}$ for all $e_j \ne e_i$, that is, $S_j \cap \F(e_i) = \emptyset$ for all $j \ne i$. Since $e_i$ is the top edge of stack $S_i$, we also have $S^i \cap \F(e_i) = \emptyset$. Hence, $\F(e_i) = \emptyset$, and by \lemref{lemma_tight_follower}, $e_i$ is a tight edge.
\end{proof}

\LemmaVertexDisjoint*
\begin{proof}
Consider two distinct tight edges $e_i, e_j \in \U^{\br{t}}$. Suppose, for contradiction, that $e_i \cap e_j \ne \emptyset$. 
Then by \hyref{Definition}{def_follower}, either $ e_j \in \F\br{e_i}$ or $ e_i \in \F\br{e_j}$. But since $e_i$ and $e_j$ are tight, \lemref{lemma_tight_follower} gives $\F\br{e_i} = \F\br{e_j} = \emptyset$, a contradiction.
\end{proof}

\lemref{lemma_vertex_disjoint} justifies our claim that tight edges can be processed asynchronously (and they do not even need to reside at the top of the stacks for processing). We can now complete the proof of \lemref{lemma_free_pp}.

\LemmaFreePP*
\begin{proof}
The claim follows directly from \lemref{lemma_tight_edge} and its proof. The lemma constructs a dependency graph over the set of top edges in each iteration of Step~2 of \aref{Process-Stack}{fig:algo_proc_stack}, and shows that this graph is acyclic. This rules out cyclic dependencies, and guarantees the absence of both deadlock and livelock. 

To establish the absence of starvation, note that \lemref{lemma_tight_edge} guarantees that each iteration removes at least one edge from the stacks. Since the total number of stacked edges is at most $\bigOtilde{n / \epsilon}$, any processor may be blocked at Step~2(b) for at most $\bigOtilde{n / \epsilon}$ iterations.
\end{proof}

\subsection{The Deferrable Strategy}
\label{subsec:mwm_ds}

\begin{figure}[h]
    \centering
    \begin{parbox}{5.4in}{    
        \begin{mdframed}[linewidth=0.5pt, roundcorner=7pt, backgroundcolor=gray!5, frametitle={\underline{PS-MWM-DS$(V, \ell, \epsilon)$}}]
        \begin{enumerate}
        \item In parallel initialize $lock_u$, and set $\alpha_u$ and $mark_u$ to $0$ for all $u \in V$ \textcolor{teal}{\\/* processor $\ell$ initializes or sets $\Theta(n/k)$ locks or variables */}        
        \item set $S^{\ell} \leftarrow \emptyset$ \textcolor{teal}{/* initialize an empty stack */}
        \item set $R^{\ell} \leftarrow \emptyset$ \textcolor{teal}{/* for storing edges unresolved in the streaming phase */}
        \item for each edge $e=\{u,v\}$ in $\ell${th} stream do
        \begin{enumerate}
            \item \aref{Process-Edge-DS$(e, S^{\ell}, R^{\ell}, \epsilon)$}{fig:algo_proc_edge_ds}   \textcolor{teal}{/* process an edge in the streaming phase */}
        \end{enumerate}        
        \item for each edge $e=\{u,v\} \in R^{\ell}$ do 
        \begin{enumerate}
            \item \aref{Process-Edge$(e, S^{\ell}, \epsilon)$}{fig:algo_proc_edge}  \textcolor{teal}{/* process an edge in the post-processing phase */}
        \end{enumerate}
        \item wait for all processors to complete execution of Step~5 \textcolor{teal}{/* a barrier */}
        \item $\M^{\ell} \leftarrow$ \aref{Process-Stack$(S^{\ell})$}{fig:algo_proc_stack}
        \item return $\M^{\ell}$
    \end{enumerate}
        \end{mdframed}    
    
    }
    \end{parbox}    
    \caption{Modifications of \aref{Algorithm PS-MWM}{fig:algo_psmwm} for the \emph{deferrable strategy}.}
    \label{fig:algo_psmwm_ds}
\end{figure}

\begin{figure}[h]
    \centering
    \begin{parbox}{5.6in}{    
        \begin{mdframed}[linewidth=0.5pt, roundcorner=7pt, backgroundcolor=gray!5, frametitle={\underline{Process-Edge-DS$(e=\{u,v\}, S^{\ell}, R^{\ell}, \epsilon)$}}]
        \textcolor{teal}{/* Assumes access to global variables $\{\alpha_x\}_{x \in V}$ 
 and locks $\{lock_x\}_{x \in V}$ */}
    \begin{enumerate}
        \item if $w_e \leq (1+\epsilon)(\alpha_u +\alpha_v)$ then return
        \item In $\bigO{1}$ attempts, try to acquire $lock_u$ and $lock_v$ in lexicographic order of $u$ and $v$
        \item if Step~2 fails to acquire the locks then
        \begin{enumerate}
            \item if $w_e \leq (1+\epsilon)(\alpha_u +\alpha_v)$ then return
            \item else include $e$ in $R^{\ell}$, and return \textcolor{teal}{\\/* defers decision to the post-processing phase */}
        \end{enumerate}        
        \item if $w_e > (1+\epsilon)(\alpha_u +\alpha_v)$ then
        \begin{enumerate}
            \item $g_e \leftarrow w_e - (\alpha_u +\alpha_v)$
            \item increment $\alpha_u$ and $\alpha_v$ by $g_e$
            \item add $e$ to the top of $S^{\ell}$ along with $g_e$
        \end{enumerate}
        \item release $lock_u$ and $lock_v$, and return
    \end{enumerate}
        \end{mdframed}    
    }
    \end{parbox}    
    \caption{A subroutine used in \aref{Algorithm PS-MWM-DS}{fig:algo_psmwm_ds}.}
    \label{fig:algo_proc_edge_ds}
\end{figure}

Modifications to \aref{PS-MWM}{fig:algo_psmwm} for the \emph{deferrable strategy} are outlined in \aref{Algorithm PS-MWM-DS}{fig:algo_psmwm_ds}. The algorithm invokes a new subroutine, \hyperref[fig:algo_proc_edge_ds]{Process-Edge-DS}, in Step~4, and continues to use the subroutines \hyperref[fig:algo_proc_edge]{Process-Edge} and \hyperref[fig:algo_proc_stack]{Process-Stack} in Steps~5 and~7, respectively. In \hyperref[fig:algo_psmwm_ds]{PS-MWM-DS}, Steps~1--3 constitute the preprocessing phase, Step~4 is the streaming phase, and Steps~5--7 form the post-processing phase. 

Each processor $\ell \in [k]$ executes the algorithm asynchronously, except for a synchronization barrier at the start of Step~7 (via Step~6). During the streaming phase (Step~4), processors may defer processing certain edges by placing them in the data structures $\{R^{\ell}\}_{\ell \in [k]}$. These deferred edges, if any, are then processed in the post-processing phase (Step~5).

\hyperref[fig:algo_psmwm_ds]{PS-MWM-DS} invokes the subroutine \hyperref[fig:algo_proc_edge_ds]{Process-Edge-DS} only during the streaming phase. For each edge in the streams, this subroutine takes $\bigO{1}$ time. In Step~2 of \hyperref[fig:algo_proc_edge_ds]{Process-Edge-DS}, the algorithm attempts to acquire the locks corresponding to the edge's endpoints within $\bigO{1}$ tries. 
If this step fails and the edge $e$ remains eligible for inclusion in the stack, then the algorithm defers it by placing $e$ into $R^{\ell}$ (Step~3(b) of \hyperref[fig:algo_proc_edge_ds]{Process-Edge-DS}), to be handled in the post-processing phase.

At the beginning of the post-processing phase (Step~5 of \hyperref[fig:algo_psmwm_ds]{PS-MWM-DS}), each processor $\ell$ processes its deferred edges, if any, by treating the contents of $R^{\ell}$ as an edge stream.

Using the analysis of the non-deferrable strategy, we now sketch an analysis of the deferrable strategy (\aref{PS-MWM-DS}{fig:algo_psmwm_ds}).

We extend the definition of superstep (\hyref{Definition}{def_ss}), by treating the substeps of Step~3 in \hyperref[fig:algo_proc_edge_ds]{Process-Edge-DS} as one superstep, and grouping all other steps in \hyperref[fig:algo_proc_edge_ds]{Process-Edge-DS} into another superstep. We refer to this extended notion as a \emph{ds-superstep}, and the corresponding dependency graph by $G_{D}^{\br{t}}$.

Adapting \lemref{lemma_free} to $G_{D}^{\br{t}}$ and applying \lemref{lemma_free_pp} then shows that \aref{PS-MWM-DS}{fig:algo_psmwm_ds} is free from deadlock, livelock, and starvation.

\begin{lemma}
\label{lemma_algo1_space_ds} 
For any constant $\epsilon > 0$, space complexity and per-edge processing time of \aref{Algorithm PS-MWM-DS}{fig:algo_psmwm_ds} are $\bigO{k\cdot n\log n}$ and $\bigO{1}$, respectively.
\end{lemma}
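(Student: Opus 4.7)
The plan splits naturally into verifying the $\bigO{1}$ per-edge bound, which is immediate, and bounding total memory, which is where the real work lies.

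For the per-edge processing time, I would simply inspect \aref{Process-Edge-DS}{fig:algo_proc_edge_ds}: Step~1 is one comparison, Step~2 is capped at $\bigO{1}$ attempts by construction, and Steps~3--5 each perform a constant number of arithmetic and assignment operations. Since processor $\ell$ invokes \aref{Process-Edge-DS}{fig:algo_proc_edge_ds} exactly once per edge of its stream, the streaming-phase cost per edge is $\bigO{1}$.

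For the space bound, I would decompose total memory into four components: (i) $\bigO{n}$ for the shared arrays $\{\alpha_u, mark_u, lock_u\}_{u \in V}$; (ii) $\bigO{n\log n}$ for $\sum_{\ell}|S^{\ell}|$, inherited from the gain argument in \hyref{Section}{subsec:mwm_ps}, since each successful execution of Step~4 of \aref{Process-Edge-DS}{fig:algo_proc_edge_ds} still raises $\alpha_u+\alpha_v$ by a factor of $1+\epsilon$ and so there are at most $\bigO{n \log n}$ such events over the whole run; (iii) $\bigO{k}$ for processor-local scratch; and (iv) the deferred buffers $\sum_{\ell} |R^{\ell}|$. The first three pieces already fit in $\bigO{k + n\log n}$, so the task reduces to proving $\sum_{\ell}|R^{\ell}| = \bigO{k\cdot n\log n}$.

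The crux, and the step I expect to be the main obstacle, is bounding the total number of deferrals. An edge enters $R^{\ell}$ in Step~3(b) of \aref{Process-Edge-DS}{fig:algo_proc_edge_ds} only if all $\bigO{1}$ attempts of Step~2 fail and the edge remains eligible in Step~3(a). Each failed attempt is witnessed by another processor $j \ne \ell$ holding $lock_u$ or $lock_v$ inside its own critical section (between Steps~2 and 5 of its own call). I would charge each deferral to that blocking lock-holding. The \emph{productive} holdings, those that conclude with a successful stacking in Step~4, are in bijection with the $\alpha$-updates and so are bounded by $\bigO{n\log n}$ in total, again by the gain argument. The \emph{unproductive} holdings, those in which the post-acquire re-check fails, must be caused by some incident stacking that changed $\alpha_u$ or $\alpha_v$ between the deferring processor's Step~1 and the holder's Step~4; I would charge each such holding to the most recent $\alpha$-update at one of its endpoints, inflating the count by only a constant factor. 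Since each lock-holding occupies $\bigO{1}$ time, within it at most $\bigO{k}$ distinct competing processors can execute one round of Step~2 and be forced to defer. Multiplying gives $\sum_{\ell}|R^{\ell}| = \bigO{k\cdot n\log n}$. The delicate point to verify is that a single deferral is not double-counted across overlapping holdings at $lock_u$ and $lock_v$; once that accounting is pinned down, summing (i)--(iv) yields the claimed $\bigO{k\cdot n\log n}$ space bound.
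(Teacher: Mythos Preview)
Your proposal is correct and follows essentially the same approach as the paper: the $\bigO{1}$ per-edge bound is by inspection of \hyperref[fig:algo_proc_edge_ds]{Process-Edge-DS}, and the $\bigO{k\cdot n\log n}$ space bound comes from charging each deferral to a concurrent lock-holding session, with the total number of such sessions governed by the $\bigO{n\log n}$ stacking events and each session blocking at most $k-1$ competitors. The paper phrases this charging through the dependency graph $G_D^{(t)}$ and is terser on the unproductive-holding case you explicitly treat, but the underlying argument is the same.
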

\begin{proof}
During the streaming phase (Step~4 of \hyperref[fig:algo_psmwm_ds]{PS-MWM-DS}), each edge is processed by the subroutine \hyperref[fig:algo_proc_edge_ds]{Process-Edge-DS}, which takes $\bigO{1}$ time per-edge. Hence, the per-edge processing time is $\bigO{1}$.

To bound the space usage, consider the $t$th effective iteration and the corresponding dependency graph $G_D^{\br{t}}$. Suppose processor $\ell$ executes Step~4 of \hyperref[fig:algo_proc_edge_ds]{Process-Edge-DS} or Step~3 of \hyperref[fig:algo_proc_edge]{Process-Edge} for some edge $e_{\ell}$. 
Then processor $\ell$ participates in a component of $G_D^{\br{t}}$, containing $e_{\ell}$, with at most $k-1$ other processors. 
As a result, at most $k-1$ processors may execute Step~3(b) of \hyperref[fig:algo_proc_edge_ds]{Process-Edge-DS} during the $\br{t+1}$th effective iteration, each contributing at most one edge to the set $\{R^{\ell}\}_{\ell \in [k]}$.

Across all processors, this deferral occurs in at most $\bigO{n \log n}$ iterations, corresponding to the total number of edges added to the stacks. Therefore, the total number of edges stored across all $R^{\ell}$ is $\bigO{k\cdot n\log n }$.
\end{proof}

The streaming phase (Step~5 of \hyperref[fig:algo_psmwm_ds]{PS-MWM-DS}) takes $\bigO{\lmax}$ time. \aref{PS-MWM}{fig:algo_psmwm} is identical to Steps~1-3 and Steps~5-8 of \hyperref[fig:algo_psmwm_ds]{PS-MWM-DS}, if we treat the $\{R^{\ell}\}_{\ell \in [k]}$ data structures as edge streams. 

Since $|R^{\ell}| \leq |E^{\ell}| \leq \lmax$ for all $\ell \in [k]$, \hyperref[fig:algo_psmwm_ds]{PS-MWM-DS} achieves the runtime bound stated in \lemref{lemma_algo1_time_ns}.

The substeps of Step~3 in \hyperref[fig:algo_proc_edge]{Process-Edge} are identical to those of Step~4 in \hyperref[fig:algo_proc_edge_ds]{Process-Edge-DS}. 
Therefore, once all processors complete Step~5 of \aref{PS-MWM-DS}{fig:algo_psmwm_ds}, the variables $\{\alpha_u\}_{u \in V}$, scaled by $(1+\epsilon)$, form a feasible solution to the dual LP in \fref{fig:lp}, as in \hyref{Proposition}{prop_algo1_dual}.
It follows that \aref{PS-MWM-DS}{fig:algo_psmwm_ds} achieves the approximation ratio stated in \lemref{lemma_algo1_approx}.

\subsection{Polylogarithmic Runtime}
\label{subsec:plog_time}

The runtime of \aref{Algorithm PS-MWM-DS}{fig:algo_psmwm_ds} is $\bigOtilde{\lmax + n}$, which is optimal up to polylogarithmic factors when $\lmax = \bigOmega{n}$. If $\lmax =\smallO{n}$, one may ask whether this can be improved to $\bigOtilde{\lmax + n/k}$. In such cases, the total number of edges satisfies $m=\smallO{k \cdot n}$, so any offline algorithm with $\bigOtilde{k \cdot n}$ space would suffice. However, in streaming settings, the value of $\lmax$ may not be known a priori, making such instances indistinguishable from the general case. We now outline a modification of \aref{PS-MWM-DS}{fig:algo_psmwm_ds} that runs in $\bigOtilde{\lmax + n/k}$ time, yielding a polylogarithmic runtime for sufficiently large~$k$.

The preprocessing and streaming phases of \aref{PS-MWM-DS}{fig:algo_psmwm_ds} are fully parallelizable: for sufficiently large $k$, Steps~1--4 can be completed in $\bigO{1}$ time.
As shown in \hyref{Appendix}{subsec:mwm_ds}, the edge set \[\D :=\underset{\ell \in [k]}{\bigcup} \{R^{\ell} \cup S^{\ell}\}\] contains the edges of a $\br{2+\epsilon}$-approximate MWM in the input graph. 
Therefore, it suffices to design an algorithm that computes a near-optimal MWM on $\D$ in $\bigOtilde{\lmax+n/k}$ time.
Although it is possible to achieve this without loss in approximation, for example by using the algorithm in Corollary~1.2 of~\cite{huang20231}), the design of such algorithms is intricate.
Instead, we present a simpler algorithm that runs in $\bigOtilde{\lmax+n/k}$ time and incurs a factor-of-two loss in the approximation guarantee.

\begin{figure}[h]
    \centering
    \begin{parbox}{5.0in}{    
        \begin{mdframed}[linewidth=0.5pt, roundcorner=7pt, backgroundcolor=gray!5, frametitle={\underline{PS-MWM-PR$(V, \ell, \epsilon)$}}]
        \begin{enumerate}
        \item In parallel initialize $lock_u$, set $\alpha_u$ to $0$, and $M(u)$ to $\emptyset$ for all $u \in V$       
        \item set $S^{\ell}, R^{\ell} \leftarrow \emptyset$
        \item for each edge $e=\{u,v\}$ in $\ell${th} stream do
        \begin{enumerate}
            \item \aref{Process-Edge-DS$(e, S^{\ell}, R^{\ell}, \epsilon)$}{fig:algo_proc_edge_ds}
        \end{enumerate}        
        \item $R^{\ell} \leftarrow R^{\ell} \cup S^{\ell}$
        \item  for $t=1$ to $8 \ln\frac{2}{\epsilon}$ do
        \begin{enumerate}
            \item let $R^{\br{\ell, t}} := R^{\ell} \setminus \{e=\{u,v\} \in R^{\ell} \mid M(u)=v \text{ and } M(v)=u\}$
            \item for each edge $e \in  R^{\br{\ell, t}} $, compute its associated weight \[w^{\prime}_e := w_e - w_{e_u} - w_{e_v},\] where $e_x := \{x, M(x)\}$ and $w_{e_x} = 0$ if $M(x)=\emptyset$
            \item let $R^{\br{\ell, t}}$ now denote the set of edges $e$ with weights $w^{\prime}_e$
            \item $\M^{\br{\ell,t}} \leftarrow$ \aref{Reduce-To-Maximal$(V, \ell, \epsilon, R^{\br{\ell, t}})$}{fig:algo_mwm_reduce}
            \item for each edge $\{u,v\} \in \M^{\br{\ell,t}}$ do
            \begin{enumerate}                
                \item  \aref{Augment-Matching$(u,v)$}{fig:algo_mwm_augment}              
            \end{enumerate}                        
        \end{enumerate}
        \item return $\M^{\ell} := \{e=\{u,v\} \in R^{\ell} \mid M(u)=v \text{ and } M(v)=u\}$
    \end{enumerate}
        \end{mdframed}    
    
    }
    \end{parbox}    
    \caption{A modification of \aref{Algorithm PS-MWM-DS}{fig:algo_psmwm_ds}.}
    \label{fig:algo_psmwm_pr}
\end{figure}

\begin{figure}[h]
    \centering
    \begin{parbox}{5.0in}{    
        \begin{mdframed}[linewidth=0.5pt, roundcorner=7pt, backgroundcolor=gray!5, frametitle={\underline{Augment-Matching$(u, v)$}}]
        \textcolor{teal}{/* Assumes access to global variables $\{M(z)\}_{z \in V}$ 
 and locks $\{lock_z\}_{z \in V}$ */}
        \begin{enumerate}
            \item for each $x \in \{u, v\}$ do
            \begin{enumerate}
                \item let $y := M(x)$
                \item if $y = \emptyset$ then set $M(x) \gets \{u,v\} \setminus \{x\}$ and skip to the next $x$
                \item acquire $lock_x$ and $lock_y$ in lexicographic order of $x$ and $y$
                \item if $M(y) = x$ then set $M(y) \gets \emptyset$
                \item set $M(x) \gets \{u,v\} \setminus \{x\}$
                \item release $lock_x$ and $lock_y$
            \end{enumerate}
        \end{enumerate}
        \end{mdframed}
    }
    \end{parbox}
    \caption{A subroutine used in \aref{PS-MWM-PR}{fig:algo_psmwm_pr}.}
    \label{fig:algo_mwm_augment}
\end{figure}

\begin{figure}[h]
    \centering
    \begin{parbox}{5.2in}{    
        \begin{mdframed}[linewidth=0.5pt, roundcorner=7pt, backgroundcolor=gray!5, frametitle={\underline{Reduce-To-Maximal$(V, \ell, \epsilon, \A^{\ell})$}}]
        \begin{enumerate}
        \item In parallel, compute $W := \max\{w_e \mid e \in \bigcup_{\ell \in [k]} \A^{\ell}\}$
        \item In parallel, set $mark_u$ to $0$ for all $u \in V$
        \item Set $\Mt^{\ell} \leftarrow \emptyset$        
        \item for $r:= \lfloor \log_{1+\epsilon} W \rfloor$ down to $1$ do
        \begin{enumerate}
            \item Let $\B^{\ell} := \{e \in \A^{\ell} \mid w_e \geq \br{1+\epsilon}^r\}$
            \item In parallel, compute a maximal matching $\Mt^r$ in $G^r := \br{V, \bigcup_{\ell \in [k] } \B^{\ell}}$, and let $\Mt^{\br{\ell,r}} := \Mt^r \cap \B^{\ell}$
            \item for each $e=\{u,v\} \in \Mt^{\br{\ell,r}}$, if both $mark_u$ and $mark_v$ are set to $0$, then include  $e$ in $\Mt^{\ell}$ and set both $mark_u$ and $mark_v$ to $1$
        \end{enumerate}                
        \item return $\Mt^{\ell}$
    \end{enumerate}
        \end{mdframed}    
    
    }
    \end{parbox}    
    \caption{A subroutine used in \aref{PS-MWM-PR}{fig:algo_psmwm_pr} that approximates MWM via maximal matching.}
    \label{fig:algo_mwm_reduce}
\end{figure}

Modifications to \aref{PS-MWM-DS}{fig:algo_psmwm_ds} are outlined in \aref{Algorithm PS-MWM-PR}{fig:algo_psmwm_pr}. The new algorithm introduces two additional subroutines, \hyperref[fig:algo_mwm_reduce]{Reduce-To-Maximal} and \hyperref[fig:algo_mwm_augment]{Augment-Matching}, invoked in Steps~5(d) and 5(e), respectively, while retaining \hyperref[fig:algo_proc_edge_ds]{Process-Edge-DS} in Step~3. In this setup, Steps~1--2 constitute the preprocessing phase, Step~3 is the streaming phase, and Steps~4--6 comprise the post-processing phase.

Step~5 of \aref{PS-MWM-PR}{fig:algo_psmwm_pr} implements an adaptation of the $\br{2+\epsilon}$-approximate MWM algorithm from~\cite{lotker2015improved}. 
That algorithm invokes a black-box $\delta$-approximate MWM subroutine, instantiated with $\delta=5$ using the $\br{4+\epsilon}$-approximate algorithm of~\cite{lotker2007distributed}.
In our version, we replace this component with a simpler $\br{4+\epsilon}$-approximate algorithm, used as a subroutine in Step~5(d).

In each iteration of Step~5 in \aref{PS-MWM-PR}{fig:algo_psmwm_pr}, processor $\ell$ computes the gain $w^{\prime}_e$ for each non-matching edge $e \in R^{\ell}$, representing the weight improvement if $e$ replaces its incident matched edges. (Note that $w^{\prime}_e$ could be negative.) 
The processor then invokes the subroutine \hyperref[fig:algo_mwm_reduce]{Reduce-To-Maximal} using $w^{\prime}$ as the weight function.
These concurrent calls across all processors collectively yield a $\br{4+\epsilon}$-approximate MWM on the non-matching edges in $\D$.
In Step~5(e), processor $\ell$ augments the current matching using the result of its respective call.

The subroutine \hyperref[fig:algo_mwm_reduce]{Reduce-To-Maximal} is a standalone parallel algorithm for approximating an MWM, adapted from the sequential streaming algorithm of~\cite{crouch2014improved}. It defines $\bigO{\log_{1+\epsilon}W} = \bigO{\br{\log n}/\epsilon}$ geometrically decreasing thresholds for weight classes and assigns each edge to every class whose threshold it meets (Step~4(a)).
Processors concurrently invoke a maximal matching algorithm for each class, in decreasing order of thresholds (Step~4(b)). 
The resulting edges from each maximal matching are vertex-disjoint, so the current matching can be augmented concurrently using these edges (Step~4(c)).

The matching $\M:=\bigcup_{\ell \in [k]} \M^{\ell}$ returned by \aref{PS-MWM-PR}{fig:algo_psmwm_pr} is a $\br{4+\epsilon}$-approximate MMW in the input graph. This follows from the fact that \hyperref[fig:algo_mwm_reduce]{Reduce-To-Maximal} returns a $\br{4+\epsilon}$-approximate MWM, by an argument identical to that of Lemma~6 in~\cite{crouch2014improved}. Given this, Step~5 of \aref{PS-MWM-PR}{fig:algo_psmwm_pr} computes a $\br{2+\epsilon}$-approximate MWM on the edge set $\D$, using the same reasoning as in Theorem~4.5 in~\cite{lotker2015improved}. 
Since $\D$ contains the edges of a $\br{2+\epsilon}$-approximate MWM in the original graph, the final matching $\M$ inherits the $\br{4+\epsilon}$ approximation guarantee.

The subroutine \hyperref[fig:algo_mwm_reduce]{Reduce-To-Maximal} runs in $\bigOtilde{\lmax+n/k}$ time with high probability (w.h.p.). This holds because, for any constant $\epsilon > 0$, Step~4 performs $\bigO{\log n}$ iterations, and in each iteration, Step~4(b) runs in $\bigO{\lmax \cdot \log n + n/k}$ time w.h.p., by an argument identical to that of Lemma~3.8 in~\cite{lotker2007distributed}. 
Each iteration of Step~5(e) in \aref{PS-MWM-PR}{fig:algo_psmwm_pr} takes constant time, since any edge in the current matching can intersect with at most two edges in $\bigcup_{\ell \in [k]} \M^{\br{\ell, t}}$.
Because \hyperref[fig:algo_mwm_reduce]{Reduce-To-Maximal} is invoked a constant number of times in \aref{PS-MWM-PR}{fig:algo_psmwm_pr}, the total runtime of the algorithm is $\bigOtilde{\lmax+n/k}$ w.h.p.

\subsection{Proof of Theorem~\ref{thm_psmwm_ld}}
\label{subsec:thm_hd}

We extend the analysis of \hyperref[fig:algo_psmwm]{PS-MWM} to analyze \hyperref[fig:algo_psmwm]{PS-MWM-LD}. 
To do so, we modify the definition of superstep (\hyref{Definition}{def_ss}), referring to the modified notion as an \emph{ld-superstep} and denote the corresponding graph $G^{\br{t}}$ as $G_{L}^{\br{t}}$.

For a given edge, if the execution of \hyperref[fig:algo_proc_edge_ld]{Process-Edge-LD} does not encounter any contention, that is, each loop it executes (specifically, those in Step~2, Step~4, and Step~2 of the call to \hyperref[fig:algo_proc_edge]{Process-Edge}) is run at most once, then the processor is said to take one \emph{ld-superstep}. Each additional iteration of any of these loops, if executed, increases the processor's ld-superstep count by one.

\begin{lemma}
\label{lemma_free_ld}    
\aref{Algorithm PS-MWM-LD}{fig:algo_psmwm_ld} is free from deadlock, livelock, and starvation.
\end{lemma}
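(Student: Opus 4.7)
The plan is to leverage the strict hierarchy of lock acquisition in \hyperref[fig:algo_proc_edge_ld]{Process-Edge-LD}: a processor first attempts to acquire the local locks $lock_u^j$ and $lock_v^j$ (Step~2), then the group lock $glock^j$ (Step~4), and finally the global locks $lock_u$ and $lock_v$ through the call to \hyperref[fig:algo_proc_edge]{Process-Edge} in Step~5. Because locks from different layers are always requested in this fixed order, any cyclic dependency among waiting processors would have to lie entirely within a single layer. I would therefore analyze the three layers separately and then glue the bounds together.

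For the global layer I would invoke \lemref{lemma_free} directly: the delegates of different groups competing for global locks behave exactly like the processors analyzed there, so their concurrent executions of \hyperref[fig:algo_proc_edge]{Process-Edge} are free of deadlock, livelock, and starvation, and any delegate is blocked on global locks for at most $\bigOtilde{n/\epsilon}$ supersteps. For the local layer, the argument mirrors the proof of \lemref{lemma_free}: restricting \hyref{Definition}{def_ei_dag} to the edges and delegates within a single group $j$ yields a DAG (by \hyref{Proposition}{prop_ei_dag}) because local locks are acquired in lexicographic order, so in every component some delegate has no outgoing edge and can acquire both of its local locks. The group lock $glock^j$ is a single resource acquired only after a processor already holds its local locks and before it tries for the global ones; contention for it therefore forms a chain, not a cycle.

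For starvation freedom, I would bound waiting at each layer by backward induction through the hierarchy. A delegate that owns $glock^j$ and both global locks completes Step~5 in $\bigO{1}$ ld-supersteps (once the global locks are granted), and by the global bound it is blocked for at most $\bigOtilde{n/\epsilon}$ ld-supersteps total. Since at most one delegate per group holds $glock^j$ at a time, and the total number of delegates that ever execute Step~5 successfully is $\bigO{n \log n/\epsilon}$ (one per edge pushed onto any stack of the group), any wait on $glock^j$ is also $\bigOtilde{n/\epsilon}$. Propagating this bound one step further, any wait on a local lock $lock_u^j$ is $\bigOtilde{n/\epsilon}$ ld-supersteps as well, using an argument analogous to the second half of the proof of \lemref{lemma_free} applied to the local DAG: each local ld-superstep of blocking is charged against progress made by some delegate that advances through Steps~4--6.

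The main obstacle I foresee is the cross-layer interaction: a delegate holding local locks while waiting on $glock^j$ can indirectly delay other delegates waiting on those same local locks, and one must verify that the waiting chain ultimately terminates at a processor whose only pending request is for global locks, so that \lemref{lemma_free} applies at the tail of the chain. The hierarchical acquisition order keeps the chain of length at most three (local $\to$ group $\to$ global) and precludes any processor from holding a higher-layer lock while waiting on a lower-layer one, which is what makes the layered bound compose cleanly.
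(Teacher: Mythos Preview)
Your layered analysis---treating local locks, the group lock, and global locks as a strict acquisition hierarchy, invoking \lemref{lemma_free} for the global layer, and replaying its DAG argument inside each group for the local layer---is exactly the structure of the paper's proof, which frames the same decomposition in terms of components of the dependency graph $G_L^{(t)}$ restricted first to delegates executing Step~5 and then to all processors within a fixed group. Your explicit quantitative starvation bounds go somewhat beyond what the paper states (it defers those to \lemref{lemma_algo1_space_ld}), but the underlying reasoning is the same.

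There is one omission: the lemma asserts that the entire algorithm \hyperref[fig:algo_psmwm_ld]{PS-MWM-LD} is free of deadlock, livelock, and starvation, but your proposal only treats the streaming phase (\hyperref[fig:algo_proc_edge_ld]{Process-Edge-LD}). You do not address Step~7, the concurrent executions of \hyperref[fig:algo_proc_stack]{Process-Stack}. The paper closes this gap in one line by invoking \lemref{lemma_free_pp}; you should do the same.
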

\begin{proof}
Only the delegates execute Steps~4--6 of \hyperref[fig:algo_proc_edge_ld]{Process-Edge-LD}.

For each component of $G_{L}^{\br{t}}$ in which a delegate participates in Step~5 of \hyperref[fig:algo_proc_edge_ld]{Process-Edge-LD}, \lemref{lemma_free} ensures that the concurrent executions of this step is free from deadlock, livelock, and starvation.

Building on this fact, we apply the argument from \lemref{lemma_free} to the components of $G_{L}^{\br{t}}$ involving the processors in a given group $j \in [r]$. This implies that Steps~1--6, when executed concurrently by the processors in group $j$, are also free from deadlock, livelock, and starvation.

Since the dependencies in Steps~1--4 are confined within each group, and the cross-group dependencies in Steps~5--6 are resolved by the delegates, the full execution of \hyperref[fig:algo_proc_edge_ld]{Process-Edge-LD} across all groups proceeds without deadlock, livelock, or starvation. 

By \lemref{lemma_free_pp}, Step~7 of \aref{PS-MWM-LD}{fig:algo_psmwm_ld} is likewise free from deadlock, livelock, and starvation.
\end{proof}

\begin{lemma}
\label{lemma_algo1_space_ld}
For any constant $\epsilon > 0$, per-edge processing time and space complexity of \aref{PS-MWM-LD}{fig:algo_psmwm_ld} are $\bigO{n \log n}$ and $\bigO{k + r \cdot n + n\log n}$, respectively.
\end{lemma}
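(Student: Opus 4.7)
I would mirror the structure of \lemref{lemma_algo1_space_ns}, extending each step to account for the local duals $\{\alpha_u^j\}$, the local locks $\{lock_u^j\}$, the group locks $\{glock^j\}$, and the multi-level contention hierarchy introduced in \aref{Process-Edge-LD}{fig:algo_proc_edge_ld}. The absence of deadlock, livelock, and starvation (already established in \lemref{lemma_free_ld}) gives me the acyclic-dependency machinery I need; what remains is (i) to enumerate storage and (ii) to convert the ld-superstep bound into a worst-case per-edge time bound.

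\textbf{Space bound.} I would decompose the state into five pieces: the global duals and locks contribute $\bigO{n}$; the local duals and local locks contribute $\bigO{n}$ per group, totaling $\bigO{r \cdot n}$; the $r$ group locks contribute $\bigO{r}$; the collection of stacks $\bigcup_{\ell \in [k]} S^{\ell}$ stores only edges successfully pushed by \aref{Process-Edge}{fig:algo_proc_edge}, which by the same doubling argument recalled in \hyref{Section}{subsec:mwm_ps} is capped at $\bigO{n \log W / \epsilon} = \bigO{n \log n}$; and each processor retains $\bigO{1}$ private state, totaling $\bigO{k}$. Summing yields $\bigO{k + r\cdot n + n\log n}$.

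\textbf{Per-edge processing time.} A processor executing \aref{Process-Edge-LD}{fig:algo_proc_edge_ld} can accumulate ld-supersteps from three loops: the local-lock loop (Step~2), the group-lock loop (Step~4), and the global-lock loop inside \aref{Process-Edge}{fig:algo_proc_edge} invoked at Step~5. For the innermost loop, I would replay the DAG argument of \lemref{lemma_free}, restricted to delegates (whose dependency graph is a subgraph of $G_L^{\br{t}}$), to get a bound of $\bigOtilde{n/\epsilon}$. For the group-lock loop, I would observe that only delegates compete for $glock^j$, and that each acquisition either triggers a genuine global update at Step~6 or is ineffective; since the number of genuine global updates is $\bigO{n \log n / \epsilon}$ across the whole run, the waiting time on $glock^j$ is again $\bigOtilde{n/\epsilon}$. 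For the outermost local-lock loop, I would note that once the delegate currently holding $lock_u^j$ and $lock_v^j$ completes Steps~3--7, control returns to the contending processor, and this release is pegged to exactly one traversal of the two inner bounds just discussed.

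\textbf{Main obstacle.} The delicate point is that a naive composition of the three loop bounds is multiplicative and would give $\bigOmega{n^3 \log^3 n / \epsilon^3}$, wildly missing the claimed $\bigO{n \log n}$. The way out is to charge all contention, at every tier, to a single global potential: each release of any lock corresponds to either a successful push onto some $S^{\ell}$ or a provably ineffective attempt that leaves all duals unchanged, and the total number of successful pushes over the whole execution is $\bigO{n \log n / \epsilon}$. Hence a single processor cannot be blocked for more than this many ld-supersteps while waiting for its edge $e=\{u,v\}$, since during any blocked ld-superstep at least one of $\alpha_u$ or $\alpha_v$ (or their local mirrors) would have to strictly advance, and their combined growth is $\bigO{\log_{1+\epsilon} W} = \bigOtilde{1/\epsilon}$ per endpoint, bounded globally by the same $\bigO{n \log n / \epsilon}$ budget. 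This additive, amortized accounting, which parallels the single-level case of \lemref{lemma_algo1_space_ns}, is the crux of the argument and where I would be most careful.
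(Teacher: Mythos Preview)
Your space enumeration is correct and matches the paper (which simply cites \lemref{lemma_algo1_space_ns} and adds the $r\cdot n$ term for local duals and locks). Your high-level per-edge strategy---recognizing that naive composition of the three loop bounds is multiplicative and fixing it by charging all contention to the single global budget of $\bigO{n\log n}$ successful pushes---is exactly the paper's approach.

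The specific charging mechanism you give at the end, however, does not work. You assert that during any blocked ld-superstep at least one of $\alpha_u$, $\alpha_v$, or their local mirrors must strictly advance. This is false: if processor $\ell$ (edge $\{u,v\}$) is waiting on $lock_u^j$ behind a delegate $D$ (edge $\{u,x\}$) that is itself parked at Step~4 on $glock^j$, and $glock^j$ is held by another delegate $D'$ running Step~5 on an edge $\{y,z\}$ disjoint from $\{u,v\}$, then when $D'$ completes only $\alpha_y^j$, $\alpha_z^j$ may move---nothing touching $u$ or $v$ changes in that superstep. (If your claim held, the per-edge bound would actually be $\bigO{\log n}$, not $\bigO{n\log n}$, since each endpoint's dual advances at most $\bigO{\log_{1+\epsilon}W}$ times; the mismatch with the stated bound is a tell.) The paper's charge is coarser but correct: every blocked ld-superstep for $\ell$ is attributable to \emph{some} delegate executing Step~5 somewhere in the system, not necessarily one incident to $u$ or $v$; the total number of ld-supersteps during which any delegate is at Step~5 is then bounded by the global push budget $\bigO{n\log n}$, inheriting the starvation argument of \lemref{lemma_free}. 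The paper also separates the trivial case $k=r$ (where Steps~2 and~4 never loop) from $k>r$, which you may find clarifying.
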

\begin{proof}
The space bound follows from \lemref{lemma_algo1_space_ns}, with an additional $r\cdot n$ term accounting for the $r$ local copies of dual variables and the corresponding locks.

The per-edge processing time of \aref{PS-MWM-LD}{fig:algo_psmwm_ld} is the time spent in the subroutine \hyperref[fig:algo_proc_edge_ld]{Process-Edge-LD}.

Now, for $k=r$, the claim follows directly from \lemref{lemma_algo1_space_ns}, since Step~2 and Step~4 require at most one ld-superstep.

For $k>r$, consider an edge $e_{\ell}=\{u_{\ell}, v_{\ell}\} \in \bigcap_{t \in [a,b]} E_{L}^{\br{t}}$. Each iteration $t \in (a,b]$ in which processor $\ell$ executes Step~2 or Step~4 of \hyperref[fig:algo_proc_edge_ld]{Process-Edge-LD} is attributable to a delegate that was active in the $\br{t-1}$st effective iteration. These delegates execute Steps~4 and 5. If the delegates associated with $u_{\ell}$ or $v_{\ell}$ are executing only Step~4, then some delegate for a different vertex must be executing Step~5.

Since at most $\bigO{n \log n}$ edges are included in the stacks, the number of ld-supersteps during which delegates execute Step~5 is also $\bigO{n \log n}$. 

Each processor $\ell$ that executes Step~2 follows one of two execution paths: it either becomes a delegate or returns through Step~3. In both cases, the updates from Step~6 are propagated to processor $\ell$ within a single ld-superstep. Hence, in either path, processing the edge $e_{\ell}$ requires at most $\bigO{n \log n}$ ld-supersteps.
\end{proof}

By using $G_{L}^{\br{t}}$ in place of $G^{\br{t}}$ in the proof of \lemref{lemma_algo1_space_ns}, and modifying the argument to account for Step~2 and Step~4 of \hyperref[fig:algo_proc_edge_ld]{Process-Edge-LD}, we can show that processor $\ell$ takes $\bigO{|E^{\ell}| + n \log n}$ ld-supersteps to process $|E^{\ell}|$ edges. Therefore, \aref{PS-MWM-LD}{fig:algo_psmwm_ld} achieves the amortized per-edge processing time from \lemref{lemma_algo1_space_ns} and runtime from \lemref{lemma_algo1_time_ns}.

During the execution of Step~5 of \aref{PS-MWM-LD}{fig:algo_psmwm_ld}, the following invariant is maintained: for all $u \in V$ and for all $j \in [r]$, we have $\alpha_u \geq \alpha_u^j$. In Step~3 of \hyperref[fig:algo_proc_edge]{Process-Edge}, when an edge $e=\{u,v\}$ is added to a stack, we have $\alpha_u = \alpha_u^j$ and $\alpha_v = \alpha_v^j$, and for all $i \in [r]\backslash \{j\}$, it holds that $\alpha_u > \alpha_u^i$ and $\alpha_v > \alpha_v^i$. Step~3(b) of \hyperref[fig:algo_proc_edge]{Process-Edge} is the only step that updates the global dual variables. Local copies are only synchronized with their global counterpart in Step~6 \hyperref[fig:algo_proc_edge_ld]{Process-Edge-LD}, which maintains the invariant.

With the preceding invariant, it follows that any edge $e=\{u,v\}$ not included in a stack satisfies \[\br{1+\epsilon}\br{\alpha_u + \alpha_v} \geq \br{1+\epsilon}\br{\alpha_u^j + \alpha_v^j} \geq w_e.\] 
The edges included in the stacks also satisfy the dual constraint, as in \hyref{Proposition}{prop_algo1_dual}. 
Therefore, after all processors complete Steps~1-5 of \hyperref[fig:algo_psmwm_ld]{PS-MWM-LD}, the variables $\{\alpha_u\}_{u \in V}$, scaled by $(1+\epsilon)$, form a feasible solution of the dual LP in \fref{fig:lp}. Hence, \aref{PS-MWM-LD}{fig:algo_psmwm_ld} achieves the approximation ratio stated in \lemref{lemma_algo1_approx}.

\LemmaLD*
\begin{proof}
All processors execute the subroutine \hyperref[fig:algo_proc_edge_ld]{Process-Edge-LD} during the streaming phase. For each group $j \in [r]$, the updates in Step~6 of the subroutine increase the dual variables $\alpha_u^j$ and $\alpha_v^j$ by at least $\br{1+\epsilon}$. 
Thus, for each vertex $u$, the number of times some delegate executes Step~6 for any $\alpha_u^j$ is at most $r \cdot \log_{1+\epsilon} W = \bigO{r\cdot \log n}$. 
After this point, Step~1 and Step~3 ensure that no more delegates are created for $\alpha_u^j$. Summing over all $n$ vertices, we obtain a total $\bigO{r\cdot n\log n}$ updates in Step~6 where global variables are accessed.

At most $\bigO{n \log n}$ edges are included in the stacks, so the number of ld-supersteps during which delegates participate in Step~5 of \hyperref[fig:algo_proc_edge_ld]{Process-Edge-LD} is $\bigO{n \log n}$. Since at most $r$ delegates participate in each such superstep, the total number of times global variables are accessed in Steps~1-4 of \hyperref[fig:algo_proc_edge]{Process-Edge} is $\bigO{r \cdot n \log n}$.
\end{proof}

By \lemref{lemma_psmwm_ld}, Steps~1-6 of \aref{PS-MWM-LD}{fig:algo_psmwm_ld} access the global variables a total of $\bigO{r\cdot n \log n}$ times. 

In the post-processing phase, with $k$ processors executing Step~7 of \aref{PS-MWM-LD}{fig:algo_psmwm_ld} concurrently, we can only ensure that the total number of accesses to global variables is bounded by $\bigO{k \cdot n\log n}$.

For $k>r$, instead of maintaining one stack per processor, we can maintain a single stack per group, similar to a group lock, and allow all processors within a group to share their group's stack. A designated delegate from each group then participates in the execution of Step~7.

This modification ensures that the total number of accesses to global variables throughout the entire execution of \aref{PS-MWM-LD}{fig:algo_psmwm_ld} remains $\bigO{r \cdot n\log n}$, while preserving the bounds on other metrics. 

\hyref{Theorem}{thm_psmwm_ld} now follows from the preceding analysis and its extension to the \emph{deferrable strategy}.

\subsection{Distributed Implementations}
\label{subsec:dist}

Tightly coupled distributed-memory multiprocessors can be viewed as a generalization of NUMA architectures in terms of memory hierarchy. Consequently, memory-efficient algorithms for hierarchical architectures such as NUMA can be interpreted as communication-efficient algorithms for tightly coupled distributed-memory systems. This correspondence is especially clear in distributed architectures that support remote memory access. In systems based on explicit message passing (e.g., send/receive), remote memory access can be emulated by assigning processors to mediate access to shared locations via messages.

\phantomsection\label{para:ps_mwm_dm} 
From \hyref{Theorem}{thm_psmwm_ld}, we therefore obtain a \emph{single-pass} distributed streaming algorithm for computing a $\br{2+\epsilon}$-approximate MWM. 
For $r=k$, let \aref{PS-MWM-DM}{para:ps_mwm_dm} denote such a distributed implementation on a cluster with $r$ nodes, for example by implementing the deferrable strategy in a manner similar to \aref{PS-MWM-LD}{fig:algo_psmwm_ld}.
By \hyref{Theorem}{thm_psmwm_ld}, the total number of remote memory accesses by \aref{PS-MWM-DM}{para:ps_mwm_dm} is $\bigOtilde{r\cdot n}$; that is, its communication cost is $\bigOtilde{r\cdot n}$ bits. The algorithm runs in $\bigOtilde{\lmax + n}$ time and uses $\bigOtilde{n}$ space per node. (For $k>r$, we can use the non-deferrable strategy with the same performance guarantees.)

We now compare \aref{PS-MWM-DM}{para:ps_mwm_dm} with several distributed algorithms and highlight its advantages.
Multiple MPC algorithms have been developed for the MWM problem (see~\cite{assadi2019distributed, gamlath2019weighted} and the references therein).
Those with $\bigOtilde{n}$ space per node require a large number of rounds (see Table~1 in~\cite{assadi2019distributed}). 
Even if each round is treated as equivalent to a single pass over the input, these algorithms require significantly more passes than our single-pass algorithm.
Under this comparison, the only algorithm that comes close to matching \aref{PS-MWM-DM}{para:ps_mwm_dm} is the one by~\cite{assadi2019distributed}, which requires two rounds of computation and uses $\bigO{\sqrt{m/n}}$ machines, each with $\bigO{\sqrt{mn}}$ space.

\phantomsection\label{para:coreset_dm}
Let \aref{CORESET-DM}{para:coreset_dm} denote the following implementation of the algorithm from~\cite{assadi2019distributed} (see the paper for details). Distribute the edges by sending each edge to a constant number of nodes chosen uniformly at random. Then, each node runs the greedy algorithm on its local edge set; that is, it repeatedly selects the heaviest edge compatible with the current matching. The resulting matchings from all nodes are then sent to a single node, which runs the greedy algorithm again on the union of these edges. This yields a $\br{3+\epsilon}$-approximation in expectation.

Note that there exists another implementation of the algorithm from~\cite{assadi2019distributed} that, in expectation, achieves a $\br{2+\epsilon}$\allowbreak-approximation guarantee, but it is not comparable to \aref{PS-MWM-DM}{para:ps_mwm_dm} in terms of implementation complexity. This variant requires computing a near-optimal matching during post-processing, which involves intricate algorithms that may not be amenable to efficient implementations in practice.

In \aref{PS-MWM-DM}{para:ps_mwm_dm}, the edges can be deterministically distributed evenly across the nodes. 
By setting $r=\bigO{\sqrt{m/n}}$, we obtain the comparisons shown in \hyref{Table}{tab:dist}.

\begin{table}
    \centering
    \caption{Comparison of the distributed algorithms \aref{PS-MWM-DM}{para:ps_mwm_dm} and \aref{CORESET-DM}{para:coreset_dm}.}
    \begin{tabular}{lll}        
        \toprule        
        Metric & \aref{PS-MWM-DM}{para:ps_mwm_dm} & \aref{CORESET-DM}{para:coreset_dm} \\
        \midrule
        Streaming support  & Yes & No \\
        Space per node & $\bigOtilde{n}$ & $\bigO{n\sqrt{n}}$  \\        
        Approximation ratio & $2+\epsilon$ (worst-case) &  $3+\epsilon$ (expected) \\
        Computation time & $\bigOtilde{\sqrt{mn}+n}$ & $\bigOtilde{\sqrt{mn}+n}$  \\        
        Communication cost & $\bigOtilde{\sqrt{mn}}$ & $\bigOtilde{\sqrt{mn}}$ \\        
        \bottomrule
    \end{tabular}
    \label{tab:dist}
\end{table}

From \hyref{Table}{tab:dist}, \aref{PS-MWM-DM}{para:ps_mwm_dm} uses $\bigOtilde{n}$ space per node, whereas \aref{CORESET-DM}{para:coreset_dm} uses $\bigO{n\sqrt{n}}$ space per node.
Both algorithms require the same amount of computation, but \aref{PS-MWM-DM}{para:ps_mwm_dm} achieves a $\br{2+\epsilon}$-approximation guarantee in the worst case, while \aref{CORESET-DM}{para:coreset_dm} achieves a $\br{3+\epsilon}$-approximation guarantee in expectation.

In \hyref{Table}{tab:dist}, the \emph{communication cost} refers to the total number of bits communicated by all nodes during the execution of an algorithm. 
Although both algorithms achieve optimal communication cost (up to logarithmic factors)~\cite{huang2015communication}, the nature of communication differs. In \aref{PS-MWM-DM}{para:ps_mwm_dm}, communication is distributed across nodes throughout the execution.
In contrast, \aref{CORESET-DM}{para:coreset_dm} requires $\bigOtilde{\sqrt{mn}}$ bits to be sent to a single node, creating a potential bottleneck.

In \aref{PS-MWM-DM}{para:ps_mwm_dm}, the number of nodes $r$ is an adaptable parameter, independent of the number of edges in the graph. 
In contrast, reducing the number of nodes in \aref{CORESET-DM}{para:coreset_dm} necessitates a proportional increase in space per node to accommodate the $\Omega\br{m}$ total edges, since each edge is sent to a constant number of nodes. 
As a result, if the cluster lacks sufficient memory to collectively store these $\Omega\br{m}$ edges, \aref{CORESET-DM}{para:coreset_dm} becomes infeasible.
\aref{PS-MWM-DM}{para:ps_mwm_dm} does not face this limitation.

A coreset-based sequential streaming algorithm can compute a $\br{3+\epsilon}$-approximate MWM in a single pass but requires random edge arrival and $\bigOtilde{n\sqrt{n}}$ space~\cite{assadi2019coresets}. Even if one could afford $\bigOtilde{n\sqrt{n}}$ space per node, the random edge arrival assumption is fundamentally limiting in the poly-streaming setting, which allows arbitrary distribution of data across streams.

These comparisons underscore the advantages of \aref{PS-MWM-DM}{para:ps_mwm_dm} for distributed streaming computation. To the best of our knowledge, it is the first \emph{single-pass} distributed streaming algorithm for approximating a maximum weight matching.

We note that the instance sizes reported in the empirical study of~\cite{assadi2019distributed} appear to be inconsistent with publicly available data. The authors claim to have evaluated their algorithm on graphs with over 500 billion edges, attributing the largest instance to the publicly available Friendster graph from the SNAP dataset. However, the SNAP version of this graph contains fewer than two billion edges, and no publicly available variant is known that matches the reported size. As such, the scalability claims made in that study require further verification. In light of this discrepancy, we conjecture that our distributed streaming algorithm would outperform the algorithm of~\cite{assadi2019distributed} on truly massive graphs.

\subsection{Further Use of the Dual Formulation}
\label{subsec:app_duals}

We previously used the dual formulation of the MWM problem (\fref{fig:lp}) to establish the approximation guarantee of our algorithm (\hyref{Section}{subsec:mwm_analyses}) and to extend it to other settings (\hyref{Section}{sec:numa}).
We now consider two further applications: assessing solution quality and designing alternative algorithms.

We describe several dual update rules that produce feasible dual solutions, yielding empirical upper bounds on the weight of an MWM. These rules are also useful for improving solution quality in practice.

Recall the dual LP listed in \fref{fig:lp}. 
For a graph $G=(V,E)$, let $\M^{*}$ be an MWM, and let $\{y_u\}_{u \in V}$ be any feasible dual solution.
By LP duality, we have $w\br{\M^{*}} \leq \sum_{u \in V} y_u$, so the dual objective provides an upper bound on the optimal matching weight. 

We used this fact in our approximation analysis (\lemref{lemma_algo1_approx}). By \hyref{Proposition}{prop_algo1_dual}, the dual solution was $\{\br{1+\epsilon}\alpha_u\}_{u \in V}$. 
This solution was generated by the dual update rule used in Steps~3(a)--(b) of~\hyperref[fig:algo_proc_edge]{Process-Edge} (and Steps~4(a)--(b) of~\hyperref[fig:algo_proc_edge_ds]{Process-Edge-DS}). 
The rule is rooted in the local-ratio technique~\cite{bar2004local}, a general approximation framework adapted to matching by~\cite{paz20182+}.

We now explore additional dual update rules to empirically assess solution quality and consider implications of these rules.

We consider five simple dual update rules, each producing a feasible dual solution and thus an upper bound on $w\br{\M^{*}}$.
These rules are independent of any matching algorithm and can be used individually and jointly to derive tighter instance-specific upper bounds. The general procedure is as follows: initialize all dual variables $\{y_u\}_{u \in V}$ to zero. 
For any edge $e=\{u, v\}$, if $w_e \leq y_u + y_v$, then do nothing. 
Otherwise, compute $\delta_e \leftarrow w_e - \br{y_u + y_v}$, and update the dual variables using one of the rules in \hyref{Table}{tab:dual_rules}.

\begin{table}[h]
    \centering
    \caption{For an edge $e=\{u,v\}$, if $w_e > y_u + y_v$ then compute $\delta_e \leftarrow w_e - \br{y_u + y_v}$, and apply one of the following rules.}
    \begin{tabular}{ll}
    \toprule
    Identifier          &  Dual Update Rule\\
    \toprule
    \multirow{2}{*}{\emph{UniRelaxed}} & $y_u \leftarrow y_u + \delta_e$\\
                   & $y_v \leftarrow y_v + \delta_e$\\
    \midrule                   
    \multirow{2}{*}{\emph{UniTight}} & $y_u \leftarrow y_u + \delta_e/2$\\
                   & $y_v \leftarrow y_v + \delta_e/2$\\
    \midrule  
    \multirow{2}{*}{\emph{ArgMax}} & $x \leftarrow argmax_{\{u,v\}} \{y_u, y_v\}$\\
                   & $y_x \leftarrow y_x + \delta_e$ \\
    \midrule                  
    \multirow{2}{*}{\emph{ArgMin}} & $x \leftarrow argmin_{\{u,v\}} \{y_u, y_v\}$\\
                   & $y_x \leftarrow y_x + \delta_e$ \\
    \midrule
    \multirow{2}{*}{\emph{ArgRand}} & pick $x \in \{u,v\}$ uniformly at random \\
                   & $y_x \leftarrow y_x + \delta_e$ \\        
    \bottomrule
    \end{tabular}    
    \label{tab:dual_rules}
\end{table}

These rules are directly applicable in streaming settings.
One may apply multiple rules to the same instance or design additional variants, for example, selecting vertices based on degree (static or dynamic) instead of uniformly at random.

Each rule in \hyref{Table}{tab:dual_rules} yields a feasible dual solution satisfying the constraints in \hyref{Figure}{fig:lp}.
For a graph $G=(V,E)$, we compute the dual objective $Y=\sum_{u \in V} y_u$ for each rule and take $Y_{\min}$ as the smallest among them. Comparing $w\br{\M}$ with $Y_{\min}$ gives an empirical estimate of approximation quality without computing the true optimum. If $Y_{\min} \approx w\br{\M^{*}}$, then $Y_{\min}$ serves as a tight a posteriori upper bound on optimality.

Rule UniRelaxed mirrors the update rule used in \hyperref[fig:algo_proc_edge]{Process-Edge} (and \hyperref[fig:algo_proc_edge_ds]{Process-Edge-DS}), except that the $\epsilon$ factor is omitted. Rules UniRelaxed and UniTight distribute $\delta_e$ uniformly among endpoints, with UniTight doing so more conservatively. The remaining rules exploit structural asymmetries and edge orderings.

As shown in \hyref{Section}{sec:evals}, these rules reveal that our algorithm often produces matchings of significantly better quality than its worst-case approximation ratio suggests.

A natural question is whether these rules, like UniRelaxed, can lead to useful algorithms. 
The answer is yes: any of them can replace UniRelaxed in the \hyperref[fig:algo_proc_edge]{Process-Edge} (and \hyperref[fig:algo_proc_edge_ds]{Process-Edge-DS}) subroutine,
but only UniTight yields a provable approximation guarantee.

In the post-processing phase, we previously used edge gain values $g_e$ to reconstruct edge orderings. However, the analysis does not depend on the absolute values of $g_e$. An equivalent approach is to use auxiliary variables $\{z_u\}_{u \in V}$, initialized to zero. For a fixed constant $c_z > 0$, whenever an edge updates the duals, increment $z_u$ and $z_v$ by $c_z$ and store $z_e=z_u + z_v$ instead of $g_e$. An edge is considered \emph{tight} if $z_e=z_u+z_v$, and $\alpha_x$ and $g_e$ can be replaced by $z_x$ and $c_z$, respectively.

For any rule in \hyref{Table}{tab:dual_rules}, the number of updates to each $\alpha_u$ is at most $\log_{1+\epsilon}\br{nW} = \bigO{\frac{\log n}{\epsilon}}$, where $W = \bigO{poly(n)}$ is the normalized maximum edge weight. Hence, for any constant $\epsilon > 0$, the total number of dual updates is bounded by $\bigO{n \log n}$. Combined with the alternate edge-ordering approach, this ensures that the bounds in \lemref{lemma_algo1_space_ns}, \lemref{lemma_algo1_time_ns}, and \lemref{lemma_algo1_space_ds} hold for all listed rules.

Rule UniTight achieves the same approximation guarantee as UniRelaxed, as can be confirmed by adapting the proofs of \hyref{Proposition}{prop_algo1_aux} and \hyref{Proposition}{prop_algo1_dual}. The remaining rules do not yield provable approximation bounds, but often lead to better solutions in practice (\hyref{Section}{sec:evals}).

This raises the question of whether simple dual update rules can be extended to achieve stronger approximation guarantees, particularly in multi-pass settings. Substantial improvements in approximation are known to require multiple passes~\cite{kapralov2021space}. Although recent work has simplified the design of such algorithms (see~\cite{assadi2024simple}) and these techniques can be adapted to the poly-streaming setting, they remain well beyond the level of simplicity needed for efficient implementation in practice.

A recurring theme in the literature is that multiple rounds of adaptive computation using a simple algorithm can substantially amplify its approximation guarantee.
For example, \hyref{Appendix}{subsec:plog_time} demonstrates how a $\br{4+\epsilon}$-approximation can be amplified to $\br{2+\epsilon}$.
Similarly,~\cite{mcgregor2005finding} showed that a $6$-approximation can be improved to $\br{2+\epsilon}$ through adaptive passes.
Exploring such amplification effects within the framework of simple dual update rules remains a promising direction for future research.
\section{Deferred Empirical Details}
\label{sec:app_evals}

\subsection{Detailed Datasets}
\label{subsec:app_datasets}

\hyref{Table}{tab:datasets_p1} and \hyref{Table}{tab:datasets_p2} describe the 46 graphs in our datasets. Two graphs appear in both the ER graphs and the ER-dv graphs (listed at the bottom of the tables).
In \hyref{Table}{tab:datasets_p1}, we refer to the first eight graphs, middle eight, and last eight as \emph{the SSW graphs}, \emph{the BA graphs}, and \emph{the ER graphs}, respectively. 
In \hyref{Table}{tab:datasets_p2}, the first eight, the middle eight, and the last eight are referred to as \emph{the UA-dv graphs}, \emph{the UA graphs}, and \emph{the ER-dv graphs}, respectively.

The first six graphs in the SSW graphs are the largest from \href{https://sparse.tamu.edu/}{SuiteSparse Matrix Collection} \cite{davis2011university}, while the last two are the largest publicly available graphs from \href{https://webdatacommons.org/hyperlinkgraph/index.html}{Web Data Commons} \cite{meusel2015graph}. 
The BA graphs are generated using the \emph{Barabási–Albert (BA) model} \cite{albert2002statistical}. 
The UA graphs are generated using the \emph{uniform attachment (UA) model} \cite{pekoz2013total}, which retains only the growth component of the BA model. 
The ER graphs and the ER-dv graphs are generated using the $G(n,p)$ variant of the \Erdos–\Renyi \ random graph model \cite{erdos1960evolution}.
The \emph{-dv} variants were included to examine how varying graph density affects algorithmic performance.

For the SSW graphs, the reported number of vertices excludes isolated vertices, and the number of edges excludes self-loops. 
For the BA graphs and the UA graphs, $x$ in $BA\_x$ or $UA\_x$ denotes the number of edges added for each new vertex. 
In $UA\_x\_y$, $x$ has the same meaning, and $y$ indicates the number of vertices in millions. The initial seed graphs used to generate the BA graphs, the UA graphs, and the UA-dv graphs are sampled from $G(n,p)$ with $n=262{,}144$ and $p=0.01$. 
Graphs labeled $ERx\_y$ have density $p=1/x$ and contain $y$ billion edges.

Three graphs from the \href{https://sparse.tamu.edu/}{SuiteSparse Matrix Collection} (GAP-kron, GAP-urand, MOLIERE\_2016) are weighted.
For all other graphs, edge weights are assigned uniformly at random from the range $[1,n^2]$, where $n$ is the number of vertices.
In the BA, UA, and UA-dv graphs, neighbors of each new vertex are sampled with replacement, potentially introducing multi-edges. Accordingly, many of these graphs are multigraphs. 
The two largest graphs in the SSW graphs are also multigraphs. 
The UA-dv and ER-dv graphs are specifically included to study the effects of density variation relative to the UA and ER graphs.

\subsection{Detailed Experimental Setup}
\label{subsec:app_setup}

\begin{table}
  \centering
  \caption{A description of the datasets}
  \begin{subtable}{0.48\textwidth}
    \centering
    \caption{Part 1.The first eight, middle eight, and last eight graphs are referred to as \emph{the SSW graphs}, \emph{the BA graphs}, and \emph{the ER graphs}, respectively. For the SSW graphs, the vertex and edge counts exclude isolated vertices and self-loops, respectively. For the BA graphs, $BA\_x$ denotes the number of edges added per new vertex. $ERx\_y$ denotes a graph with density $1/x$ and $y$ billion edges.}
    {\small
    \begin{tabular}{lrrr}
    \toprule
    \multirow{2}{*}{Graph} & \# of Vertices  & \# of Edges     & \multirow{2}{*}{Density} \\
                           & (in million) & (in billion) & \\
    \midrule
    mycielskian20  & 0.79                     & 1.36                  & 4.38E-03 \\
    com-Friendster & 65.61                    & 1.81                  & 8.39E-07 \\
    GAP-kron       & 63.07                    & 2.11                  & 1.06E-06 \\
    GAP-urand      & 134.22                   & 2.15                  & 2.38E-07 \\
    MOLIERE\_2016  & 30.22                    & 3.34                  & 7.31E-06 \\
    AGATHA\_2015   & 183.96                   & 5.79                  & 3.42E-07 \\
    WDC\_2014      & 1597.59                  & 64.15                 & 5.03E-08 \\
    WDC\_2012      & 3438.46                  & 127.38                & 2.15E-08 \\
                   &                          &                       &          \\
    BA\_512       & 8.65   & 4.64    & 1.24E-04 \\
    BA\_1024      & 8.65   & 8.93    & 2.39E-04 \\
    BA\_2048      & 8.65   & 17.52   & 4.68E-04 \\
    BA\_4096      & 8.65   & 34.70   & 9.27E-04 \\
    BA\_8192      & 8.65   & 69.06   & 1.85E-03 \\
    BA\_16384     & 8.65   & 137.78  & 3.68E-03 \\
    BA\_32768     & 8.65   & 275.22  & 7.36E-03 \\
    BA\_65536     & 8.65   & 550.10  & 1.47E-02 \\
                   &                          &                       &          \\
    ER2\_256        & 1.01                     & 256.00                & 5.00E-01 \\
    ER1\_512        & 1.01                     & 512.00                & 1.00E+00 \\
    ER2\_512        & 1.43                     & 512.00                & 5.00E-01 \\
    ER1\_1024       & 1.43                     & 1024.00               & 1.00E+00 \\
    ER2\_1024       & 2.02                     & 1024.00               & 5.00E-01 \\
    ER1\_2048       & 2.02                     & 2048.00               & 1.00E+00 \\
    ER2\_2048       & 2.86                     & 2048.00               & 5.00E-01 \\
    ER1\_4096       & 2.86                     & 4096.00               & 1.00E+00\\
    \bottomrule
    \end{tabular}
    }
   \label{tab:datasets_p1} 
  \end{subtable}
  \hfill
  \begin{subtable}{0.48\textwidth}
    \centering
    \caption{Part 2. The first eight, middle eight, and last eight graphs are referred to as \emph{the UA-dv graphs}, \emph{the UA graphs}, and \emph{the ER-dv graphs}, respectively. 
    For the UA graphs, $UA\_x$ denotes the number of edges added per new vertex. In the UA-dv graphs, $x$ has the same meaning and $y$ denotes the number of vertices in millions. $ERx\_y$ denotes a graph with density $1/x$ and $y$ billion edges.}
    {\small
    \begin{tabular}{lrrr}
    \toprule
    \multirow{2}{*}{Graph} & \# of Vertices  & \# of Edges     & \multirow{2}{*}{Density} \\
                           & (in million) & (in billion) & \\
    \midrule
    UA\_4096\_67  & 67.37  & 275.22  & 1.21E-04 \\
    UA\_8192\_67  & 67.37  & 550.10  & 2.42E-04 \\
    UA\_2048\_134 & 134.48 & 275.22  & 3.04E-05 \\
    UA\_4096\_134 & 134.48 & 550.10  & 6.08E-05 \\
    UA\_1024\_268 & 268.70 & 275.22  & 7.62E-06 \\
    UA\_2048\_268 & 268.70 & 550.10  & 1.52E-05 \\
    UA\_512\_537  & 537.13 & 275.22  & 1.91E-06 \\
    UA\_1024\_537 & 537.13 & 550.10  & 3.81E-06 \\
                   &                          &                       &          \\
    UA\_1024      & 8.65   & 8.93    & 2.39E-04 \\
    UA\_2048      & 8.65   & 17.52   & 4.68E-04 \\
    UA\_4096      & 8.65   & 34.70   & 9.27E-04 \\
    UA\_8192      & 8.65   & 69.06   & 1.85E-03 \\
    UA\_16384     & 8.65   & 137.78  & 3.68E-03 \\
    UA\_32768     & 8.65   & 275.22  & 7.36E-03 \\
    UA\_65536     & 8.65   & 550.10  & 1.47E-02 \\
    UA\_131072    & 8.65   & 1099.86 & 2.94E-02 \\
                   &                          &                       &          \\
    ER128\_32       & 2.86                     & 32.00                 & 7.81E-03 \\
    ER64\_64        & 2.86                     & 64.00                 & 1.56E-02 \\
    ER32\_128       & 2.86                     & 128.00                & 3.12E-02 \\
    ER16\_256       & 2.86                     & 256.00                & 6.25E-02 \\
    ER8\_512        & 2.86                     & 512.00                & 1.25E-01 \\
    ER4\_1024       & 2.86                     & 1024.00               & 2.50E-01 \\
    ER2\_2048       & 2.86                     & 2048.00               & 5.00E-01 \\
    ER1\_4096       & 2.86                     & 4096.00               & 1.00E+00\\
    \bottomrule
    \end{tabular}
    }
    \label{tab:datasets_p2}
  \end{subtable}
\end{table}

We evaluated the algorithms in a setting where edges become available in edge streams as soon as the algorithms are ready to process them. 
This setup reflects a practical worst-case scenario, independent of how edge streams are generated. 
In fact, delayed edge arrivals would only reduce contention on shared variables, making execution easier.

To simulate a steady flow of edges (and minimize delays in edge availability), both edge generation and processing are confined to the cores of a single node.
Each experiment proceeds in multiple synchronous rounds.
In every round, all processors within the node collaborate to generate $k$ edge streams, each containing a bounded number of edges.
Then $k$ processors, uniformly selected from distinct physical groups (e.g., sockets), simultaneously execute the streaming phase of the algorithm on those edge streams.

In each round, every processor generates a random portion of the graph. For example, for a BA\_x graph, processors collectively sample a random new vertex and attach $x$ edges based on the current degree distribution. 
Each processor maintains up to 8192 buffers and generates edges such that no buffer exceeds 8192 entries. 
Edges are randomly assigned to buffers, and each buffer is independently permuted to vary edge arrival order.
Once the generation is complete, the $k$ streaming processors process their assigned edge buffers in parallel.

For the SSW graphs, we partitioned the edges of each graph into multiples of 128 groups and stored each group in a separate file. (Isolated vertices and self-loops were removed during this process.) This enabled consistent buffering and streaming across all graphs. During execution, processors collectively read and buffer edges from these files using the same strategy described above.

Since streaming alternates with generation over several rounds, streaming time is measured as the sum of the critical-path durations across streaming rounds. All reported metrics, including runtime and effective iterations, are averaged over five runs. Observed variances were negligible, so we report only averages.

\subsection{Detailed Space Usage}
\label{subsec:app_space}

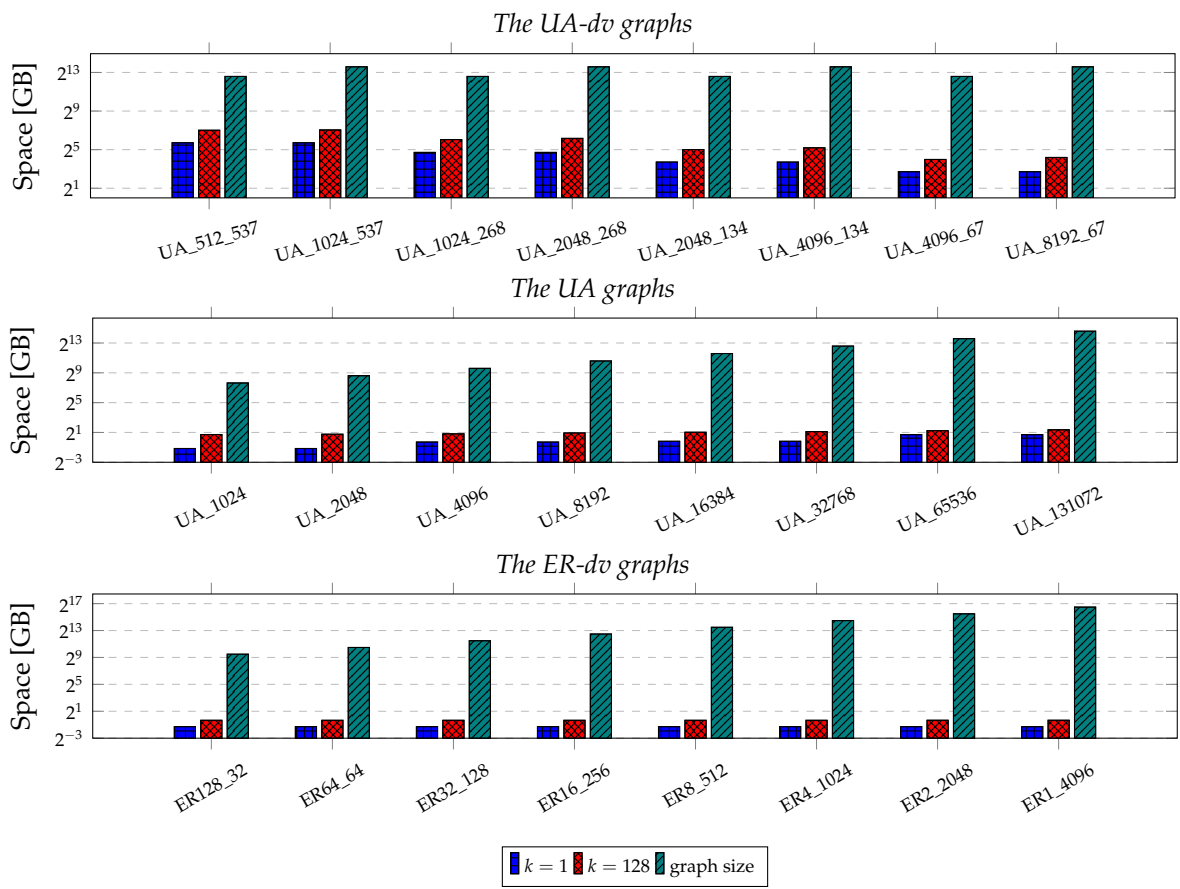
\begin{figure}[h]
\begin{subfigure}{\linewidth}{\emph{The UA-dv graphs}}
\centering
\begin{tikzpicture}
\begin{axis}[   
    ymin=1,
    width=16cm,
    height=3.5cm,
    symbolic x coords={UA\_512\_537, UA\_1024\_537, UA\_1024\_268, UA\_2048\_268, UA\_2048\_134, UA\_4096\_134, UA\_4096\_67, UA\_8192\_67},
    ylabel={Space [GB]},
    enlargelimits=0.14,
    enlarge y limits=upper,    
    ybar, bar width=8pt,
    ymajorgrids=true,
    grid style=dashed,    
    ymode=log,
    log basis y={2},    
    log origin=infty,
    point meta=explicit,
    x tick label style={rotate=15},
    x tick label style={font=\scriptsize},
    y tick label style={font=\scriptsize},
    ytick distance=16
    ]

\addplot [fill=blue, postaction={pattern=grid}] table [meta=k1] \spaceuadv;

\addplot [fill=red, postaction={pattern=crosshatch}] table [y=k128, meta=k128] \spaceuadv;

\addplot [fill=teal, postaction={pattern=north east lines}] table [y=gsize, meta=gsize] \spaceuadv;
\end{axis}
\end{tikzpicture}
\end{subfigure}
\begin{subfigure}{\linewidth}{\emph{The UA graphs}}
\centering
\begin{tikzpicture}
\begin{axis}[  
    ymin=0.125,
    width=16cm,
    height=3.5cm,
    symbolic x coords={UA\_1024, UA\_2048, UA\_4096, UA\_8192, UA\_16384, UA\_32768, UA\_65536, UA\_131072},
    ylabel={Space [GB]},
    enlargelimits=0.14,
    enlarge y limits=upper,    
    ybar, bar width=8pt,
    ymajorgrids=true,
    grid style=dashed,
    ymode=log,
    log basis y={2},    
    log origin=infty,
    point meta=explicit,
    x tick label style={rotate=25},
    x tick label style={font=\scriptsize},
    y tick label style={font=\scriptsize},
    ytick distance=16
    ]

\addplot [fill=blue, postaction={pattern=grid}] table [meta=k1] \spaceua;

\addplot [fill=red, postaction={pattern=crosshatch}] table [y=k128, meta=k128] \spaceua;

\addplot [fill=teal, postaction={pattern=north east lines}] table [y=gsize, meta=gsize] \spaceua;
\end{axis}
\end{tikzpicture}
\end{subfigure}
\begin{subfigure}{\linewidth}{\emph{The ER-dv graphs}}
\centering
\begin{tikzpicture}
\begin{axis}[        
    ymin=0.125,
    width=16cm,
    height=3.5cm,
    symbolic x coords={ER128\_32, ER64\_64, ER32\_128, ER16\_256, ER8\_512, ER4\_1024, ER2\_2048, ER1\_4096},
    ylabel={Space [GB]},
    enlargelimits=0.14,
    enlarge y limits=upper,    
    ybar, bar width=8pt,
    ymajorgrids=true,
    grid style=dashed, 
    legend style={        
        font=\scriptsize,
        at={(0.5,-0.75)},
        anchor=north,
        legend columns=-1
        },
    ymode=log,
    log basis y={2},  
    log origin=infty,
    point meta=explicit,
    x tick label style={rotate=30},
    x tick label style={font=\scriptsize},
    y tick label style={font=\scriptsize},
    ytick distance=16
    ]

\addplot [fill=blue, postaction={pattern=grid}] table [meta=k1] \spaceerdv;

\addplot [fill=red, postaction={pattern=crosshatch}] table [y=k128, meta=k128] \spaceerdv;

\addplot [fill=teal, postaction={pattern=north east lines}] table [y=gsize, meta=gsize] \spaceerdv;
\legend{$k=1$, $k=128$, graph size}
\end{axis}
\end{tikzpicture} 
\end{subfigure}
\caption{Memory used by the algorithm and the corresponding graph size (space needed to store the entire graph in CSR format). Note that the $y$-axes are in a logarithmic scale.}
\label{fig:space2}
\end{figure}

In \fref{fig:space} (\hyref{Section}{subsec:evals_space}), we presented the space usage of our algorithm alongside the corresponding graph sizes in compressed sparse row (CSR) format for the SSW, BA, and ER graphs. 
\fref{fig:space2} extends this by showing the same for the UA-dv, UA, and ER-dv graphs.

\begin{table}[h]
    \centering    
    \caption{Values of $r$ (number of groups) used in the evaluation.}
    \begin{tabular}{c|l}
    \toprule
        $r$ & Graphs \\
         \toprule
       $\min\{k,2\}$  & WDC\_2012\\
      \midrule
      \multirow{2}{*}{$\min\{k,4\}$} & mycielskian20, GAP-urand, MOLIERE\_2016, WDC\_2014 \\             
            & \emph{the BA graphs} (largest four)\\
      \midrule
      \multirow{1}{*}{$\min\{k,8\}$} & com-Friendster, GAP-kron, AGATHA\_2015\\            
      \midrule
      \multirow{2}{*}{$\min\{k,16\}$} &  \emph{the BA graphs} (smallest four)\\
      & \emph{the ER graphs}, \emph{the UA graphs}, \emph{the ER-dv graphs}\\      
    \bottomrule
    \end{tabular}
    \label{tab:rv}
\end{table}

We observed that using more than one group ($r>1$) with many processors ($k=128$) consistently yields better runtime speedups.
Thus, for $k\geq 16$, suitable values of $r$ lie in the range $(1,16]$, guided by our system's architecture, which includes two sockets and eight memory controllers \cite{Milan}. However, for many graphs, speedup gains plateau beyond a certain $r$, as different graph classes influence the algorithm's memory access patterns in distinct ways, depending on factors such as density, structures, and memory hierarchy interactions.

To avoid unnecessary space usage, we selected the largest value of $r$ (prior to the speedup plateau) from $\{2,4,8,16\}$ based on a small number of runs with $k=128$. 
The selected $r$ values for each graph are listed in \hyref{Table}{tab:rv}.
The effects of varying $r$ are demonstrated using the UA-dv graphs in \hyref{Appendix}{subsec:app_localize}.

\subsection{Detailed Solution Quality}
\label{subsec:app_quality}

\begin{figure}
\begin{subfigure}{\linewidth}{\emph{The UA-dv graphs}}
\centering
\begin{tikzpicture}
\begin{axis}[
    ymin=70,    
    width=16cm,
    height=3.5cm,    
    ylabel={min-OPT \%},
    ylabel style={font=\small},
    enlargelimits=0.14,
    enlarge y limits=upper,    
    ybar, bar width=6pt,
    point meta=explicit,
    symbolic x coords={UA\_4096\_67, UA\_8192\_67, UA\_2048\_134, UA\_4096\_134, UA\_1024\_268, UA\_2048\_268, UA\_512\_537, UA\_1024\_537},
    x tick label style={rotate=15},    
    x tick label style={font=\scriptsize},
    y tick label style={font=\scriptsize},
    ytick distance=10,
    nodes near coords={}    
]
\addplot [fill=teal, postaction={pattern=north east lines}] table [meta=k1] \squadv;
\addplot [fill=red, postaction={pattern=crosshatch}] table [y=k128, meta=k128] \squadv;
\addplot [fill=blue, postaction={pattern=grid}] table [y=algdu, meta=algdu] \squadv;
\addplot [fill=orange, postaction={pattern=dots}] table [y=algss, meta=algss] \squadv;
\end{axis}
\end{tikzpicture}
\end{subfigure}
\begin{subfigure}{\linewidth}{\emph{The UA graphs}}
\centering
\begin{tikzpicture}
\begin{axis}[
    ymin=50,    
    width=16cm,
    height=3.5cm,    
    ylabel={min-OPT \%},
    ylabel style={font=\small},
    enlargelimits=0.14,
    enlarge y limits=upper,    
    ybar, bar width=6pt,
    point meta=explicit,
    symbolic x coords={UA\_1024, UA\_2048, UA\_4096, UA\_8192, UA\_16384, UA\_32768, UA\_65536, UA\_131072},
    x tick label style={rotate=15},    
    x tick label style={font=\scriptsize},
    y tick label style={font=\scriptsize},
    ytick distance=10,
    nodes near coords={}    
]
\addplot [fill=teal, postaction={pattern=north east lines}] table [meta=k1] \squa;
\addplot [fill=red, postaction={pattern=crosshatch}] table [y=k128, meta=k128] \squa;
\addplot [fill=blue, postaction={pattern=grid}] table [y=algdu, meta=algdu] \squa;
\addplot [fill=orange, postaction={pattern=dots}] table [y=algss, meta=algss] \squa;
\end{axis}
\end{tikzpicture}    
\end{subfigure}
\begin{subfigure}{\linewidth}{\emph{The ER-dv graphs}}
\centering
\begin{tikzpicture}
\begin{axis}[
    ymin=70,    
    width=16cm,
    height=3.5cm,    
    ylabel={min-OPT \%},
    ylabel style={font=\small},
    enlargelimits=0.14,
    enlarge y limits=upper,
    legend style={
        font=\scriptsize,
        at={(0.5,-0.6)},
        anchor=north,
        legend columns=-1
    },
    ybar, bar width=6pt,
    point meta=explicit,
    symbolic x coords={ER128\_32, ER64\_64, ER32\_128, ER16\_256, ER8\_512, ER4\_1024, ER2\_2048, ER1\_4096},
    x tick label style={rotate=15},
    x tick label style={font=\scriptsize},
    y tick label style={font=\scriptsize},
    ytick distance=5,
    nodes near coords={}    
]
\addplot [fill=teal, postaction={pattern=north east lines}] table [meta=k1] \sqerdv;
\addplot [fill=red, postaction={pattern=crosshatch}] table [y=k128, meta=k128] \sqerdv;
\addplot [fill=blue, postaction={pattern=grid}] table [y=algdu, meta=algdu] \sqerdv;
\addplot [fill=orange, postaction={pattern=dots}] table [y=algss, meta=algss] \sqerdv;
\legend{$k=1$, $k=128$, ALG-d, ALG-s}
\end{axis}
\end{tikzpicture} 
\end{subfigure}
\caption{Comparisons of \emph{min-OPT percent} obtained by different algorithms. \emph{ALG-d} denotes the best results from four dual update rules described in \hyref{Appendix}{subsec:app_duals}, and \emph{ALG-s} denotes the algorithm of Feigenbaum et al.~\cite{feigenbaum2005graph}.}
\label{fig:sq2}
\end{figure}

\begin{figure}
\begin{subfigure}{\linewidth}
\centering
\begin{tikzpicture}
\begin{axis}[
    ymin=60,    
    width=16cm,
    height=3.5cm,    
    ylabel={min-OPT \%},
    ylabel style={font=\small},
    enlargelimits=0.14,
    enlarge y limits=upper,    
    legend style={
        font=\large,
        at={(0.5,1.5)},
        anchor=north,
        legend columns=-1
    },
    ybar, bar width=6pt,
    point meta=explicit,
    symbolic x coords={mycielskian20,com-Friendster,GAP-kron, GAP-urand, MOLIERE\_2016, AGATHA\_2015, WDC\_2014, BA\_512},
    x tick label style={rotate=15},
    x tick label style={font=\scriptsize},
    y tick label style={font=\scriptsize},
    ytick distance=5,
    nodes near coords={}    
]
\addplot [fill=red, postaction={pattern=crosshatch}] table [meta=relaxed] \sqgreedy;
 \addplot [fill=blue, postaction={pattern=grid}] table [y=algdu, meta=algdu] \sqgreedy;
\addplot [fill=violet, postaction={pattern=north east lines}] table [y=greedy, meta=greedy] \sqgreedy;
\legend{\emph{UniRelaxed}, \emph{ALG-d}, \emph{Greedy}}
\end{axis}
\end{tikzpicture}
\end{subfigure}
\begin{subfigure}{\linewidth}
\centering
\begin{tikzpicture}
\begin{axis}[
    ymin=50,    
    width=16cm,
    height=3.5cm,    
    ylabel={min-OPT \%},
    ylabel style={font=\small},
    enlargelimits=0.14,
    enlarge y limits=upper,    
    ybar, bar width=6pt,
    point meta=explicit,
    symbolic x coords={BA\_1024, BA\_2048, BA\_4096, UA\_1024, UA\_2048, UA\_4096, ER128\_32, ER64\_64},
    x tick label style={rotate=15},
    x tick label style={font=\scriptsize},
    y tick label style={font=\scriptsize},
    ytick distance=10,
    nodes near coords={}    
]
\addplot [fill=red, postaction={pattern=crosshatch}] table [meta=relaxed] \sqgreedytwo;
 \addplot [fill=blue, postaction={pattern=grid}] table [y=algdu, meta=algdu] \sqgreedytwo;
\addplot [fill=violet, postaction={pattern=north east lines}] table [y=greedy, meta=greedy] \sqgreedytwo;
\end{axis}
\end{tikzpicture}
\end{subfigure}
\caption{Comparisons of \emph{min-OPT percent} obtained by different algorithms. \emph{Greedy} was able to solve the 16 smallest graphs in our datasets using 1024 GB of space. For ER128\_32 and ER64\_64, \emph{UniRelaxed} and \emph{ALG-d} used less than 0.8 GB of space, whereas \emph{Greedy} required more than 476 GB and 954 GB of space, respectively.}
\label{fig:sq_greedy}
\end{figure}
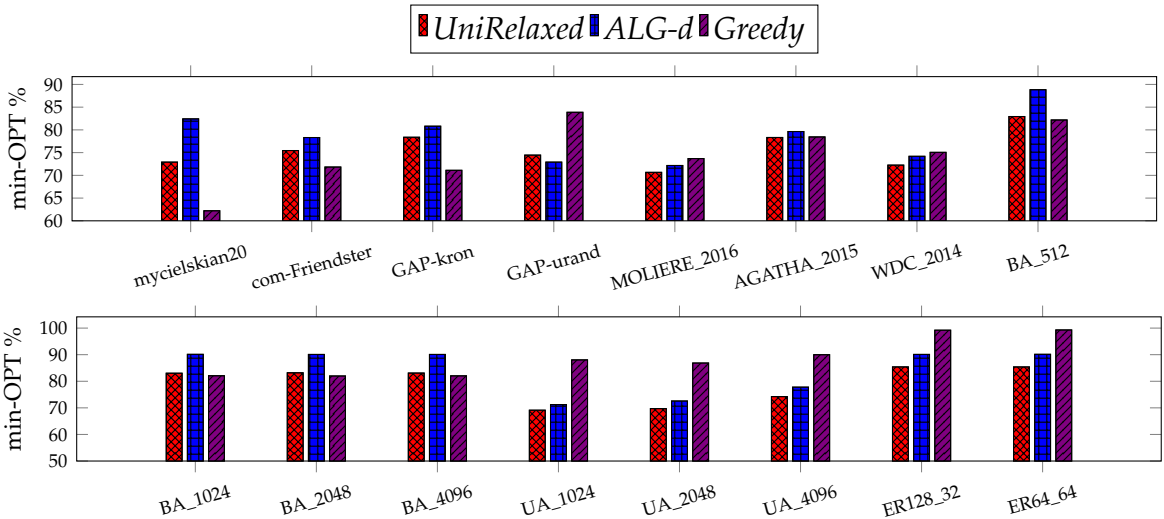

\begin{figure}
\centering
\begin{subfigure}{\linewidth}{\emph{The SSW graphs}}
\centering
\pgfplotstableread{
Graph    relaxed     tight    argx
mycielskian20	72.89	82.42	67.85
com-Friendster	72.84	78.32	65.33
GAP-kron	75.98	80.81	69.38
GAP-urand	70.78	72.91	59.67
MOLIERE\_2016	68.38	72.17	58.77
AGATHA\_2015	78.33	79.64	68.74
WDC\_2014	72.26	74.19	63.45
WDC\_2012	73.72	76.43	64.52
}\webgraphs

\makeatletter
\begin{tikzpicture}
\begin{axis}[
    ymin=50,    
    width=16cm,
    height=3.5cm,    
    ylabel={min-OPT \%},
    ylabel style={font=\small},
    enlargelimits=0.14,
    enlarge y limits=upper,        
    ybar, bar width=6pt,    
    point meta=explicit,
    symbolic x coords={mycielskian20,com-Friendster,GAP-kron, GAP-urand, MOLIERE\_2016, AGATHA\_2015, WDC\_2014, WDC\_2012},
    x tick label style={rotate=15},    
    x tick label style={font=\scriptsize},
    y tick label style={font=\scriptsize},
    ytick distance=10,
    nodes near coords={}    
]
\addplot [fill=red, postaction={pattern=crosshatch}] table [meta=relaxed] \webgraphs;
 \addplot [fill=blue, postaction={pattern=grid}] table [y=tight, meta=tight] \webgraphs;
\addplot [fill=teal, postaction={pattern=north east lines}] table [y=argx, meta=argx] \webgraphs;
\end{axis}
\end{tikzpicture}
\end{subfigure}
\begin{subfigure}{\linewidth}{\emph{The ER graphs}}
\centering
\pgfplotstableread{
Graph    relaxed     tight    argx
ER2\_256	84.46	88.90	83.97
ER1\_512	82.87	87.20	93.73
ER2\_512	83.99	88.41	83.51
ER1\_1024	81.70	85.98	92.50
ER2\_1024	83.39	87.77	82.90
ER1\_2048	80.65	84.87	91.38
ER2\_2048	83.32	87.68	82.84
ER1\_4096	80.06	84.26	90.76
}\webgraphs

\makeatletter
\begin{tikzpicture}
\begin{axis}[
    ymin=75,    
    width=16cm,
    height=3.5cm,    
    ylabel={min-OPT \%},
    ylabel style={font=\small},
    enlargelimits=0.14,
    enlarge y limits=upper,        
    ybar, bar width=6pt,
    point meta=explicit,
    symbolic x coords={ER2\_256, ER1\_512, ER2\_512, ER1\_1024, ER2\_1024, ER1\_2048, ER2\_2048, ER1\_4096},
    x tick label style={rotate=15},    
    x tick label style={font=\scriptsize},
    y tick label style={font=\scriptsize},
    ytick distance=5,
    nodes near coords={}    
]
\addplot [fill=red, postaction={pattern=crosshatch}] table [meta=relaxed] \webgraphs;
 \addplot [fill=blue, postaction={pattern=grid}] table [y=tight, meta=tight] \webgraphs;
\addplot [fill=teal, postaction={pattern=north east lines}] table [y=argx, meta=argx] \webgraphs;
\end{axis}
\end{tikzpicture}
\end{subfigure}
\begin{subfigure}{\linewidth}{\emph{The UA graphs}}
\centering
\pgfplotstableread{
Graph    relaxed     tight    argx
UA\_1024	69.02	71.23	59.86
UA\_2048	69.60	72.58	62.88
UA\_4096	74.09	77.79	70.07
UA\_8192	78.79	82.94	77.66
UA\_16384	83.08	87.32	84.35
UA\_32768	86.58	90.65	89.48
UA\_65536	89.25	92.98	93.01
UA\_131072	91.19	94.52	95.24
}\webgraphs

\makeatletter
\begin{tikzpicture}
\begin{axis}[
    ymin=50,    
    width=16cm,
    height=3.5cm,    
    ylabel={min-OPT \%},
    ylabel style={font=\small},
    enlargelimits=0.14,
    enlarge y limits=upper,      
    ybar, bar width=6pt,    
    point meta=explicit,
    symbolic x coords={UA\_1024, UA\_2048, UA\_4096, UA\_8192, UA\_16384, UA\_32768, UA\_65536, UA\_131072},
    x tick label style={rotate=15},    
    x tick label style={font=\scriptsize},
    y tick label style={font=\scriptsize},
    ytick distance=10,
    nodes near coords={}    
]
\addplot [fill=red, postaction={pattern=crosshatch}] table [meta=relaxed] \webgraphs;
 \addplot [fill=blue, postaction={pattern=grid}] table [y=tight, meta=tight] \webgraphs;
\addplot [fill=teal, postaction={pattern=north east lines}] table [y=argx, meta=argx] \webgraphs;
\end{axis}
\end{tikzpicture}
\end{subfigure}
\begin{subfigure}{\linewidth}{\emph{The BA graphs}}
\centering
\pgfplotstableread{
Graph    relaxed     tight    argx
BA\_512	83.06	86.89	88.83
BA\_1024	83.14	86.95	90.13
BA\_2048	83.24	87.50	90.09
BA\_4096	83.14	87.77	90.09
BA\_8192	83.05	87.85	90.13
BA\_16384	82.94	87.86	90.18
BA\_32768	83.00	88.04	90.10
BA\_65536	83.16	88.20	90.05
}\webgraphs

\makeatletter
\begin{tikzpicture}
\begin{axis}[
    ymin=80,    
    width=16cm,
    height=3.5cm,    
    ylabel={min-OPT \%},
    ylabel style={font=\small},
    enlargelimits=0.14,
    enlarge y limits=upper,    
    ybar, bar width=6pt,    
    point meta=explicit,
    symbolic x coords={BA\_512, BA\_1024, BA\_2048, BA\_4096, BA\_8192, BA\_16384, BA\_32768, BA\_65536},
    x tick label style={rotate=15},    
    x tick label style={font=\scriptsize},
    y tick label style={font=\scriptsize},
    ytick distance=3,
    nodes near coords={}    
]
\addplot [fill=red, postaction={pattern=crosshatch}] table [meta=relaxed] \webgraphs;
 \addplot [fill=blue, postaction={pattern=grid}] table [y=tight, meta=tight] \webgraphs;
\addplot [fill=teal, postaction={pattern=north east lines}] table [y=argx, meta=argx] \webgraphs;
\end{axis}
\end{tikzpicture}
\end{subfigure}
\begin{subfigure}{\linewidth}{The UA-dv graphs}
\centering
\pgfplotstableread{
Graph    relaxed     tight    argx
UA\_4096\_67	86.45	90.51	89.10
UA\_8192\_67	89.28	93.03	92.90
UA\_2048\_134	86.35	90.37	88.86
UA\_4096\_134	89.20	92.90	92.68
UA\_1024\_268	86.23	90.17	88.54
UA\_2048\_268	89.08	92.70	92.40
UA\_512\_537	86.03	89.86	88.11
UA\_1024\_537	88.90	92.39	92.02
}\webgraphs

\makeatletter
\begin{tikzpicture}
\begin{axis}[
    ymin=85,    
    width=16cm,
    height=3.5cm,    
    ylabel={min-OPT \%},
    ylabel style={font=\small},
    enlargelimits=0.14,
    enlarge y limits=upper,    
    ybar, bar width=6pt,
    point meta=explicit,
    symbolic x coords={UA\_4096\_67, UA\_8192\_67, UA\_2048\_134, UA\_4096\_134, UA\_1024\_268, UA\_2048\_268, UA\_512\_537, UA\_1024\_537},
    x tick label style={rotate=15},    
    x tick label style={font=\scriptsize},
    y tick label style={font=\scriptsize},
    ytick distance=2,
    nodes near coords={}    
]
\addplot [fill=red, postaction={pattern=crosshatch}] table [meta=relaxed] \webgraphs;
 \addplot [fill=blue, postaction={pattern=grid}] table [y=tight, meta=tight] \webgraphs;
\addplot [fill=teal, postaction={pattern=north east lines}] table [y=argx, meta=argx] \webgraphs;
\end{axis}
\end{tikzpicture}
\end{subfigure}
\begin{subfigure}{\linewidth}{The ER-dv graphs}
\centering
\pgfplotstableread{
Graph    relaxed     tight    argx
ER128\_32	85.40	90.10	84.61
ER64\_64	85.39	90.14	84.73
ER32\_128	85.28	89.92	84.70
ER16\_256	85.17	89.71	84.65
ER8\_512	85.03	89.51	84.52
ER4\_1024	84.61	89.07	84.14
ER2\_2048	83.32	87.68	82.84
ER1\_4096	80.06	84.26	90.76
}\webgraphs

\makeatletter
\begin{tikzpicture}
\begin{axis}[
    ymin=75,    
    width=16cm,
    height=3.5cm,    
    ylabel={min-OPT \%},
    ylabel style={font=\small},
    enlargelimits=0.14,
    enlarge y limits=upper, 
    legend style={
        font=\large,
        at={(0.5,-0.6)},
        anchor=north,
        legend columns=-1
    },
    ybar, bar width=6pt,    
    point meta=explicit,
    symbolic x coords={ER128\_32, ER64\_64, ER32\_128, ER16\_256, ER8\_512, ER4\_1024, ER2\_2048, ER1\_4096},
    x tick label style={rotate=15},    
    x tick label style={font=\scriptsize},
    y tick label style={font=\scriptsize},
    ytick distance=5,
    nodes near coords={}    
]
\addplot [fill=red, postaction={pattern=crosshatch}] table [meta=relaxed] \webgraphs;
 \addplot [fill=blue, postaction={pattern=grid}] table [y=tight, meta=tight] \webgraphs;
\addplot [fill=teal, postaction={pattern=north east lines}] table [y=argx, meta=argx] \webgraphs;
\legend{\emph{UniRelaxed}, \emph{UniTight}, \emph{ArgX}}
\end{axis}
\end{tikzpicture}
\end{subfigure}
\caption{Comparisons of \emph{min-OPT percent} obtained by different dual update rules. \emph{ArgX} stands for the graph-wise best results obtained from \emph{ArgMax}, \emph{ArgMin}, and \emph{ArgRand}.}
\label{fig:sq_dual}
\end{figure}
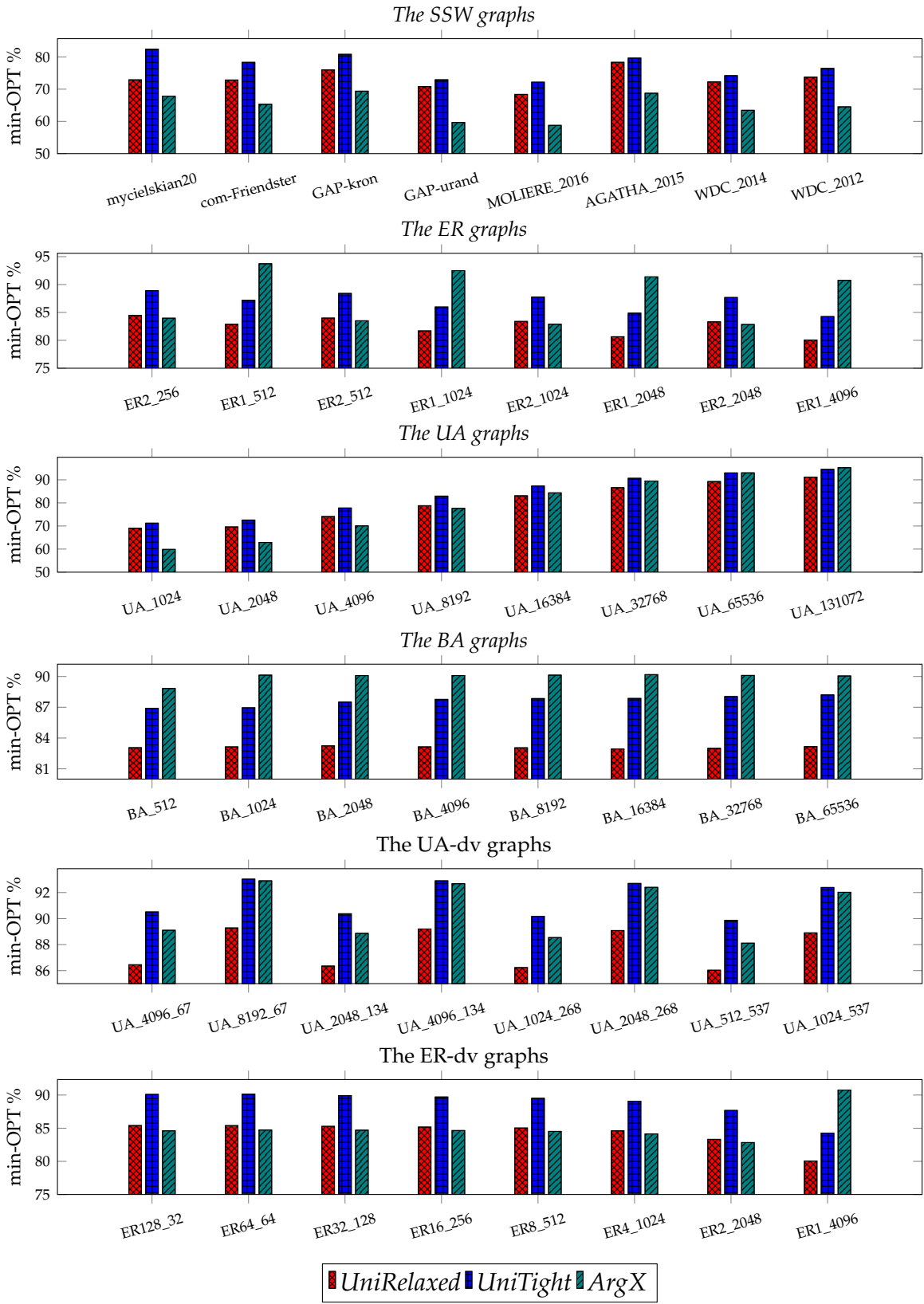

In \fref{fig:sq} (\hyref{Section}{subsec:evals_quality}), we showed the min-OPT percent obtained by different algorithms for the SSW, BA, and ER graphs. 
\fref{fig:sq2} extends this evaluation to the UA-dv, UA, and ER-dv graphs. For all runs of our algorithms, we use $\epsilon=10^{-6}$ (arbitrarily chosen). 

Although all algorithms perform better than their approximation guarantees, the streaming algorithm ALG-s is significantly outperformed by our algorithm on most graphs. 
ALG-s, designed by Feigenbaum et al. \cite{feigenbaum2005graph} (\hyref{Section}{subsec:mwm_related} contains a description), has the same runtime as the local-ratio algorithm by Paz and Schwartzman \cite{paz20182+} but provides only a $6$-approximation guarantee.

Among streaming algorithms for approximating MWM on general graphs (see \hyref{Appendix}{subsec:mwm_related}), the algorithms by \cite{feigenbaum2005graph} is the simplest. 
Algorithms achieving better than a $\br{2+\epsilon}$-approximation require multiple passes and are substantially more complex to implement in practice.
For example, the algorithm of \cite{assadi2024simple} computes a $\br{1+\epsilon}$-approximate solution with high probability but requires $\bigO{\frac{\log n}{\epsilon}}$ passes and computing MWMs in subgraphs, which is computationally expensive in practice.

In \fref{fig:sq_greedy}, we show comparisons with the offline \emph{Greedy} algorithm, which achieves a 2-approximation by repeatedly selecting the heaviest available edge to include in the matching.
This requires storing the entire graph in memory. 
Using 1024 GB of RAM, we were able to run Greedy on the 16 smallest graphs in our datasets.

On some graphs (such as mycielskian20) Greedy substantially underperforms compared to our algorithm. 
While Greedy achieves better solution weights on several graphs, it incurs substantially higher space cost.
For example, on ER128\_32 and ER64\_64, our algorithm used less than 0.8 GB of space, whereas Greedy required more than 476 GB and 954 GB of space, respectively.

In \fref{fig:sq_dual}, we compare min-OPT percent obtained by different dual update rules discussed in \hyref{Appendix}{subsec:app_duals}. 
\emph{ArgX} denotes the best result, graph-wise, among rules ArgMax, ArgMin, and ArgRand.
Unlike UniRelaxed and UniTight, these rules do not offer bounded approximation guarantees but exhibit strong empirical performance across many graphs. 
ArgRand performs comparably to UniRelaxed on most graphs.
ArgMax performs well primarily on the BA graphs, while ArgMin performs well on the UA, UA-dv, and four ER graphs.

These findings suggest that our algorithmic framework still has room for improving solution quality.
In particular, under reasonable assumptions, the dual formulation may enable significant improvement in the approximation ratio, even in the single-pass setting.

\subsection{Detailed Runtime}
\label{subsec:app_runtime}

\begin{table}
\centering
\caption{Breakdown of \aref{Algorithm PS-MWM-LD}{fig:algo_psmwm_ld}'s runtime (in seconds) into three phases.}
{\small
\begin{tabular}{l|rr|rr|rr}
\toprule
\multirow{3}{*}{Graph} & \multicolumn{2}{c|}{Steps 1-4} & 
\multicolumn{2}{c|}{Step 7} & \multicolumn{2}{c}{Steps 5-6}\\
        & \multicolumn{2}{c|}{(Preprocessing)} & 
        \multicolumn{2}{c|}{(Post-Processing)} & \multicolumn{2}{c}{(Streaming)}\\
                       & $k=1$           & $k=128$         & $k=1$            & $k=128$          & $k=1$         & $k=128$       \\
\midrule
mycielskian20          & 0.005           & 0.002           & 0.018            & 0.006            & 6.16          & 0.09          \\
com-Friendster         & 0.306           & 0.043           & 1.532            & 0.162            & 34.78         & 1.31          \\
GAP-kron               & 0.298           & 0.042           & 0.823            & 0.076            & 36.11         & 0.75          \\
GAP-urand              & 0.627           & 0.059           & 5.126            & 0.638            & 60.26         & 3.32          \\
MOLIERE\_2016          & 0.143           & 0.013           & 1.183            & 0.135            & 51.54         & 1.64          \\
AGATHA\_2015           & 0.857           & 0.120           & 4.369            & 0.687            & 114.40        & 4.19          \\
WDC\_2014              & 7.446           & 0.666           & 14.040           & 1.873            & 730.50        & 24.76         \\
WDC\_2012              & 15.760          & 1.021           & 76.870           & 7.981            & 1771.00       & 75.77         \\
                       &                 &                 &                  &                  &               &               \\
BA\_512                & 0.042           & 0.011           & 0.263            & 0.038            & 49.11         & 1.29          \\
BA\_1024               & 0.042           & 0.012           & 0.274            & 0.061            & 100.10        & 2.33          \\
BA\_2048               & 0.044           & 0.011           & 0.302            & 0.066            & 192.40        & 4.23          \\
BA\_4096               & 0.042           & 0.011           & 0.325            & 0.057            & 368.90        & 7.41          \\
BA\_8192               & 0.043           & 0.007           & 0.336            & 0.084            & 701.30        & 12.97         \\
BA\_16384              & 0.042           & 0.005           & 0.343            & 0.067            & 1250.00       & 19.01         \\
BA\_32768              & 0.042           & 0.008           & 0.350            & 0.092            & 1627.00       & 22.23         \\
BA\_65536              & 0.044           & 0.004           & 0.353            & 0.076            & 2084.00       & 29.70         \\
                       &                 &                 &                  &                  &               &               \\
ER2\_256                & 0.006           & 0.004           & 0.042            & 0.016            & 557.30        & 8.15          \\
ER1\_512                & 0.006           & 0.004           & 0.044            & 0.017            & 1002.00       & 12.73         \\
ER2\_512                & 0.008           & 0.004           & 0.063            & 0.021            & 1098.00       & 14.54         \\
ER1\_1024               & 0.008           & 0.004           & 0.066            & 0.023            & 1991.00       & 25.12         \\
ER2\_1024               & 0.010           & 0.008           & 0.086            & 0.030            & 2210.00       & 27.25         \\
ER1\_2048               & 0.010           & 0.005           & 0.087            & 0.038            & 3998.00       & 49.46         \\
ER2\_2048               & 0.014           & 0.005           & 0.125            & 0.047            & 4450.00       & 53.34         \\
ER1\_4096               & 0.014           & 0.005           & 0.127            & 0.047            & 8026.00       & 96.63\\
&                 &                 &                  &                  &               &               \\
UA\_4096\_67  & 0.311 & 0.088 & 6.487  & 0.742 & 3344.00  & 119.40 \\
UA\_8192\_67  & 0.315 & 0.083 & 7.641  & 0.906 & 6451.00  & 214.40 \\
UA\_2048\_134 & 0.626 & 0.155 & 14.320 & 1.813 & 3436.00  & 162.10 \\
UA\_4096\_134 & 0.623 & 0.156 & 16.070 & 2.030 & 6707.00  & 288.10 \\
UA\_1024\_268 & 1.249 & 0.323 & 28.710 & 4.032 & 3525.00  & 197.20 \\
UA\_2048\_268 & 1.259 & 0.309 & 32.690 & 4.369 & 6879.00  & 327.70 \\
UA\_512\_537  & 2.487 & 0.610 & 57.140 & 7.698 & 3681.00  & 224.20 \\
UA\_1024\_537 & 2.569 & 0.599 & 57.080 & 8.657 & 7098.00  & 355.60 \\
              &       &       &        &       &          &        \\
UA\_1024      & 0.042 & 0.013 & 0.308  & 0.070 & 108.30   & 2.67   \\
UA\_2048      & 0.042 & 0.013 & 0.334  & 0.077 & 208.10   & 4.67   \\
UA\_4096      & 0.043 & 0.011 & 0.353  & 0.160 & 392.40   & 8.70   \\
UA\_8192      & 0.043 & 0.019 & 0.419  & 0.124 & 773.80   & 15.93  \\
UA\_16384     & 0.043 & 0.022 & 0.523  & 0.114 & 1405.00  & 26.60  \\
UA\_32768     & 0.042 & 0.018 & 0.561  & 0.138 & 2663.00  & 43.64  \\
UA\_65536     & 0.043 & 0.011 & 0.597  & 0.132 & 5167.00  & 84.17  \\
UA\_131072    & 0.044 & 0.012 & 0.807  & 0.118 & 10180.00 & 168.10 \\
              &       &       &        &       &          &        \\
ER128\_32      & 0.014 & 0.006 & 0.124  & 0.039 & 97.05    & 2.42   \\
ER64\_64       & 0.014 & 0.006 & 0.128  & 0.046 & 180.30   & 4.28   \\
ER32\_128      & 0.014 & 0.006 & 0.126  & 0.044 & 364.10   & 8.08   \\
ER16\_256      & 0.014 & 0.006 & 0.128  & 0.043 & 697.60   & 11.43  \\
ER8\_512       & 0.014 & 0.006 & 0.127  & 0.047 & 1346.00  & 17.66  \\
ER4\_1024      & 0.014 & 0.005 & 0.126  & 0.044 & 2571.00  & 31.52  \\
ER2\_2048      & 0.014 & 0.005 & 0.125  & 0.047 & 4450.00  & 53.34  \\
ER1\_4096      & 0.014 & 0.005 & 0.127  & 0.047 & 8026.00  & 96.63 \\
\bottomrule
\end{tabular}
}
\label{tab:runtime}
\end{table}

\begin{figure}
\centering
\begin{subfigure}[t]{0.46\textwidth}{The ER-dv graphs}
\centering
\begin{tikzpicture}
\begin{axis}[        
    xlabel={$k$},
    xlabel style={font=\Large},
    ylabel={Speedup in runtime},
    ylabel style={font=\Large},
    ymajorgrids=true,
    grid style=dashed,
    legend style={        
        font=\Large,
        nodes={scale=0.5, transform shape},
        cells={anchor=west},
        at={(0.75,0.45)},
        anchor=north,
        draw=none,
        fill=none},
    xmode=log,
    ymode=log,
    log basis y={2},
    log basis x={2},
    ]

\addplot [mark=otimes,olive] table [x={nstreams}, y={G1_4096}] \sperdv;

\addplot [mark=star,teal] table [x={nstreams}, y={G2_2048}] \sperdv;

\addplot [mark=pentagon,cyan] table [x={nstreams}, y={G4_1024}] \sperdv;

\addplot [mark=triangle,purple] table [x={nstreams}, y={G8_512}] \sperdv;

\addplot [mark=o,violet] table [x={nstreams}, y={G16_256}] \sperdv;

\addplot [mark=diamond,blue] table [x={nstreams}, y={G32_128}] \sperdv;

\addplot [mark=10-pointed star,orange] table [x={nstreams}, y={G64_64}] \sperdv;

\addplot [mark=oplus,red] table [x={nstreams}, y={G128_32}] \sperdv;

\legend{ER1\_4096, ER2\_2048, ER4\_1024, ER8\_512, ER16\_256, ER32\_128, ER64\_64, ER128\_32};

\end{axis}
\end{tikzpicture}
\end{subfigure}
\hfill
\begin{subfigure}[t]{0.46\textwidth}{The UA graphs}
\centering
\begin{tikzpicture}
\begin{axis}[        
    xlabel={$k$},
    xlabel style={font=\Large},
    ylabel={},
    ymajorgrids=true,
    grid style=dashed,
    legend style={   
        font=\Large,
        nodes={scale=0.5, transform shape},
        cells={anchor=west},
        at={(0.75,0.45)},
        anchor=north,
        draw=none,
        fill=none},
    xmode=log,
    ymode=log,
    log basis y={2},
    log basis x={2},
    ]

\addplot [mark=otimes,olive] table [x={nstreams}, y={UA_131072}] \spua;

\addplot [mark=star,teal] table [x={nstreams}, y={UA_65536}] \spua;

\addplot [mark=pentagon,cyan] table [x={nstreams}, y={UA_32768}] \spua;

\addplot [mark=triangle,purple] table [x={nstreams}, y={UA_16384}] \spua;

\addplot [mark=o,violet] table [x={nstreams}, y={UA_8192}] \spua;

\addplot [mark=diamond,blue] table [x={nstreams}, y={UA_4096}] \spua;

\addplot [mark=10-pointed star,orange] table [x={nstreams}, y={UA_2048}] \spua;

\addplot [mark=oplus,red] table [x={nstreams}, y={UA_1024}] \spua;

\legend{UA\_131072, UA\_65536, UA\_32768, UA\_16384, UA\_8192, UA\_4096, UA\_2048, UA\_1024};

\end{axis}
\end{tikzpicture}
\end{subfigure}
\caption{Speedup in runtime vs.\ $k$. Note that both axes are on a logarithmic scale.}
\label{fig:speedup2}
\end{figure}

\hyref{Table}{tab:runtime} shows the breakdown of runtime into three phases, preprocessing, post-processing, and streaming, for $k=1$ and $k=128$.

In \fref{fig:speedup} (\hyref{Section}{subsec:evals_runtime}), we presented speedups computed w.r.t. runtime for the SSW, BA, and ER graphs. \fref{fig:speedup2} extends this evaluation to the UA and ER-dv graphs.
The UA-dv graphs are specifically included to assess the effects of localizing memory access for a fixed value of $k$ (\hyref{Appendix}{subsec:app_localize}). 
We also measured speedups w.r.t. effective iterations. For $k=128$, these speedup values lie in the range 112--128 across all graphs in our datasets.

A few remarks on runtime-based speedups are in order.
Using a simple memory read-write test (concurrent but local to each processor), we verified that the memory system of the nodes we used can support speedups of about 70--80 when using 128 processors, limited by eight memory controllers per node \cite{Milan}. 
Therefore, speedups beyond 80 are achievable only when a significant portion of the algorithm's working set fits in cache.

For $k>16$, the BA, UA, and ER-dv graphs show how decreasing graph density exacerbates the memory system's inability to efficiently serve large volumes of concurrent, random memory accesses. 
The SSW graphs further highlight two key memory bottlenecks: limited support for concurrent/random accesses (visible for $k>16$), and non-uniform memory access costs (visible even for $k=2$).
The small runtime speedups observed for $k=2$ on the SSW graphs are due to memory accesses across sockets being more than three times slower than accesses within a socket (we choose $k$ cores evenly from distinct physical groups, such as sockets \cite{Milan}).

We replicated the extreme bottlenecks observed for the SSW graphs using the UA-dv graphs (\hyref{Appendix}{subsec:app_localize}). 
These results indicate that the sparser graphs encounter more memory-related bottlenecks due to their greater reliance on random accesses. 
We further confirmed this by examining speedups w.r.t. effective iterations, which show that processors experience negligible contention when accessing shared variables. 
Thus, the algorithm could achieve even better runtime speedups on architectures with more memory controllers and/or stronger support for remote memory access.

\subsection{Per-Edge Processing Time}
\label{subsec:pept}

From the analyses in \lemref{lemma_algo1_space_ns} and \lemref{lemma_algo1_space_ld}, if $\lmin = \bigOmega{n}$, then our algorithm has $\bigO{\log n}$ amortized per-edge processing time. 
All graphs in our datasets satisfy this condition due to the edge stream generation procedure described in \hyref{Appendix}{subsec:app_setup}. 
Using the notion of effective iterations, we now show that, in practice, the algorithm achieves $\bigO{1}$ amortized per-edge processing time.

For each processor $\ell \in k$, we compute the ratio of the number of supersteps taken by processor $\ell$ to $\lmin$ (noting that $\lmin \leq |E^{\ell}|$). 
The maximum of these ratios over all $k$ processors serves as an upper bound on the amortized per-edge processing time. 
This is equivalent to the ratio of the effective iterations to $\lmin$.

For the SSW graphs, the maximum value of this ratio across all graphs and all values of $k$ is $1.15$.
For the BA and UA graphs, it remains below $1.05$; and for the ER and ER-dv graphs, it is below $1.003$. These results confirm that the amortized per-edge processing time of the algorithm is bounded by a small constant in practice.

\subsection{Effect of Localizing Memory Access}
\label{subsec:app_localize}

\begin{figure}
\centering
\begin{minipage}{0.23\textwidth}
\begin{subfigure}[t]{\linewidth}
\centering
\begin{tikzpicture}[scale=0.55]
\begin{axis}[        
    xlabel={$k$},
    xlabel style={font=\Large},
    ylabel={Speedup in runtime},
    ylabel style={font=\Large},
    ymajorgrids=true,
    grid style=dashed,
    legend style={  
        font=\huge,
        nodes={scale=0.5, transform shape},
        cells={anchor=west},
        at={(0.65,0.35)},
        anchor=north,
        draw=none,
        fill=none},
    xmode=log,
    ymode=log,
    log basis y={2},
    log basis x={2},
    ]

\addplot [mark=o,red] table [x={nstreams}, y={WDC_2014}] \spNLD;

\addplot [mark=triangle,blue] table [x={nstreams}, y={WDC_2014(NLD)}] \spNLD;

\legend{WDC\_2014 ($r > 1$), WDC\_2014 ($r=1$)};

\end{axis}
\end{tikzpicture}
\end{subfigure}
\begin{subfigure}[t]{\linewidth}
\centering
\begin{tikzpicture}[scale=0.55]
\begin{axis}[        
    xlabel={$k$},
    xlabel style={font=\Large},
    ylabel={Speedup in runtime},
    ylabel style={font=\Large},
    ymajorgrids=true,
    grid style=dashed,
    legend style={
        font=\huge,
        nodes={scale=0.5, transform shape},
        cells={anchor=west},
        at={(0.65,0.35)},
        anchor=north,
        draw=none,
        fill=none},
    xmode=log,
    ymode=log,
    log basis y={2},
    log basis x={2},
    ]

\addplot [mark=o,red] table [x={nstreams}, y={MOLIERE_2016}] \spNLD;

\addplot [mark=triangle,blue] table [x={nstreams}, y={MOLIERE_2016(NLD)}] \spNLD;

\legend{MOLIERE\_2016 ($r>1$), MOLIERE\_2016 ($r=1$)};

\end{axis}
\end{tikzpicture}
\end{subfigure}    
\end{minipage}
\hspace{0.5cm}
\begin{minipage}{0.23\textwidth}
\begin{subfigure}[t]{\linewidth}
\centering
\begin{tikzpicture}[scale=0.55]
\begin{axis}[        
    xlabel={$k$},
    xlabel style={font=\Large},
    ylabel={},
    ymajorgrids=true,
    grid style=dashed,
    legend style={    
        font=\huge,
        nodes={scale=0.5, transform shape},
        cells={anchor=west},
        at={(0.65,0.35)},
        anchor=north,
        draw=none,
        fill=none},
    xmode=log,
    ymode=log,
    log basis y={2},
    log basis x={2},
    ]

\addplot [mark=o,red] table [x={nstreams}, y={GAP-kron}] \spNLD;

\addplot [mark=triangle,blue] table [x={nstreams}, y={GAP-kron(NLD)}] \spNLD;

\legend{GAP-kron ($r>1$), GAP-kron ($r=1$)};

\end{axis}
\end{tikzpicture}
\end{subfigure}
\begin{subfigure}[t]{\linewidth}
\centering
\begin{tikzpicture}[scale=0.55]
\begin{axis}[        
    xlabel={$k$},
    xlabel style={font=\Large},
    ylabel={},
    ymajorgrids=true,
    grid style=dashed,
    legend style={  
        font=\huge,
        nodes={scale=0.5, transform shape},
        cells={anchor=west},
        at={(0.65,0.35)},
        anchor=north,
        draw=none,
        fill=none},
    xmode=log,
    ymode=log,
    log basis y={2},
    log basis x={2},
    ]

\addplot [mark=o,red] table [x={nstreams}, y={com-Friendster}] \spNLD;

\addplot [mark=triangle,blue] table [x={nstreams}, y={com-Friendster(NLD)}] \spNLD;

\legend{com-Friendster ($r>1$), com-Friendster ($r=1$)};

\end{axis}
\end{tikzpicture}
\end{subfigure}
\end{minipage}
\hspace{0.1cm}
\begin{minipage}{0.46\textwidth}
\begin{subfigure}[t]{\linewidth}
\centering
\begin{tikzpicture}
\begin{axis}[        
    xlabel={$k$},
    xlabel style={font=\Large},
    ylabel={Speedup in runtime},
    ymajorgrids=true,
    grid style=dashed,
    legend style={
        font=\large,
        nodes={scale=0.5, transform shape},
        cells={anchor=west},
        at={(0.75,0.45)},
        anchor=north,
        draw=none,
        fill=none},
    xmode=log,
    ymode=log,
    log basis y={2},
    log basis x={2},
    ]

\addplot [mark=otimes,red] table [x={nstreams}, y={G1_4096}] \spGNLD;

\addplot [mark=star,teal] table [x={nstreams}, y={G1_2048}] \spGNLD;

\addplot [mark=pentagon,blue] table [x={nstreams}, y={G1_1024}] \spGNLD;

\addplot [mark=triangle,orange] table [x={nstreams}, y={G1_512}] \spGNLD;

\addplot [mark=o,purple] table [x={nstreams}, y={G1_4096(NLD)}] \spGNLD;

\addplot [mark=diamond,olive] table [x={nstreams}, y={G1_2048(NLD)}] \spGNLD;

\addplot [mark=10-pointed star,cyan] table [x={nstreams}, y={G1_1024(NLD)}] \spGNLD;

\addplot [mark=oplus,violet] table [x={nstreams}, y={G1_512(NLD)}] \spGNLD;

\legend{ER1\_4096 ($r>1$), ER1\_2048 ($r>1$), ER1\_1024 ($r>1$), ER1\_512 ($r>1$), ER1\_4096 ($r=1$), ER1\_2048 ($r=1$), ER1\_1024 ($r=1$), ER1\_512 ($r=1$)};

\end{axis}
\end{tikzpicture}
\end{subfigure}    
\end{minipage}
\caption{Effect of localizing memory access for four sparse and four dense graphs.}
\label{fig:sp_lma}
\end{figure}
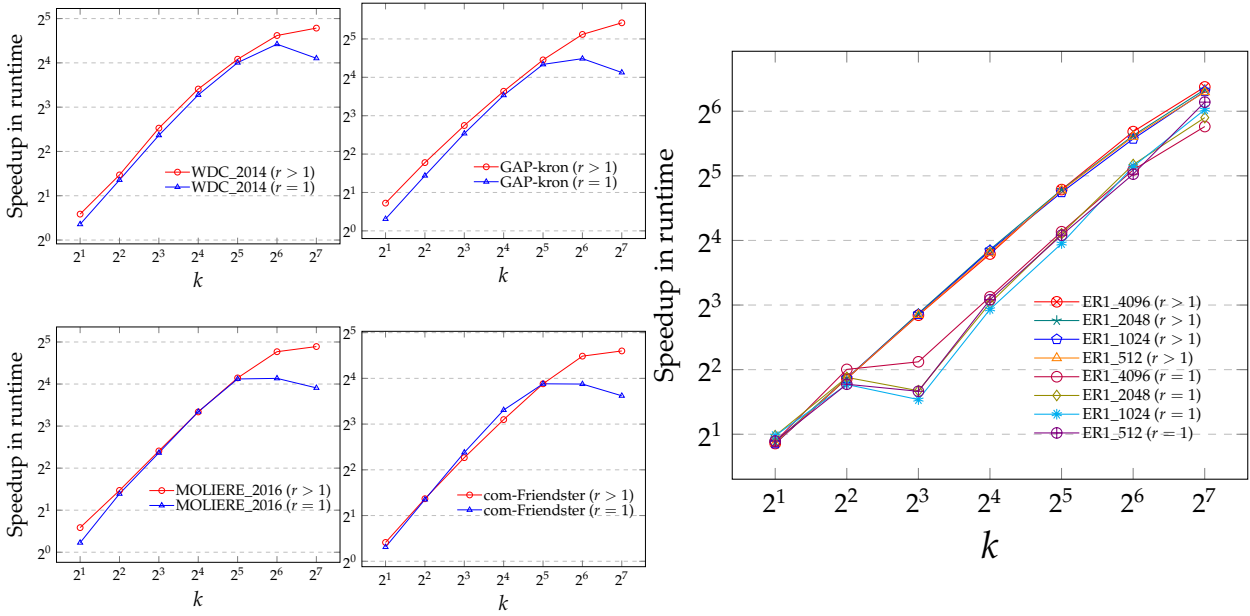

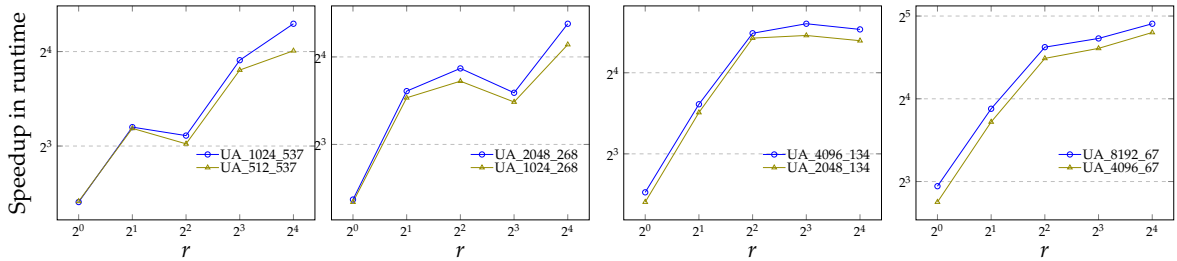
\begin{figure}
\centering
\begin{subfigure}[t]{0.23\textwidth}
\centering
\begin{tikzpicture}[scale=0.5]
\begin{axis}[        
    xlabel={$r$},
    xlabel style={font=\LARGE},
    ylabel={Speedup in runtime},
    ylabel style={font=\LARGE},
    ymajorgrids=true,
    grid style=dashed,
    legend style={  
        font=\huge,
        nodes={scale=0.5, transform shape},
        cells={anchor=west},
        at={(0.75,0.35)},
        anchor=north,
        draw=none,
        fill=none},
    xmode=log,
    ymode=log,
    log basis y={2},
    log basis x={2},
    ]
    
\addplot [mark=o,blue] table [x={ngroups}, y={UA_1024_537}] \spuaxy;

\addplot [mark=triangle,olive] table [x={ngroups}, y={UA_512_537}] \spuaxy;

\legend{UA\_1024\_537, UA\_512\_537};

\end{axis}
\end{tikzpicture}
\end{subfigure}
\hspace{0.1cm}
\begin{subfigure}[t]{0.23\textwidth}
\centering
\begin{tikzpicture}[scale=0.5]
\begin{axis}[        
    xlabel={$r$},
    xlabel style={font=\LARGE},
    ylabel={},
    ymajorgrids=true,
    grid style=dashed,
    legend style={   
        font=\huge,
        nodes={scale=0.5, transform shape},
        cells={anchor=west},
        at={(0.75,0.35)},
        anchor=north,
        draw=none,
        fill=none},
    xmode=log,
    ymode=log,
    log basis y={2},
    log basis x={2},
    ]
    
\addplot [mark=o,blue] table [x={ngroups}, y={UA_2048_268}] \spuaxy;

\addplot [mark=triangle,olive] table [x={ngroups}, y={UA_1024_268}] \spuaxy;

\legend{UA\_2048\_268, UA\_1024\_268};

\end{axis}
\end{tikzpicture}
\end{subfigure}
\begin{subfigure}[t]{0.23\textwidth}
\centering
\begin{tikzpicture}[scale=0.5]
\begin{axis}[        
    xlabel={$r$},
    xlabel style={font=\LARGE},
    ylabel={},
    ymajorgrids=true,
    grid style=dashed,
    legend style={ 
        font=\huge,
        nodes={scale=0.5, transform shape},
        cells={anchor=west},
        at={(0.75,0.35)},
        anchor=north,
        draw=none,
        fill=none},
    xmode=log,
    ymode=log,
    log basis y={2},
    log basis x={2},
    ]
    
\addplot [mark=o,blue] table [x={ngroups}, y={UA_4096_134}] \spuaxy;

\addplot [mark=triangle,olive] table [x={ngroups}, y={UA_2048_134}] \spuaxy;

\legend{UA\_4096\_134, UA\_2048\_134};

\end{axis}
\end{tikzpicture}
\end{subfigure}
\begin{subfigure}[t]{0.23\textwidth}
\centering
\begin{tikzpicture}[scale=0.5]
\begin{axis}[        
    xlabel={$r$},
    xlabel style={font=\LARGE},
    ylabel={},
    ymajorgrids=true,
    grid style=dashed,
    legend style={
        font=\huge,
        nodes={scale=0.5, transform shape},
        cells={anchor=west},
        at={(0.75,0.35)},
        anchor=north,
        draw=none,
        fill=none},
    xmode=log,
    ymode=log,
    log basis y={2},
    log basis x={2},
    ]
    
\addplot [mark=o,blue] table [x={ngroups}, y={UA_8192_67}] \spuaxy;

\addplot [mark=triangle,olive] table [x={ngroups}, y={UA_4096_67}] \spuaxy;

\legend{UA\_8192\_67, UA\_4096\_67};

\end{axis}
\end{tikzpicture}
\end{subfigure}
\caption{Speedup in runtime vs. $r$, for $k=128$. Each of the subplots shows the effect of density. From left to right, density increases.}
\label{fig:sp_ua_x_y}
\end{figure}

All graphs exhibit significant gains from memory access localization (recall \aref{Algorithm PS-MWM-LD}{fig:algo_psmwm_ld}). 
\fref{fig:sp_lma} illustrates this effect on four sparse and four dense graphs (see \hyref{Appendix}{subsec:app_space} for the corresponding values of $r$).
For $k=128$ and $r=1$, speedup decreases as the number of vertices (or the size of the working set) increases.
In contrast, with localized memory access ($r>1$), speedups increase steadily. 

The benefits of localization become more pronounced as the number of random accesses increases. 
This trend is further demonstrated in \fref{fig:sp_ua_x_y} using the UA-dv graphs. 
Each of the subplots compares two graphs of different densities. From left to right, graph density increases, and we observe a corresponding rise in runtime-based speedups.

\end{document}